\renewcommand\footnotetextcopyrightpermission[1]{} 
\renewcommand\@formatdoi[1]{\ignorespaces}
\title{Optimizing and Evaluating Transient Gradual Typing}
\date{\today}
\author{Michael M. Vitousek}
\affiliation{
  \institution{Indiana University}            
}
\email{mvitouse@indiana.edu}
\author{Jeremy G. Siek}
\affiliation{
  \institution{Indiana University}            
}
\email{jsiek@indiana.edu}
\author{Avik Chaudhuri}
\affiliation{
  \institution{Facebook Inc.}            
}
\email{avik@fb.com}
\newcommand{\benchcount}{ten\xspace}
\newcommand{\langs}{\lambda^\dyn_s}
\newcommand{\langd}{\lambda^\Downarrow_d}
\newcommand{\lange}{\lambda^\to_e}
\newcommand{\intt}{\textsf{int}}
\newcommand{\boolt}{\textsf{bool}}
\newcommand{\strt}{\textsf{str}}
\newcommand{\reft}[1]{\textsf{ref}~#1}
\newcommand{\funt}[2]{#1\to#2}
\newcommand{\dyn}{\star}
\newcommand{\refs}{\textsf{ref}}
\newcommand{\funs}{{\to}}
\newcommand{\ei}{d}
\newcommand{\ein}[1]{d_{#1}}
\newcommand{\fntx}[4]{\lambda (#1{:}#2){\to} #3.~#4}
\newcommand{\fnx}[2]{\lambda #1.~#2}
\newcommand{\refx}[1]{\texttt{ref}~#1}
\newcommand{\refxi}[2]{\texttt{ref}_{#2}~#1}
\newcommand{\letx}[3]{\texttt{let}~#1=#2~\texttt{in}~#3}
\newcommand{\derefx}[1]{{!}#1}
\newcommand{\derefxp}[2]{{!}#1^{#2}}
\newcommand{\mutx}[2]{#1{:=}#2}
\newcommand{\mutxpINV}[3]{#1{:=}^{#3}#2}
\newcommand{\checkx}[2]{#1{\Downarrow}#2}
\newcommand{\respos}          {\textsc{Res}}
\newcommand{\argpos}          {\textsc{Arg}}
\newcommand{\tyorig}{{\scaleobj{.8}{\lozenge}}}
\newcommand{\unorig}{{\scaleobj{.8}{\blacklozenge}}}
\newcommand{\cfg}[2]{\langle #1, #2\rangle}
\newcommand{\steps}{\longrightarrow}
\newcommand{\matches}{\rhd}
\newcommand{\matchrel}[2] {#1 \matches #2}
\newcommand{\ledyn}{\sqsubseteq}
\newcommand{\stuckop}      {\textsf{stuck}}
\newcommand{\stucknb}[3]{\cfg{#1}{#2}~\stuckop~#3}
\newcommand{\lowlty}[1]{\lfloor #1 \rfloor}
\newcommand{\uplty}[1]{\lceil #1 \rceil}
\newcommand{\vsol}[2]{#1~\text{solvable}~#2}
\newcommand{\flows}[2]{#1 <: #2}
\newcommand{\depcon}[3]{(#1{:}#2) = #3}
\newcommand{\tagcon}[2]{#1 : #2}
\newcommand{\defcon}[2]{#1 \triangleq #2}
\newcommand{\specmatches}[3]{#1\rhd_{#3}#2} 
\newcommand{\R}[2]{#1\approx #2}
\newcommand{\dom}[1]{\textit{dom}(#1)}
\newcommand{\hastypenb}[2]{\textit{hastype}(#1,#2)}
\newcommand{\GS}{\Gamma{;}\Sigma\vdash}
\newcommand{\yO}{;\Omega}
\newcommand{\insgraphw}[1]{\raisebox{-.9\height}{\includegraphics[scale=.5]{{#1}.pdf}}}
\newcommand{\inslegend}[1]{\includegraphics[scale=.3]{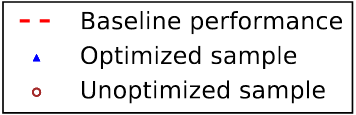}}
\newcommand{\LGE}{\Gamma{;}\emptyset\vdash}
\newcommand{\ES}{\emptyset{;}\Sigma\vdash}
\newtheorem{theorem}{Theorem}
\newtheorem{lemma}{Lemma}
\newtheorem{corollary}{Corollary}
\newtheorem{definition}{Definition}
\newenvironment{customthm}[1]
  {\innercustomthm}
  {\endinnercustomthm}
\newcommand{\ba}{\begin{array}}
\newcommand{\ea}{\end{array}}
\newcommand{\li}[1]{\lstinline[basicstyle=\small\tt]{#1}}
\newcommand{\xtimes}{{\scaleobj{1}{\times}}}
\def\Put(#1,#2)#3{\leavevmode\makebox(0,0){\put(#1,#2){#3}}}
\newcommand{\chapperfheader}{
\chapter{Optimizing and Evaluating Transient Gradual Typing}\label{chap:perf}
\section{Abstract}
\chapperfabstract}
\definecolor{dkgreen}{rgb}{0,0.6,0}
\definecolor{gray}{rgb}{0.5,0.5,0.5}
\definecolor{lightgray}{rgb}{0.95,0.95,0.95}
\definecolor{mauve}{rgb}{0.58,0,0.82}
\definecolor{tableblue}{rgb}{.8,.8,1}
\definecolor{tablered}{rgb}{1,.3,.3}
\definecolor{dkred}{rgb}{0.4,0,0}
\definecolor{dkyellow}{rgb}{.8, .8, 0}
\newcommand{\inflow}[1]{{?}#1}
\newcommand{\outflow}[1]{{!}#1}
\upshape\color{blue}\ttfamily,      
\renewcommand{\chapperfheader}{
  \begin{abstract}
    \chapperfabstract
  \end{abstract}
  \maketitle}
\renewcommand{\insgraphw}[1]{\raisebox{-.9\height}{\includegraphics[scale=.4]{{#1}.pdf}}}
\renewcommand{\inslegend}{\includegraphics[scale=.3]{legend.png}}
\renewcommand{\cite}[1]{\citep{#1}}
\begin{document}
\thispagestyle{empty}

\newcommand{\chapperfabstract}{
  Gradual typing enables programmers to combine static and dynamic
  typing in the same language. However, ensuring a sound interaction
  between the static and dynamic parts can incur significant runtime
  cost.
  In this paper, we perform a detailed performance analysis of the
  \emph{transient gradual typing} approach implemented in Reticulated
  Python, a gradually typed variant of Python. The transient approach
  \cite{vitousek:2017trans} inserts lightweight checks throughout a
  program rather than installing proxies on higher order values. We
  show that, when running Reticulated Python and the transient
  approach on CPython, performance decreases as programs evolve from
  dynamic to static types, up to a 6$\xtimes$ slowdown compared to
  equivalent Python programs.

  To reduce this overhead, we design a static analysis and
  optimization that removes redundant runtime checks.  The
  optimization employs a static type inference algorithm
  that solves traditional subtyping constraints and also a new kind of
  \emph{check constraint}.
  We evaluate the resulting performance and find that for many
  programs, the efficiency of partially typed programs is close to
  their untyped counterparts, removing most of the slowdown
  of transient checks.
  Finally, we measure the efficiency of Reticulated Python programs
  when running on PyPy, a tracing JIT.  We find that combining PyPy
  with our type inference algorithm reduces the overall overhead to
  zero.
    }

\chapperfheader
\section{Introduction}

Gradual typing enables programmers to gradually evolve their programs
from the flexibility of dynamic typing to the security of static
typing \cite{Siek:2006bh, Tobin-Hochstadt:2006fk}. Over the last
decade, gradual typing has been of great interest to both the research
community
\cite{Rastogi:2012fk,Takikawa:2012ly,Swamy:2014aa,siek:2015mono,Ren:2013vn,allende:2013gradualtalk,ahmed:2011bfa},
and to industry, which has introduced several languages with elements
of gradual typing, such as TypeScript \cite{typescript}, Flow
\cite{flow}, and Dart \cite{dart}. Many existing
gradually typed languages operate by translating a surface language
program into a dynamically typed language by erasing types. We refer
to the latter as the \emph{target language}. This type erasure
approach is safe in the sense that programs do not elicit undefined
behavior as long as the target language is itself safe (for example,
because the target language performs runtime checking in every
primitive operation). However, we would like gradually typed languages
to also be \emph{sound with respect to type annotations}, that is, a
variable annotated with a static type should only be inhabited by
values of the appropriate type \cite{vitousek:2017trans}.  To achieve
this kind of soundness, runtime checks are required on the boundaries
between statically and dynamically typed code.

\subsection{Strategies for Runtime Checks}

Several strategies have been used to implement runtime type checking
for gradually typed languages, and these strategies are appropriate
for different target languages and design goals. The traditional
approach in the research literature is to insert casts during
translation at the site of every implicit conversion between typed
and dynamic code
\cite{Siek:2006bh,Tobin-Hochstadt:2006fk,allende:2013gradualtalk,Swamy:2014aa}. At
runtime, these casts ensure that values correspond to their expected
static type. A cast on a value with first-order type is a
constant-time operation that either succeeds or fails, but for
higher-order types such as functions and references a cast installs a
\emph{proxy} on the casted value.  The proxy ensures that, in the
future, the value behaves according to the target type of the
cast. This approach is called the \emph{guarded}
strategy~\cite{vitousek:2014retic}.

In recent papers, \citet{vitousek:2014retic, vitousek:2017trans}
identified a number of challenges that make the guarded strategy
inappropriate for certain domains. In particular, unless the target
language has powerful support for proxies (such as Racket's chaperones
\cite{Strickland:2012fk}), interaction between proxied values and
foreign functions or target-language code may fail in unexpected ways
\cite{allende:2013strats, cutsem:2013proxies}. Such is the case in
Python and these problems were borne out in Reticulated Python, an
experimental implementation of gradual typing for Python 3
\cite{vitousek:2014retic}.

As an alternative, \citet{vitousek:2014retic} introduced the
\emph{transient} strategy.
In this approach, the translation inserts constant-time checks
throughout the gradually typed program. These checks do not create
proxies but only inspect the type tag \cite{appel:2007cc} of the
value, failing if it does not correspond to the type constructor the
value is statically expected to have (such as $\intt$ or $\funs$, but
not $\intt\to\intt$).

\begin{figure*}
  \centering
  \begin{minipage}{1.0\linewidth}
    \centering
    \begin{minipage}{.48\linewidth}
\begin{lstlisting}[language=Python,basicstyle=\small\tt,morekeywords={Callable,Any,int,bool}]
# Gradual surface program

def idDyn(a:dyn)->dyn: return a

def makeEq(n:int)->int->bool:
  def internal(m:int)->bool:
    return n == m
  return internal



eqFive = makeEq(5) /*\label{cp:ex:surfi:e5}*/
eqFive(20)
eqFive(idDyn('hello world')) /*\label{cp:ex:surfi:id-dyn-string}*/
\end{lstlisting}
\centering (a)
    \end{minipage}
    \begin{minipage}{.5\linewidth}
\begin{lstlisting}[language=Python,basicstyle=\small\tt,morekeywords={Callable,Any,int}]
# Program with transient checks

def idDyn(a): return a

def makeEq(n):
  n@int/*\label{cp:ex:checks:out}*/
  def internal(m):
    m@int/*\label{cp:ex:checks:in}*/
    return n == m
  return internal

eqFive = makeEq(5)@->/*\label{cp:ex:checks:outcall}*/
eqFive(20)@bool/*\label{cp:ex:checks:incall1}*/
eqFive(idDyn('hello world'))@bool/*\label{cp:ex:checks:incall2}*/
\end{lstlisting}
\centering (b)
    \end{minipage}\\
\end{minipage}
\caption{Translation to target language using the transient gradual typing approach.}
\label{cp:fig:trans}
\end{figure*}

To make up for the ``transient'' nature of these checks, they are
inserted pervasively into the program rather than just at the sites of
implicit conversions. The translation inserts checks into the program
at every function call site, at the entry of every function body (to
check that each argument corresponds to its parameter's type), and at
the site of every dereference.  

For example, consider the program in Figure \ref{cp:fig:trans}a,
written in a gradually typed language.  Here, the \li{makeEq} function
is a curried equality function on integers with type
$\intt\to\intt\to\boolt$. It is called to produce the \li{eqFive}
function on line \ref{cp:ex:surfi:e5}, which is then called at line
\ref{cp:ex:surfi:id-dyn-string} on the result of calling the
\li{idDyn} function on a string. Because the result of \li{idDyn} has static
type \li{dyn} (representing the dynamic type), this program should
pass static typechecking, but a runtime check is needed to detect that
at runtime the value being passed into \li{eqFive} is actually a
string and raise an error.  

No matter what strategy is used, in a sound gradually typed language
the result should be a runtime error. The transient strategy achieves
this goal by inserting checks (such as \li{n@int}, which checks that
\li{n} is an integer) as shown in Figure \ref{cp:fig:trans}b.  Type
annotations have been erased, and the bodies of \li{makeEq} and its
internal function now contain checks at lines \ref{cp:ex:checks:out}
and \ref{cp:ex:checks:in} to ensure that, whatever arguments they are
passed are definitely $\intt$s. Similarly, the call to
\li{makeEq} at line \ref{cp:ex:checks:outcall} also contains a check,
to ensure that the result of the function call is a function. Note
that this check does not ensure that the call returns a value of type
$\intt\to\boolt$---such a check cannot be performed by immediate
inspection of the runtime value, but it can verify that the result is
a function. It is then up to the function itself to check that it is
only passed $\intt$s (as it does with the argument checks discussed
above), and additional checks are inserted when \li{eqFive} is called
on lines \ref{cp:ex:checks:incall1} and \ref{cp:ex:checks:incall2} to
ensure that the result of that call is a $\boolt$.

When this program executes, an error will be raised by the check at
line \ref{cp:ex:checks:in}, because the call to \li{eqFive} at line
\ref{cp:ex:checks:incall2} passed in a string. This result is expected and
correct; if the error had not arisen, there would be an uncaught type
error in the body of \li{makeEq}'s inner function, as a
string would inhabit the $\intt$-typed variable \li{m}. This error could then
(depending on the semantics of equality testing) lead to a confusing,
difficult to debug error. As it is, the programmer is simply informed
that a type mismatch occurred and where.

\citet{vitousek:2017trans} showed that this approach supports the
\emph{open-world soundness} property, which states that programs
written in a gradually typed language, translated into a dynamic target
language, and then embedded in arbitrary code native to that dynamic
language, will only ``go wrong'' due to errors in the native code. The
translated, gradually typed program will not be the source of any
errors (other than errors caught by transient checks) even in the
presence of unmoderated interaction with the ``open world.''

\subsection{Performance of Transient Gradual Typing}

Performance is also of critical concern for gradually typed
languages. The runtime checks required for sound gradual typing
inevitably impose some degree of runtime overhead, but ideally this
overhead would be minimized or made up for by type-based compiler
optimizations. Since gradual typing is designed to allow programmers
to gradually vary their programs between static and dynamic
\cite{siek:2015gg}, it is also important that adding or removing
individual annotations does not dramatically degrade the program's
performance.  \citet{takikawa:2016gtdead} examine the performance of
Typed Racket with this criterion in mind by studying programs through
the lens of a \emph{typing lattice} made up of differently-typed
\emph{configurations} of the same program. The top of the lattice is a
fully typed configuration of the program and the bottom is
unannotated, and incrementally adding types moves up the lattice.

Takikawa et al.\ show that in Typed Racket, certain configurations
result in catastrophic slowdown compared to either the top or bottom
configurations. This indicates that the guarded semantics incurs a
substantial cost when interaction between static and dynamic code is
frequent. Many of their benchmarks show mean overheads of over
30$\xtimes$ and worst cases of over 100$\xtimes$, which ``projects an
extremely negative image of \emph{sound} gradual typing''
\cite{takikawa:2016gtdead}.

In this work, we aim to establish whether the transient strategy faces
the same problem. \citet{vitousek:2017trans} performed an initial
performance evaluation of Reticulated Python benchmarks and found that
overheads (compared to an untyped, standard Python version of the same
program) ranged from negligible to over a 5$\xtimes$
slowdown. However, this analysis was limited to examining a single
configuration, the configuration closest to being fully typed. As
shown by \citet{takikawa:2016gtdead}, this is insufficient to make a
strong claim about the overall performance of Reticulated Python.

To obtain a clearer picture of Reticulated's performance, we analyze
the performance of \benchcount benchmarks across their typing
lattices. Since Reticulated Python uses \emph{fine-grained} gradual
typing (where the choice to use static types exists on the level of
individual identifiers) rather than \emph{coarse-grained} (where the
choice is per-module) as is Typed Racket \cite{takikawa:2016gtdead},
the size of the typing lattice is too large to generate and test every
possible configuration. Instead, we generate samples from the lattice
by randomly removing type annotations from a fully-typed version of
the benchmark, replacing them with the dynamic type and taking care to
ensure that each level in the lattice is equally sampled.  Each sample
is then translated to standard Python 3 using Reticulated Python and
executed with CPython, the reference Python runtime.

With this approach, we found that the cost of transient gradual typing
increases as the number of type annotations grows. As a program
evolves from dynamic to static, its performance linearly degrades,
with the worst performance in the most static configurations. This is
because each static type annotation induces checks to ensure that
values correspond to that type. This is, of course, counter to one
hypothetical benefit of static typing---ideally, static types should
aid performance, or at least not degrade it. On the other hand, the
linear degradation of performance to a worst case 6$\xtimes$ overhead
means that the catastrophic configurations encountered in Typed Racket
never occur and the cost of adding an individual type annotation to a
program is predictable.

\subsection{Reducing the Burden of Pervasive Checks}






The transient approach inserts checks throughout the program, but not
all checks are necessary for the program to be sound because some
checks may be redundant and always succeed. To remove unnecessary checks, we perform
type inference on the program after checks have been inserted. Our
inference algorithm is based on those of \citet{Aiken:1995ly} and
\citet{Rastogi:2012fk} and uses subtyping constraints as well as
new \emph{check constraints}, generated by transient checks. We
prove that our algorithm can soundly remove unnecessary checks in a
transient calculus similar to that of \citet{vitousek:2017trans}

We modified Reticulated Python to support this optimization and
measured its performance, again sampling from the typing lattices at
all levels. With redundant checks removed, the linear increase in
execution times disappears, resulting in the fully-typed
configurations displaying negligible overhead and a 6\% average
overhead over all sampled configurations.

\subsection{Transient Gradual Typing on a Tracing JIT}

While this analysis removes many checks statically, the
nature of transient checks suggests that they could also be
dynamically optimized away by a JIT. Fortunately, there is a tracing
JIT for Python 3, PyPy \cite{Bolz:2009kx}. Reticulated Python compiles
to standard Python 3, so it is suitable to use with PyPy.
We found that gradually typed programs running on PyPy displayed much
less overhead than the same configurations running on CPython---the
average overhead over all configurations was 3\% with PyPy compared
to 2.21$\xtimes$ with CPython, suggesting that PyPy is able to
optimize away most of the overhead of transient checks. Some benchmarks still incurred a linear
increase in time as types were added, but to a lesser degree than with
CPython (with a worst case overhead of 2.61$\xtimes$). By combining
PyPy with our type inference optimization, the average overhead was
reduced to zero.

\subsection{Contributions}

In this work, we measure the performance of transient gradual typing
in Reticulated Python and design techniques to improve it. Our
contributions are:

\begin{itemize}
\item We analyze the performance of Reticulated Python programs across their
  typing lattices, finding an average overhead of 2.21$\xtimes$ and much
  better worst-case performance than Typed Racket (Section
  \ref{cp:sec:nooptperf}).
\item We develop a type inference optimization for reducing the number
  of checks needed by the transient approach, and prove it correct
  (Section \ref{cp:sec:infer}).
\item We implement this optimization in Reticulated Python and show
  that it reduces the average overhead to just 6\% across all typing
  lattices (Section \ref{cp:sec:optperf}).
\item We analyze both the unoptimized and optimized versions of
  Reticulated when running under a tracing JIT, and find that it
  performs very well, especially in combination with our optimization
  (Section \ref{cp:sec:pypyperf}).
\end{itemize}

Section \ref{cp:sec:relwork} discusses related work, and Section
\ref{cp:sec:conclusion} concludes.

\section{Performance of Transient Gradual Typing} \label{cp:sec:nooptperf}

Before investigating approaches to improve the performance of
transient gradual typing, we first establish the performance
characteristics of Reticulated Python across the typing
lattice~\cite{takikawa:2016gtdead}. \citet{vitousek:2017trans}
developed a \emph{blame-tracking} technique for transient gradual
typing, allowing programmers to trace runtime type errors
back to the crossing-points between static and dynamic that led to the
error, but for this analysis we disable blame tracking. Conducting the
evaluation with blame tracking is important future work.

%

\subsection{Experimental Setup}

We selected \benchcount Python 3 programs and translated them to
Reticulated Python by inserting type annotations. These benchmarks are
mostly drawn from the official Python benchmark
suite,\footnote{\tt\url{https://github.com/python/performance}} with several
drawn from the analysis of
\citet{takikawa:2016gtdead} and translated from Racket to Python. In most cases, the
resulting Reticulated programs are fully annotated with static
types. However, even with a fully annotated program, the Reticulated
type checker can assign expressions the type \li{Any} (the Reticulated Python
name for the dynamic type), such as an if-then-else expression where
the branches have different types; we did not attempt to guarantee
that such cases do not arise.

To examine the typing lattice for a benchmark, we first count the
number of type constructors that appear in the program's
annotations. We call this the \emph{type weight} of a program. For
example, the presence of the type \li{List[int]} in a program's
annotations adds 2 to its weight, and \li{Callable[[int], bool]}
(which is the Reticulated representation of the type $\intt\to\boolt$)
adds 3. We then divide the type weight into a maximum of 100
intervals: a program with a type weight of 300 would have intervals
$[0, 3), [3, 6), \ldots, [297, 300)$.  Programs with a total type
weight of less than 100 naturally have fewer than 100 intervals. For
each interval, we randomly erase type annotations and replace them
with \li{Any}, until the program's type weight falls within the
interval.  This process can ``dynamize'' types underneath type
constructors; both \li{Any} and \li{List[Any]} are possible types that
could be generated from an original type annotation
\li{List[int]}. Each partially-dynamized program is a configuration
from the typing lattice at the level corresponding to its type
weight. We generate ten configurations per interval, plus a single
fully-typed configuration consisting of the original program, for a
maximum of 1001 configurations.

Each configuration was executed on an Intel Core i3-4130 CPU with 8GB
of RAM running Ubuntu Server 14.04. Configurations were executed
repeatedly on CPython 3.4.3 and average runtimes were recorded.

\subsection{Results}

%

\begin{figure*}
  \centering
   \begin{tabular}[t]{ll}
     Benchmark: \texttt{pystone}  & Benchmark: \texttt{chaos}  \\
     \small (206 SLoC, 532 configurations) &\small  (184 SLoC, 982 configurations) \\
     \insgraphw{pystone.py_python3} & \insgraphw{bm_chaos.py_python3} \\
     Benchmark: \texttt{snake}   & Benchmark: \texttt{go} \\
     \small  (112 SLoC, 662 configurations) & \small (394 SLoC, 1001 configurations) \\
     \insgraphw{snake.py_python3} & \insgraphw{bm_go.py_python3} \\
     Benchmark: \texttt{meteor\_contest}   & Benchmark: \texttt{suffixtree}  \\
     \small  (106 SLoC, 972 configurations) &\small  (338 SLoC, 1001 configurations) \\
     \insgraphw{bm_meteor_contest.noaliases.py_python3} & \insgraphw{suffixtree.py_python3} \\
     Benchmark: \texttt{float}   & Benchmark: \texttt{nbody}  \\
     \small  (48 SLoC, 162 configurations) & \small  (74 SLoC, 892 configurations) \\
     \insgraphw{bm_float.py_python3} & \insgraphw{bm_nbody.noaliases.py_python3}  \\
     Benchmark: \texttt{sieve}   & Benchmark: \texttt{spectral\_norm}\\
     \small (50 SLoC, 282 configurations) & \small (44 SLoC, 312 configurations)\\
     \insgraphw{sieve.py_python3} & \insgraphw{bm_spectral_norm.py_python3}  
   \end{tabular}
  \Put(-245,-1100){\inslegend}
  \caption{Typing lattices for Reticulated Python benchmarks under CPython.}
  \label{cp:fig:cpythonperf}
\end{figure*}

Figure \ref{cp:fig:cpythonperf} shows the execution times for
configurations across the typing lattice for each benchmark. Each
graph corresponds to one benchmark, and each red circle in the graph
represents the average execution time of one configuration. The dashed
line marks the execution time of the untyped version of the benchmark
in standard Python 3. (The blue triangles show the performance of
optimized configurations, discussed below in Section
\ref{cp:sec:optperf}). Moving from left to right moves up the lattice
from untyped to typed, execution time is shown on the left $y$ axis,
and relative overhead compared to the untyped program is shown on the
right $y$ axis---higher $y$ coordinates indicates slower performance.

Over the entire typing lattices of all benchmarks, Reticulated Python
incurs an average overhead of 2.21$\xtimes$ compared to the untyped
Python versions of the benchmarks. Typically the slowest
configurations are the ones with the highest type weight: fully typed
configurations have an average overhead of 3.63$\xtimes$. The slowest
configuration is from the \texttt{nbody} benchmark at 5.95$\xtimes$.

Performance degrades as types are added because changing an
annotation from \li{Any} to a static type results in checks being
inserted. The graphs of \texttt{pystone}, \texttt{snake}, and others
display a linear degradation because each check is executed
approximately the same number of times when the configuration
runs. Graphs with greater variance, such as that of
\texttt{spectral\_norm}, arise when some checks are executed more
often than others, so different configurations at the same point in
the lattice perform differently depending on which annotations are
dynamized. The configurations from \texttt{meteor\_contest} form two
large clusters: the parameter \li{fps} of the \li{solve} function has
type \li{List[List[List[Set[int]]]]}; \li{solve} loops over the second
and third dimensions of this value and every iteration of a loop
includes a check if the loop's target is typed, and so replacing \li{fps}'s
annotation with \li{Any} or \li{List[Any]} dramatically reduces the amount
of time spent performing transient checks.

\citet{vitousek:2017trans} examined some of these benchmarks and
compared the performance of untyped configurations to configurations
that were close to fully-typed. We find different results in some of
these cases: for example, we report an overhead of 5.19$\xtimes$ for
the \texttt{nbody} benchmark in the fully-typed configuration, while
they found an overhead of less than 2$\xtimes$. This is because we
increased the expressivity of Reticulated Python's type annotations
and type system to handle a fully static version of \texttt{nbody}
(for example, by allowing type annotations to be placed on functions
with default arguments). On the other hand, some of our changes to
Reticulated Python's semantics resulted in better performance. For
example, Reticulated now checks whether an object is an instance of a
class rather than checking that it supports all the methods of the
class. As a result, the \texttt{spectral\_norm} benchmark's overhead
was reduced from over 5$\xtimes$ to 2.98$\xtimes$ in fully-typed
configurations.

Overall, the performance cost of transient gradual typing in
Reticulated Python is significant, but unlike Typed
Racket~\citep{takikawa:2016gtdead}, the cost is generally predictable
across the lattice and it never approaches worst cases of over
100$\xtimes$. \citet{takikawa:2016gtdead} suggest that in some
contexts an overhead of less than 3$\xtimes$ is a cutoff for
real-world releasability (with the notation \emph{3-deliverable})
while an overhead in the range 3$\xtimes$ to 10$\xtimes$ is usable for
development purposes (written \emph{3/10-usable}). While they note
that such values are ``rather liberal'' and are unacceptable in many
applications, they provide a minimal criterion to evaluate the
acceptability of overheads. With Reticulated Python and the baseline
transient semantics, the average overhead over every sampled
configuration falls within the 3-deliverable range, and all
configurations are at least 3/10-usable. This result is in
concordance with the analysis of \citet{greenman:2017retic}, who found
an overall performance cost of no worse than one order of magnitude in
Reticulated Python programs and also found that the cost of gradual
typing increased as programs became more statically-typed.

\subsection{Module-Based Configurations}

Our sampling methodology is different from the methodology used by
\citet{takikawa:2016gtdead}, in which all possible configurations of
the tested Typed Racket programs were tested, because the space of
possible configurations is much larger in Reticulated Python since
Reticulated Python uses fine-grained gradual typing and Typed Racket
is coarse-grained. We expect that the fact that dynamic and static are
much more intermingled in most of our configurations than they are in
Typed Racket would result in our configurations showing even more
overhead due to run-time type enforcement. However, to ensure that our
sampling did not miss pathological cases that are drawn out by
module-level gradual typing, we generated typing lattices equivalent
to those that would be generated by Typed Racket for two of our test
cases (cases that \citet{takikawa:2016gtdead} also analyzed). For
these cases, the \texttt{sieve} and \texttt{snake} benchmarks, we
recreated configurations equivalent to those possible in Typed
Racket. While the difference in languages precludes a direct
comparison, this analysis ensures that if there was some specific
interaction between static and dynamic displayed in the Typed Racket
configurations that led to a mean overhead of 102.49$\xtimes$ (for
\texttt{sieve}) or 32.30$\xtimes$ (for \texttt{snake}), we would also
encounter it.

\begin{figure}
 \centering
\begin{tabular}{rl|rl}
\texttt{sieve} & (2 modules, 4 config.) & \texttt{snake} & (8 modules, 256 config.)\\
max overhead & 2.20$\times$ & max overhead & 3.81$\times$\\
mean overhead & 1.60$\times$ & mean overhead & 2.28$\times$
\end{tabular}
\caption{Performance of module-based lattices of \texttt{sieve} and \texttt{snake}.}
\label{cp:fig:modules}
\end{figure}

Figure \ref{cp:fig:modules} shows the performance of the typing
lattices of \texttt{sieve} and \texttt{snake} when generated on a
per-module basis \emph{\`{a} la} Typed Racket. The performance of
these configurations are in line with the overall performance of the
benchmarks using our sampling methodology---compare the graphs for
\texttt{sieve} and \texttt{snake} in Figure
\ref{cp:fig:cpythonperf}. In these graphs, the configurations (red
circles) that perform worst show approximately the same proportional
overhead (the scale shown on the right $y$-axis) as the worst case
configurations of the module-based lattices. Similarly, the mean
overheads of the module-based lattices would, if plotted on the graphs
in Figure \ref{cp:fig:cpythonperf}, be close to average. This suggests
that the configurations tested by \citet{takikawa:2016gtdead} are not
exceptional cases.

\subsection{Comparison to Guarded Gradual Typing}

While our performance results compare favorably to those reported by
\citet{takikawa:2016gtdead} in Typed Racket, this is an imprecise
comparison because of the different underlying languages involved. As
a better comparison between the transient approach and the traditional
proxy-based ``guarded'' approach, we took one configuration of the
\texttt{sieve} as a case study. Specifically, this was a configuration
whose equivalent Typed Racket program showed approximately 100$\times$
overhead. For this configuration, we manually created a cast-inserted,
proxy-based version of the program. \citet{vitousek:2014retic,
  vitousek:2017trans} showed that in general the proxy-based or
guarded approach is incompatible with Python, but those
incompatibilities do not arise in this limited example.

We found that the performance of the guarded version of the program
had an overhead of 11.73$\times$ over the standard Python version,
compared to an overhead of 1.43$\times$ for the transient version.
Through varying the parameters of the benchmark, we observe that the
execution times for guarded \texttt{sieve} show the same computational
complexity as untyped and transient \texttt{sieve}, indicating that
the difference in performance is not a result of the use of proxies
increasing the complexity of the program. By instrumenting the guarded
version of the program, we verified that there were no chains of
proxies---in no case was there a value more than two ``layers'' deep
(i.e.\ a proxy of a proxy of a value). The high overhead of the guarded
approach compared to the transient approach, therefore, is
attributable to a large constant factor. Through profiling, we found
that the largest contributor to this overhead was the indirection
performed by proxies at their use sites, followed by casting, proxy
instantiation, and calls from translated user code into the
casting code. By contrast, the transient version's reduced overhead
comes almost entirely from the runtime inspection of values performed
by the code implementing transient checks, and from the calls to these
checks. Python performs these inspections very efficiently, and so the
overall overhead is relatively low, despite the frequency of the
checks.

\section{Optimizing Transient Gradual Typing}\label{cp:sec:infer}

To improve Reticulated Python's performance, we aim to reduce the
number of checks while preserving those required for soundness.  The
basic idea of transient gradual typing is to use pervasive runtime
checks to verify that values correspond to their expected static
types. With the transient approach, type annotations are \emph{untrusted}: they do
not provide information to be relied on, but rather are claims that
must be verified. Therefore, to reduce the runtime burden of transient
gradual typing, we move this verification from runtime to compile time
wherever possible. We do so by using type inference to determine when
types can be \emph{trusted} and do not need runtime verification. 

Our inference process is based on the approaches of
\citet{aiken:1993inclusion} and \citet{Rastogi:2012fk}, using
subtyping constraints and also a new form of constraint, the
\emph{check constraint}. To determine which checks are redundant, our
inference algorithm occurs after transient checks have already been
inserted, because the existence of a check in one part of a program
can allow checks elsewhere to be removed. Check constraints let the
system reason conditionally about checks, and they express the idea of
transient checks: the type of the check expression
$\checkx{e}{S}$ and the type of the expression being checked $e$ are
constrained to be equal \emph{if}, when solved for, the type of $e$ corresponds to the
type tag $S$ (for example, if the type of $e$ is solved to be $\alpha\to\beta$ and $S$
is $\funs$). If that is not the case, for example if $S$ is $\funs$
and the type of $e$ is solved to be $\dyn$, then the type of the overall check
$\checkx{e}{S}$ is constrained to be the most general type that corresponds with $S$ (in this
example, $\dyn\to\dyn$). 

\subsection{Overall approach} \label{cp:sec:infer:approach} 
We generate sets of check constraints and subtype constraints from
programs and find a solution that maps type variables to types, and
then remove redundant checks. Our approach is as follows:

\begin{itemize}
\item Perform transient check insertion as described by
  \citet{vitousek:2017trans}.
\item Assign a unique type variable to every function argument, return
  type, and reference in the program. 
\item Perform a syntax-directed constraint generation pass.
\item Solve the constraint system to obtain a mapping from type
  variables to types.
\item Using this mapping, perform a syntax-directed translation to the
  final target language. For each check in the program, if the
  inferred type of the term being checked and the tag it is checked
  against agree, remove the check, otherwise retain it.
\end{itemize}

\begin{figure*}
  \centering
  \begin{minipage}{1.0\linewidth}
    \centering
    \begin{minipage}{.5\linewidth}
\begin{lstlisting}[language=Python,basicstyle=\small\tt,morekeywords={Callable,Any,int}]
# Program with transient checks and vars

def idDyn(a:alpha)->beta: return a

def makeEq(n:gamma)->delta:
  n@int/*\label{cp:ex:trans:eqenter}*/
  def internal(m:epsilon)->zeta:
    m@int/*\label{cp:ex:trans:intenter}*/
    return n == m
  return internal

eqFive = makeEq(5)@->/*\label{cp:ex:trans:e5}*/
eqFive(20)@bool/*\label{cp:ex:trans:e520}*/
eqFive(idDyn('hello world'))@bool/*\label{cp:ex:trans:e5str}*/
\end{lstlisting}
\centering (a)
    \end{minipage}
    \begin{minipage}{.48\linewidth}
\begin{lstlisting}[language=Python,basicstyle=\small\tt,morekeywords={Callable,Any,int}]
# Final optimized program

def idDyn(a): return a

def makeEq(n):
  def internal(m):
    m@int
    return n == m
  return internal


eqFive = makeEq(5)
eqFive(20)
eqFive(idDyn('hello world'))
\end{lstlisting}
\centering (b)
    \end{minipage}



  \end{minipage}
  \caption{Stages of optimized transient compilation for the program shown in Figure \ref{cp:fig:trans}.}
  \label{cp:fig:steps}
\end{figure*}





As an example, we return to the program shown in Figure
\ref{cp:fig:trans}a, which shows a curried equality function written
in a gradually typed language and which should pass static
typechecking but fail due to a transient check at runtime.  Figure
\ref{cp:fig:steps}a shows the result of this program after the first
phase of our optimizing translation. In this phase, transient checks
have been inserted exactly as in Figure \ref{cp:fig:trans}b, but
instead of the programmer's type annotations being erased, they have
been replaced by type variables
$\alpha,\beta,\gamma,\delta,\epsilon,\zeta$.





Our system infers types (which may be entirely different from the
programmer's annotations) for these type variables by generating
subtyping constraints and special check constraints. Check constraints
are generated by transient checks, and serve to connect the type of
the checked expression with the type it is used at after the
check. For example, at line \ref{cp:ex:trans:e5}, the result of
\li{makeEq(5)} has type $\delta$, and is then checked to ensure that
it is a function (\li{@->}). The type of the result of this check, and
therefore the type of \li{eqFive}, is $\eta\to\theta$, where
$\eta,\theta$ are fresh type variables. This type is linked to
$\delta$ by a check constraint
$\depcon{\delta}{{\funs}}{\eta\to\theta}$, which can be read as ``if
$\delta$ is solved to be a function, then its solution is equal to
$\eta\to\theta$.'' We use check constraints rather than equality
constraints \cite{Hindley:1969ri,Milner:1978kh} because the same
variable can be checked against many different types at different
points in the program. Check constraints are only generated by
transient checks where the checked type tag corresponds to a non-base
type, because constraints of the form $\depcon{\gamma}{\intt}{\intt}$
(as would be generated on line \ref{cp:ex:trans:eqenter}) add no new
information to the system: the type on the right will be $\intt$
whether $\gamma$ is solved to be $\intt$ or not.

Subtyping constraints are also generated from the program. For
example, because the call to \li{makeEq} on line \ref{cp:ex:trans:e5}
has an integer argument, it generates the constraint
$\flows{\intt}{\gamma}$, meaning that $\gamma$ is constrained to be a
supertype of $\intt$. The full set of constraints for this example is:
\begin{gather*}
  \{\flows{\alpha}{\beta},\flows{\epsilon\to\zeta}{\delta},\flows{\boolt}{\zeta},\flows{\intt}{\gamma}, \depcon{\delta}{{\funs}}{\eta\to\theta},\flows{\intt}{\eta},\flows{\beta}{\eta}\}
\end{gather*}


We then solve this constraint set to obtain a mapping from each
variable to a single non-variable type. The only subtyping
constraint on $\delta$ is that $\flows{\epsilon\to\zeta}{\delta}$, so we
determine that $\delta$ must be a function and that
$\epsilon\to\zeta=\eta\to\theta$. This, combined with the fact that
both $\intt$ and (transitively) $\strt$ must be subtypes of $\eta$ due
to the calls on lines \ref{cp:ex:trans:e520} and \ref{cp:ex:trans:e5str},
means that the only solution for $\eta$ and $\epsilon$ is $\dyn$ (the
dynamic type). In all, the solution we find for this constraint set is
\[
\alpha{=}\beta{=}\strt,~\gamma{=}\intt,~\delta{=}{\dyn}{\to}\boolt,~
\epsilon{=}\eta{=}\dyn, ~\zeta{=}\theta{=}\boolt
\]



Some of the transient checks in Figure \ref{cp:fig:steps}a verify
information that the constraint solution has already confirmed. For
example, the check at line \ref{cp:ex:trans:eqenter} verifies that \li{n}
is an integer---but \li{n}'s type $\gamma$ was statically inferred to
be integer, and so this check is not needed. However, the check at
line \ref{cp:ex:trans:intenter}, which verifies that \li{m} is an
integer, is not redundant: the type $\dyn$ was inferred for \li{m}'s
variable $\epsilon$. In fact, this check is needed for soundness
because it will fail with a string on line \ref{cp:ex:trans:e5str}. The
final program, with redundant checks removed and all annotations
erased, is listed in Figure \ref{cp:fig:steps}b.

For the purposes of this example we do not include constraints based
on potential interaction with the open world. If this program were to
be visible to the open world and potentially used by untyped Python
clients, we would need to generate the additional constraints
$\flows{\dyn}{\alpha}$ and $\flows{\dyn}{\gamma}$ in order to maintain
open-world soundness, because the open world could pass arbitrary values
into these functions \cite{vitousek:2017trans}.

\subsection{Constraint Generation with Check Constraints}

\begin{figure*}
  \begin{minipage}{1.0\linewidth}
\[
\begin{array}{lrcl}
  \text{variables} & x,y \\
  \text{numbers} & n & \in & \mathbb{Z}\\
  \langs~\text{expressions} & s & ::= & x \mid n \mid s + s \mid \fntx{x}{U}{U}{s} \mid s~s \mid  \refxi{s}{U} \mid  \derefx{s} \mid \mutx{s}{s}\\
  \langd~\text{expressions} & d & ::= & x \mid n \mid d + d \mid \fntx{x}{\alpha}{\alpha}{d} \mid  d~d \mid \letx{x}{d}{d} \mid \refxi{d}{\alpha} \mid \\&&& \derefx{d} \mid  \mutx{d}{d} \mid \checkx{d}{S}\\
  \text{types} & U & ::= & \dyn \mid \intt \mid \reft{U} \mid U\to U\\
  \text{type tags} & S & ::= & \dyn \mid \intt \mid ~\funs~ \mid \refs \\
\end{array}
\]  
\boxed{U \sim U}
\begin{gather*}
\inferrule{ }{\dyn \sim U}\qquad
\inferrule{ }{U \sim \dyn}\qquad
\inferrule{ }{\intt \sim \intt}\qquad
\inferrule{U_1 \sim U_3 \\ U_2 \sim U_4}{\funt{U_1}{U_2} \sim \funt{U_3}{U_4}}\qquad
\inferrule{U_1 \sim U_2}{\reft{U_1} \sim \reft{U_2}}
\end{gather*}

\end{minipage}

\begin{minipage}{1.0\linewidth}
  \begin{minipage}{.5\linewidth}
\boxed{\matchrel{U}{U}}
\[
\begin{array}{rcl}
  \funt{U_1}{U_2} & \rhd & \funt{U_1}{U_2} \\
  \dyn & \rhd & \funt{\dyn}{\dyn} \\
  \reft{U} & \rhd & \reft{U} \\
  \dyn & \rhd & \reft{\dyn}
\end{array}
\]
  \end{minipage}
  \begin{minipage}{.5\linewidth}    
\boxed{\lowlty{U}=S}
\[
\begin{array}{rcl}
  \lowlty{\funt{U_1}{U_2}} & = & \funs \\
  \lowlty{\reft{U}} & = & \refs \\
  \lowlty{\intt} & = & \intt \\
  \lowlty{\dyn} & = & \dyn
\end{array}
\]
  \end{minipage}
\end{minipage}

\boxed{\Gamma\vdash s\leadsto d:U}\hspace{4in}
\begin{gather*}
\inferrule*{\Gamma,x{:}U_1\vdash s\leadsto d:U_2' \\ U_2' \sim U_2 \\ \alpha,\beta~\text{fresh}}
  {\Gamma\vdash \fntx{x}{U_1}{U_2}{s}\leadsto\fntx{x}{\alpha}{\beta}{\letx{x}{\checkx{x}{\lowlty{U_1}}}{d}}:\funt{U_1}{U_2}} \\[1ex]
\inferrule*{\Gamma\vdash s_1\leadsto d_1:U \\ \matchrel{U}{\funt{U_1}{U_2}} \\\\
           \Gamma\vdash s_2\leadsto d_2:U_1' \\ U_1' \sim U_1}
  {\Gamma\vdash s_1~s_2\leadsto (\checkx{(\checkx{d_1}{\funs})~d_2)}{\lowlty{U_2}}:U_2}\qquad
\inferrule*{\Gamma(x)=U}{\Gamma\vdash x\leadsto x:U} \qquad 
\inferrule*{ }{\Gamma\vdash n\leadsto n:\intt}\\[1ex]
\inferrule*{\Gamma\vdash s\leadsto d:U_2 \\ U_2\sim U_1 \\ \alpha~\text{fresh}}{\Gamma\vdash \refxi{s}{U_1}\leadsto \refxi{d}{\alpha}:\reft{U_1}}\qquad
\inferrule*{\Gamma\vdash s\leadsto d:U \\ \matchrel{U}{\reft{U'}}}
  {\Gamma\vdash\derefx{s}\leadsto \checkx{\derefx{(\checkx{d}{\refs})}}{\lowlty{U'}}:U'}\\[1ex]
\inferrule*{\Gamma\vdash s_1\leadsto d_1:U \\ \matchrel{U}{\reft{U'}} \\\\
           \Gamma\vdash s_2\leadsto d_2:U'' \\ U'' \sim U'}
  {\Gamma\vdash\mutx{s_1}{s_2}\leadsto \mutx{(\checkx{d_1}{\refs})}{d_2}:\intt}\qquad
\inferrule*{\Gamma\vdash s_1\leadsto d_1:U_1 \\ U_1\sim \intt \\\\
           \Gamma\vdash s_2\leadsto d_2:U_2 \\ U_2\sim \intt}
  {\Gamma\vdash s_1~+~s_2\leadsto \checkx{d_1}{\intt}~+~\checkx{d_2}{\intt}:\intt}
\end{gather*}

  \caption{The $\langs$ calculus and translation from $\langs$ to
    $\langd$ (based on \citet{vitousek:2017trans}).}
  \label{cp:fig:stod}
\end{figure*}

In this section we describe our approach to generating type
constraints for programs in a transient calculus $\langd$.  Programs
in $\langd$ are not the surface programs written by the programmer,
and $\langd$ is not a gradually typed language. Instead, $\langd$
programs are the result of translating programs in a gradually typed
surface language $\langs$ into $\langd$. This translation, which along
with the syntax for $\langs$ is shown in Figure \ref{cp:fig:stod},
inserts transient checks throughout translated programs in order to
enforce the type annotations present in the surface program, exactly
as described by \citet{vitousek:2017trans}. The $\langd$ calculus is
analogous to the cast calculi in guarded gradual typing
\cite{Siek:2006bh, Wadler:2009qv,Herman:2006uq} and the transient
translation from $\langs$ to $\langd$ is nearly identical to the
translation from the surface $\lambda^\dyn_\to$ calculus to the target
$\lambda^{\Downarrow}_\ell$ calculus presented by
\citet{vitousek:2017trans} (except that the calculi presented here
elide features needed for blame tracking). The $\langs$ calculus
includes functions and mutable references (with syntax $\refxi{s}{U}$
for introducing a reference with type $U$, $\derefx{s}$ for
dereferencing, and $\mutx{s}{s}$ for mutation).

Figure \ref{cp:fig:stod} shows the syntax for $\langd$, which
supports all the features of $\langs$ as well as transient checks,
written $\checkx{d}{S}$. The meta-variable $d$ ranges over expressions
of $\langd$.  In the dynamic semantics for $\langd$ (defined by
translation into a third calculus $\lange$ in Figure
\ref{cp:fig:dtoe} with evaluation rules given in Figure
\ref{cp:fig:dynsem}) such checks examine the value of an expression
$d$ to determine if it corresponds to the \emph{type tag} $S$, and
fails if not. Type tags, shown in Figure \ref{cp:fig:stod},
correspond to type constructors. Functions and references are
annotated with type variables.

\begin{figure*}
  \begin{minipage}{1.0\linewidth}
    \begin{minipage}{.7\linewidth}
      
\[
\begin{array}{rrcl}
  \text{type variables} & \alpha,\beta,\gamma \\
  \text{leaf types} & V & ::= & \alpha \mid \dyn \\
  \text{constraint types} & A & ::= & V \mid \intt \mid \reft{V} \mid \funt{V}{V} \\
  \text{constraints} & C & ::= & \flows{A}{A} \mid \depcon{A}{S}{A} \\
  \text{constraint sets} & \Omega & \in & \mathcal{P}(C)
\end{array}
\]
    \end{minipage}
    \begin{minipage}{.2\linewidth}
\boxed{\specmatches{A}{A}{S}}
\[
\begin{array}{rcl}
  \funt{V_1}{V_2} & \rhd_\funs & \funt{V_1}{V_2} \\
  \alpha & \rhd_\funs & \funt{\beta}{\gamma} \\& \multicolumn{2}{l}{ \;\;\text{with}~\beta,\gamma~\text{fresh}} \\
  \reft{V} & \rhd_\refs & \reft{V} \\
  \alpha & \rhd_\refs & \reft{\beta} \\& \multicolumn{2}{l}{\;\;\text{with}~\beta~\text{fresh}}\\
  \intt & \rhd_\intt & \intt \\
  V & \rhd_\intt & \intt\\
  A & \rhd_\dyn & A
\end{array}
\]
      
    \end{minipage}\\[2ex]

\boxed{\Gamma\vdash d:A\yO}
\begin{gather*}
\inferrule{\Gamma\vdash d:A_1\yO \\ \specmatches{A_1}{A_2}{S}}
  {\Gamma\vdash \checkx{d}{S}:A_2\yO \cup \{\depcon{A_1}{S}{A_2}\}}\qquad
\inferrule{\Gamma,x{:}\alpha\vdash d:A\yO}
  {\Gamma\vdash \fntx{x}{\alpha}{\beta}{d}:\funt{\alpha}{\beta}\yO\cup\{\flows{A}{\beta}\}} \\[1ex]
\inferrule{\Gamma\vdash d_1:V_1\to V_2\yO_1 \\
           \Gamma\vdash d_2:A\yO_2}
  {\Gamma\vdash d_1~d_2:V_2\yO_1\cup\Omega_2\cup\{\flows{A}{V_1}\}}\qquad
\inferrule{\Gamma\vdash d:A\yO}{\Gamma\vdash \refxi{d}{\alpha}:\reft{\alpha}\yO\cup\{\flows{A}{\alpha}\}}\\[1ex]
\inferrule{\Gamma\vdash d_1:\reft{V}\yO_1 \\
           \Gamma\vdash d_2:A\yO_2}
  {\Gamma\vdash\mutx{d_1}{d_2}:\intt\yO_1\cup\Omega_2\cup\{\flows{A}{V}\}}\qquad
\inferrule{\Gamma\vdash d_1:\intt\yO_1 \\
           \Gamma\vdash d_2:\intt\yO_2}
  {\Gamma\vdash d_1~+~d_2:\intt\yO_1\cup\Omega_2}\\[1ex]
\inferrule{\Gamma(x)=A}{\Gamma\vdash x:A;\emptyset} \qquad 
\inferrule{ }{\Gamma\vdash n:\intt;\emptyset}\qquad
\inferrule{\Gamma\vdash d:\reft{V}\yO}
  {\Gamma\vdash\derefx{d}:V\yO}
\end{gather*}
  \end{minipage}
  \caption{Syntax and constraint generation for $\langd$}
  \label{cp:fig:constrgen}
\end{figure*}

To infer types for these type variables such that the overall program
is well-typed, we first generate constraints using the syntax-directed
rules defined in Figure \ref{cp:fig:constrgen}, in the style of
\citet{Aiken:1995ly}. These rules generate sets $\Omega$ of
constraints $C$ over types $A$, also defined in Figure
\ref{cp:fig:constrgen}. Types $A$ are not inductively
defined---function and reference types can only contain type variables
or $\dyn$, not arbitrary types. 

The rules in Figure \ref{cp:fig:constrgen} generate
constraints. Subtyping constraints are generated from function and
reference introduction and elimination sites, to ensure that any
solution found for these variables is well-typed. 



The rule for transient checks differs from the others.  First, there
is the question of what constraint type to give the result of a
check. Consider the program
$\fntx{x}{\alpha}{\beta}{(\checkx{x}{{\funs}})}$.  In the body of the
function $x$ has type $\alpha$, but the type of the check expression
$\checkx{x}{{\funs}}$ ought to be a function, because the
check will fail at runtime if $x$ is \emph{not} a function. The check
cannot, however, specify argument and return types. Therefore the type
of $\checkx{x}{{\funs}}$ is a function whose argument and return types
are fresh type variables. This type is obtained using the $\rhd_S$
relation in Figure \ref{cp:fig:constrgen}, where $S$ is the type tag
checked against (in this case $\funs$). If the type on the left
already corresponds to $S$, the type on the right is the same, but if
it is a variable the right-hand side is a new type that corresponds
to $S$ but is otherwise inhabited by fresh variables.

Checks do not generate subtype constraints: checking that something with type $\alpha$ is a function should
not introduce the constraint $\flows{\alpha}{\beta\to\gamma}$, because
the solution to $\alpha$ might not actually be a function. For
example, if all values that flow into a variable with type $\alpha$
are integers, then at runtime this check will fail, which
is an acceptable behavior for gradually typed programs. However, if
only functions inhabit $x$, then the argument and return types of
those functions must be equal to $\beta$ and $\gamma$
respectively. Check expressions instead generate check constraints,
written $\depcon{A_1}{S}{A_2}$, which constrain the solution for
$A_2$ so that, if the solution of $A_1$ corresponds to the type tag
$S$, then $A_1=A_2$.

Note that these type-directed constraint generation rules show
that $\langd$ is \emph{not} a gradually typed language: for example,
the application rule requires that the expression in function position
has a function type, not that it be consistent with a function
type. This is because the translation process in Figure
\ref{cp:fig:stod} has inserted check expressions throughout the
program already; any well-typed, closed $\langs$ terms can be
translated to a well-typed $\langd$ term.

\begin{theorem}
  If $\emptyset\vdash s\leadsto d:U$, then $\emptyset\vdash d:A;\Omega$.
\end{theorem}
This lemma is an immediate corollary of Lemma
\ref{cp:lem:apx:stodpres}, given in Appendix \ref{cp:apx:proof}.

\subsubsection{Constraints from the open world}

The constraint generation system in Figure \ref{cp:fig:constrgen} is
sufficient to find a solution if the program does not interact with
external code (a \emph{closed} world), but not if the program can
interact with code that may not know about or respect the types it
expects. \citet{vitousek:2017trans} showed that transient check
insertion ensures safety in open world contexts, but optimizing
transient programs based only on \emph{internal} information could
result in the deletion of checks critical to preserving open-world
safety.

Fortunately, \citet{Rastogi:2012fk} observed that the overall type of
a program will encode the information flows that it exchanges with the
open world. Figure \ref{cp:fig:owconstr} shows additional constraints
generated from the overall type of a program to protect it from the
open world. Constraints such as $\dyn <: V$ constrain $V$ to be
dynamic, while constraints like $V <: \dyn$ allow $V$ to be more
specific types, but guarantee that any type variables that flow
through $V$ in a contravariant position will be constrained to
$\dyn$. For example, given the constraints 
\[V_1 <: \dyn,~ V_2\to V_3 <: \dyn \]
the rules shown below in Section \ref{cp:sec:alg} guarantee that $V_2$
will be constrained to $\dyn$; this is essential because $V_2$ is in a
contravariant position, and if the a function with type $V_2\to V_3$
flows into the open world, the open world can pass whatever it wishes
into $V_2$. This also allows our analysis to be modular: individual
modules can be optimized in isolation by making pessimistic
assumptions about what kinds of values flow from one module to
another.

\begin{figure}
  \begin{minipage}{1\linewidth}

  \boxed{\vdash A : \Omega}
\begin{gather*}
\inferrule{ }{\vdash V : \{\flows{V}{\dyn}\}}\qquad
\inferrule{ }{\vdash \intt : \emptyset}\qquad
\inferrule{ }{\vdash \reft{V} : \{\flows{V}{\dyn},\flows{\dyn}{V}\}} \qquad
\inferrule{ }{\vdash \funt{V_1}{V_2} : \{\flows{\dyn}{V_1},\flows{V_2}{\dyn}\}}
\end{gather*}
    
  \end{minipage}

  \caption{Open world constraint generation.}
  \label{cp:fig:owconstr}
\end{figure}






\subsection{Computing Constraint Solutions}\label{cp:sec:alg}

We compute solutions $\sigma$ for constraint sets $\Omega$ with an
approach based on the algorithm of Aiken and collaborators
\cite{Aiken:1995ly,Aiken:1994fk,aiken:1993inclusion}, but with the
addition of check constraints, requiring constraint sets to be solved
incrementally.

\begin{figure}
  \begin{minipage}{1.0\linewidth}
    
\[
\begin{array}{rrcl}
  \text{types} & T & ::= & T \to T \mid \reft{T} \mid \intt \mid \dyn \mid \alpha \\
  \text{constraints} & C & ::= & \flows{A}{A} \mid \depcon{A}{S}{A} \colorbox{tableblue}{$\mid A = A \mid \tagcon{\alpha}{S} \mid \defcon{\alpha}{T}$} \\
  \text{maps} & \sigma & ::= & \alpha\mapsto T, \ldots,\alpha\mapsto T
\end{array}
\]

\begin{minipage}{1\linewidth}
\begin{minipage}{.5\linewidth}
\boxed{\mathit{parts}(A)=\{\alpha,\ldots,\alpha\}}
\[
\begin{array}{rcl}
  \mathit{parts}(\dyn) & = & \emptyset \\
  \mathit{parts}(\intt) & = & \emptyset \\
  \mathit{parts}(\alpha) & = & \emptyset \\
  \mathit{parts}(\alpha_1\to\alpha_2) & = & \{\alpha_1,\alpha_2\} \\
  \mathit{parts}(\reft{\alpha}) & = & \{\alpha\}
\end{array}
\]
\end{minipage}
\hfill
\begin{minipage}{.45\linewidth}
\boxed{\uplty{S}=A}
\[
\begin{array}{rcl}
  \uplty{\intt} & = & \intt\\
  \uplty{\refs} & = & \reft{\dyn} \\
  \uplty{\funs} & = & \dyn\to\dyn \\
  \uplty{\dyn} & = & \dyn
\end{array}
\]
\end{minipage}
\end{minipage}

\boxed{\Omega \steps \Omega}
\[
\begin{array}{rl}
 \Omega \cup \{\flows{\funt{V_1}{V_2}}{\dyn}\} & \steps \Omega \cup \{\flows{\dyn}{V_1},~\flows{V_2}{\dyn}\} \\ 
 \Omega \cup \{\flows{\funt{V_1}{V_2}}{\funt{V_3}{V_4}}\} & \steps \Omega \cup \{\flows{V_3}{V_1},~\flows{V_2}{V_4}\} \\ 
 \Omega \cup \{\flows{\reft{V}}{\dyn}\} & \steps \Omega \cup \{V=\dyn\} \\ 
 \Omega \cup \{\flows{\reft{V_1}}{\reft{V_2}}\} & \steps \Omega \cup \{V_1=V_2\}\\
 \Omega \cup \{\flows{V}{V}\} & \steps \Omega \\
 \Omega \cup \{\flows{\intt}{\dyn}\} & \steps \Omega \\
 \Omega \cup \{\funt{V_1}{V_2}=\funt{V_3}{V_4}\} & \steps \Omega \cup \{V_1=V_3,V_2=V_4\} \\ 
 \Omega \cup \{\reft{V_1}=\reft{V_2}\} & \steps \Omega \cup \{V_1=V_2\}\\
 \Omega \cup \{A=A\} & \steps \Omega \\
 \Omega \cup \{A=\alpha\} & \steps \Omega\cup \{\alpha=A\} \\
  &
  \begin{array}{ll}
    \quad\text{where} & A\neq\alpha'
  \end{array}\\
 \Omega\cup \{\alpha=A\} & \steps \Omega[\alpha/A] \cup \{\defcon{\alpha}{A}\}\\
  &
  \begin{array}{ll}
    \quad\text{where} & \alpha\not\in\mathit{vars}(A)~\text{and}~ (\defcon{\alpha}{T})\not\in\Omega
  \end{array}\\
  \Omega\cup \{\depcon{A}{S}{A} \} & \steps \Omega\\
  \Omega\cup \{\tagcon{\alpha}{S}, \depcon{\alpha}{S}{A} \} & \steps \Omega\cup \{\alpha = A\}\\
  &
  \begin{array}{ll}
    \quad\text{where} & (\depcon{\alpha}{S'}{A'}) \not\in \Omega\\
    & A \neq \alpha
  \end{array}\\
  \Omega\cup \{\tagcon{\alpha}{S_1}, \depcon{\alpha}{S_2}{A} \} & \steps \Omega\cup \{\tagcon{\alpha}{S_1}\}\cup \{\alpha' = \dyn \mid \forall \alpha' \in \mathit{parts}(A)\} \\
  &
  \begin{array}{ll}
    \quad\text{where} & S_1 \neq S_2
  \end{array}\\
  \Omega\cup \{\depcon{\alpha}{S}{A_1}, \depcon{\alpha}{S}{A_2} \} & \steps \Omega\cup \{\depcon{\alpha}{S}{A_1},A_2=A_1\} \\
  \Omega\cup \{\tagcon{\alpha}{S} \} & \steps \Omega\cup\{\alpha=A\}\\
  &
  \begin{array}{ll}
    \quad\text{where} & (\depcon{\alpha}{S'}{A'}) \not\in \Omega ~\text{and}~(\defcon{\alpha}{T}) \not\in\Omega~\text{and}~\\
    & (\alpha=A') \not\in\Omega~\text{and}~\specmatches{\alpha}{A}{S}
  \end{array}
\end{array}
\]
  \end{minipage}
  \caption{Simplification of constraint sets.}
  \label{cp:fig:constsimpl}
\end{figure}

\begin{figure}
  
\begin{minipage}{1.0\linewidth}
\begin{minipage}{1.0\linewidth}
\begin{minipage}{.5\linewidth}
\boxed{S \sqcup S = S}
\[
\begin{array}{rcll}
  S \sqcup S & = & S \\
  S_1 \sqcup S_2 & = & \dyn & \text{if}~S_1\neq S_2
\end{array}
\]
\end{minipage}
\hfill
\begin{minipage}{.45\linewidth}
\boxed{\lowlty{A}=S}
\[
\begin{array}{rcl}
  \lowlty{\intt} & = & \intt\\
  \lowlty{\reft{V}} & = & \refs \\
  \lowlty{V_1\to V_2} & = & \funs \\
  \lowlty{\dyn} & = & \dyn
\end{array}
\]
\end{minipage}
\end{minipage}

\boxed{\Omega\Downarrow\sigma}
  \begin{gather*}
  \inferrule{\Omega=\Omega' \cup \{\flows{A_1}{\alpha},\ldots,\flows{A_n}{\alpha}\} \\ \alpha \in \mathit{vars}(\Omega') \vee n > 0 \\ \forall i \le n,~A_i\neq\alpha'\wedge \alpha\not\in\mathit{vars}(A_i) \\ S=\sqcup_{i\le n} \lowlty{A_i} \\ \vsol{\Omega'}{\alpha} \\ 
             \Omega~\text{normal} \\ \Omega\cup \{\tagcon{\alpha}{S}\} \Downarrow \sigma}
            {\Omega \Downarrow\sigma}\\[2ex]
  \inferrule{\Omega\steps^*\Omega' \\ \Omega' \Downarrow \sigma}
            {\Omega\Downarrow \sigma}\qquad
  \inferrule{\Omega=\{\defcon{\alpha_1}{T_1},\ldots,\defcon{\alpha_n}{T_n}\} \\\\ \forall \alpha,i \le n,~\alpha\not\in\mathit{vars}(T_i)}
            {\Omega\Downarrow \alpha_1{\mapsto}T_1,\ldots,\alpha_n{\mapsto}T_n}
\end{gather*}
\boxed{\vsol{\Omega}{\alpha}}
\begin{gather*}
  \inferrule{\Omega=\Omega_1\cup\Omega_2\cup\Omega_3\cup\Omega_4
             \\ \Omega_2 = \{\defcon{\alpha_1}{T_1},\ldots,\defcon{\alpha_n}{T_n}\} \\ \Omega_3 = \{\depcon{\alpha}{S_1}{A_1},\ldots,\depcon{\alpha}{S_m}{A_m}\} \\ \Omega_4 =  \{\flows{\alpha}{V_1},\ldots,\flows{\alpha}{V_p}\}
             \\ \alpha\not\in\mathit{vars}(\Omega_1) \\ \forall i \le n,~\alpha_i \neq \alpha \\ \forall j \le m,~\alpha\not\in\mathit{vars}(A_j)}
            {\vsol{\Omega}{\alpha}}
\end{gather*}
\end{minipage}

  \caption{Solving constraint sets.}
  \label{cp:fig:constsolve}
\end{figure}

Figure \ref{cp:fig:constsimpl} shows rules for simplifying constraint
sets. 
These rules introduce three additional forms of constraints:
\emph{equality constraints} $A_1=A_2$ indicate equality between $A_1$
and $A_2$ \cite{Hindley:1969ri,Milner:1978kh}, \emph{tag constraints}
$\tagcon{\alpha}{S}$ indicate that the solution to the variable
$\alpha$ must have the type constructor corresponding to $S$, and
\emph{definition constraints} $\defcon{\alpha}{T}$ constrain the
solution of variables $\alpha$ to be exactly $T$, which is a ``full''
inductive type as shown in Figure \ref{cp:fig:constsimpl}.

The first six rules decompose subtyping constraints, and are followed
by rules for decomposing equalities. Equality constraints on variables
$\alpha=A$ immediately become definition constraints
$\defcon{\alpha}{A}$ (exploiting the fact that all shallow types $A$
are syntactically also full types $T$) and result in $\alpha$ being
substituted by $A$ in the rest of the constraint set. Tag constraints
$\tagcon{\alpha}{S_1}$ and check constraints $\depcon{\alpha}{S_2}{A}$
combine to generate equality constraints for $\alpha$: if the tag
constraint's tag and the check constraint's tag are equal ($S_1=S_2$)
then the constraint $\alpha=A$ is added. Otherwise, the leaves
(or \emph{parts}) of $A$ are constrained to be $\dyn$, indicating that
the system was unable to prove that the check that generated the check
constraint will succeed. The next rule causes multiple check
constraints on the same variable and tag to be combined, and finally,
if the system contains a tag constraint but no check constraint on its
variable, the variable is constrained to be a type with a constructor
corresponding to $S$ but fresh variables in its leaves.

Figure \ref{cp:fig:constsolve} shows the solution algorithm
$\Omega\Downarrow\sigma$. This algorithm applies the simplification
rules until exhaustion and then selects an unsolved variable $\alpha$
and determines its tag. The variable to be solved can only appear at
the top level of subtyping constraints (e.g.,
$\flows{\alpha\to\beta}{\dyn}\not\in\Omega$). Further, $\alpha$ must
have only other variables or $\dyn$ as upper bounds---but this
requirement is satisfied by all constraints generated from the rules
in Figure \ref{cp:fig:constrgen}. These conditions are specified by the
$\vsol{\Omega}{\alpha}$ relation in Figure \ref{cp:fig:constsolve}. If
these conditions hold, $\alpha$'s tag is the join of all the lower
bounds of $\alpha$ (using the $\sqcup$ operator in Figure
\ref{cp:fig:constsolve}). The resulting tag constraint is then added to
$\Omega$ and the result is simplified. This process terminates once
$\Omega$ is only inhabited by definition constraints, and results in a
substitution $\sigma$ generated from these constraints.

\subsection{Check Removal}

\begin{figure}
\hrulefill\vspace{1ex}

\[
\begin{array}{rcl}
  a & \in & \text{addresses}\\
  e & ::= & a \mid x \mid n \mid e + e \mid \fnx{x}{e} \mid (e~e) \mid  \letx{x}{e}{e} \mid \refx{e} \mid \derefxp{e}{} \mid \mutxpINV{e}{e}{} \mid \checkx{e}{S} \mid  \mathtt{fail}\\
  \Sigma & ::= & \cdot \mid \Sigma,a{:}T 
\end{array}
\]
  \caption{Syntax for $\lange$.}
  \label{cp:fig:lange}
\end{figure}

\begin{figure*}
  \begin{minipage}{1.0\linewidth}
    
\boxed{\Gamma;\sigma\vdash d \leadsto e: T}
\begin{gather*}
\inferrule[DAbs]{\Gamma,x{:}\alpha;\sigma\vdash d \leadsto e : T \\ \vdash T <: \sigma\beta}
 {\Gamma;\sigma\vdash \fntx{x}{\alpha}{\beta}{d} \leadsto \fnx{x}{e} : \funt{\sigma\alpha}{\sigma\beta}}\qquad
\inferrule[DLet]{\Gamma;\sigma\vdash d_1 \leadsto e_1 : T_1 \\ \Gamma,x{:}T_1;\sigma\vdash d_2 \leadsto e_2 : T_2}
 {\Gamma;\sigma\vdash\letx{x}{d_1}{d_2} \leadsto \letx{x}{e_1}{e_2} : T_2}\\[1ex]
\inferrule[DApp]{\Gamma;\sigma\vdash d_1 \leadsto e_1 : \funt{T_1}{T_2} \\
   \Gamma;\sigma\vdash d_2 \leadsto e_2 : T_1' \\ \vdash T_1' <: T_1}
 {\Gamma;\sigma\vdash d_1~d_2 \leadsto (e_1~e_2) : T_2}\qquad
\inferrule[DRef]{\Gamma;\sigma\vdash d \leadsto e : T \\ \vdash T <: \sigma\alpha}{\Gamma;\sigma\vdash \refxi{d}{\alpha} \leadsto \refx{e} :\reft{\sigma\alpha}}\\[1ex]
\inferrule[DDeref]{\Gamma;\sigma\vdash d \leadsto e : \reft{T}}
 {\Gamma;\sigma\vdash\derefx{d}\leadsto \derefxp{e}{} : T}\qquad
\inferrule[DUpdt]{\Gamma;\sigma\vdash d_1 \leadsto e_1 : \reft{T} \\\\
   \Gamma;\sigma\vdash d_2 \leadsto e_2 :
 T' \\ \vdash T' <: T}
 {\Gamma;\sigma\vdash\mutx{d_1}{d_2} \leadsto \mutxpINV{e_1}{e_2}{} :\intt}\\[1ex]
\inferrule[DVar]{\Gamma(x)=A}{\Gamma;\sigma\vdash x \leadsto x : \sigma A} \qquad 
\inferrule[DInt]{ }{\Gamma;\sigma\vdash n \leadsto n :\intt}\\[1ex]
\inferrule[DAdd]{\Gamma;\sigma\vdash d_1 \leadsto e_1 : \intt \\ 
   \Gamma;\sigma\vdash d_2 \leadsto e_2 : \intt }
 {\Gamma;\sigma\vdash d_1~+~d_2 \leadsto e_1~+~e_2 :\intt}\qquad
\inferrule[DCheckRemove]{\Gamma;\sigma\vdash d \leadsto e : T \\ \lowlty{T} \preceq S}
 {\Gamma;\sigma\vdash \checkx{d}{S} \leadsto e : T}\\[1ex]
\inferrule[DCheckKeep]{\Gamma;\sigma\vdash d \leadsto e: \dyn \\ S \neq \dyn}
 {\Gamma;\sigma\vdash \checkx{d}{S} \leadsto \checkx{e}{S}: \uplty{S}}\qquad
\inferrule[DCheckFail]{\Gamma;\sigma\vdash d \leadsto e: T \\ T \neq \dyn \\ \lowlty{T} \not\preceq S}
 {\Gamma;\sigma\vdash \checkx{d}{S} \leadsto \mathtt{fail}: T'}
\end{gather*}
  \end{minipage}
  \caption{Translation from $\langd$ to $\lange$.}
  \label{cp:fig:dtoe}
\end{figure*}

When a constraint set $\Omega$ is solved by a substitution $\sigma$,
the types in $\sigma$ can be relied on because they were inferred
directly from the program. They may be more or less precise, or
entirely different, from the programmer's annotated types. Figure
\ref{cp:fig:dtoe} shows the rules for translating the program to final
target language, $\lange$, which is defined in Figure
\ref{cp:fig:lange}. This translation uses $\sigma$ to decide which
checks can be deleted, as shown in rules \textsc{DCheckRemove},
\textsc{DCheckKeep}, and \textsc{DCheckFail}: if the inferred type for
the checked expression $d$ is found to be more or equally precise
(with the $\preceq$ operator) than the tag it is checked against, then
the check can be removed by \textsc{DCheckRemove}. If the type of $d$
is less precise (which is only the case when $d$'s type is $\dyn$),
the check remains by \textsc{DCheckKeep}, and if $S$ and $d$'s type
are unrelated, then the check will \emph{always} fail at runtime, so
by \textsc{DCheckFail} the check is replaced by $\texttt{fail}$, an
expression which errors when evaluated. Since this expression always
fails, it would be reasonable to additionally warn the programmer of
the error at compile-time, and Reticulated Python's implementation of
this algorithm does so. However, the \textsc{DCheckFail} rule does not
cause the program to be statically rejected, because this analysis is
a runtime optimization that maintains the semantics of the gradually
typed language, where a check would detect a runtime error and fail.

\subsubsection{Soundness of constraint solving}

To prove that the solution algorithm shown in Figure
\ref{cp:fig:constsolve} generates valid solutions to constraint sets,
we must first define what a \emph{solution} to a constraint set
$\Omega$ must consist of. A constraint set is solved by a mapping
$\sigma$ if all the constraints in $\Omega$ are satisfied, and a
subtyping constraint $\flows{A_1}{A_2}$ is satisfied if
$\vdash \sigma A_1 <: \sigma A_2$, where $\sigma A$ is the
substitution of all variables in $A$ with their definitions in
$\sigma$, resulting in a type $T$, and using the subtyping relation on
$T$ defined in Figure \ref{cp:fig:subty}. Note that the subtyping
rules for types $T$ do not admit $\dyn$ as a universal supertype
(i.e.\ $\top$): function types are only subtypes of $\dyn$ if the
function is a subtype of $\dyn\to\dyn$, which in turn requires the
function's source type to be $\dyn$ itself. This is required to ensure
that function types that are passed into $\dyn$ cannot make any
assumptions about what arguments are passed into them. Similar
reasoning applies to why references are only subtypes of $\dyn$ if
they are subtypes of $\reft{\dyn}$.

\begin{figure*}
\hfill\vspace{1ex}

  \begin{minipage}{1\linewidth}

\begin{minipage}{.33\linewidth}
\boxed{\uplty{S}=T}
\[
\begin{array}{rcl}
  \uplty{\funs} & = & \dyn\to\dyn \\
  \uplty{\refs} & = & \reft{\dyn} \\
  \uplty{\intt} & = & \intt \\
  \uplty{\dyn} & = & \dyn 
\end{array}
\]  
\end{minipage}
\begin{minipage}{.33\linewidth}
\boxed{\lowlty{T}=S}
\[
\begin{array}{rcl}
  \lowlty{\funt{T_1}{T_2}} & = & \funs \\
  \lowlty{\reft{T}} & = & \refs \\
  \lowlty{\intt} & = & \intt \\
  \lowlty{\dyn} & = & \dyn
\end{array}
\]  
\end{minipage}
\begin{minipage}{.33\linewidth}
\boxed{S \preceq S}
\begin{gather*}
\inferrule{ }
  {\funs~\preceq~\funs}\qquad
\inferrule{ }
  {\refs \preceq \refs}\\[1ex]
\inferrule{ }
  {\intt \preceq \intt}\qquad
\inferrule{ }
  {S \preceq \dyn}
\end{gather*}
\end{minipage}

\vspace{1ex}
  \boxed{\vdash T <: T}
\begin{gather*}
\inferrule{ }{\vdash \intt <: \intt}\qquad
\inferrule{\vdash T_3 <: T_1 \\ \vdash T_2 <: T_4}{\vdash \funt{T_1}{T_2} <: \funt{T_3}{T_4}}\qquad
\inferrule{\vdash T_1 <: T_2 \\ \vdash T_2 <: T_1}{\vdash \reft{T_1} <: \reft{T_2}}\\[1ex]
\inferrule{ }{\vdash \dyn <: \dyn}\qquad
\inferrule{ }{\vdash \intt <: \dyn}\qquad
\inferrule{\vdash \funt{T_1}{T_2} <: \funt{\dyn}{\dyn}}{\vdash \funt{T_1}{T_2} <: \dyn}\qquad
\inferrule{\vdash \reft{T} <: \reft{\dyn}}{\vdash \reft{T} <: \dyn}
\end{gather*}    
  \end{minipage}
\caption{Full types and subtyping.}
  \label{cp:fig:subty}
\end{figure*}

A check constraint $\depcon{A_1}{S}{A_2}$ can be satisfied in one of
two ways: first, if $\sigma A_1$ corresponds to the tag $S$, then the
check constraint is simply an equality constraint, and it is satisfied
if $\sigma A_1$ is syntactically equal to $\sigma A_2$, written $\vdash \sigma A_1=\sigma A_2$. On the other hand, if $\sigma A_1$ does
\emph{not} correspond to $S$, then all that can be known about
$\sigma A_2$ is that it \emph{does} correspond to $S$, since the
transient check that generated the constraint will fail if values
flowing through it do not.  If $S$ is $\funs$, then $\lowlty{\sigma A_2}=~\funs$ but
the argument and return types of $\sigma A_2$ must be dynamic---the
values flowing through the check may be from the open world, beyond
the reach of the analysis. Likewise, reference types must be references to $\dyn$ if
they are on the right of a check constraint with a mismatch between
the checked type on the left and the tag.


Therefore, the definition of a solution
$\sigma$ to $\Omega$ is as follows:

\begin{definition}\label{cp:def:solution}
  A mapping $\sigma$ is a \emph{solution} to $\Omega$ if:
  \begin{enumerate}
  \item For all $\flows{A_1}{A_2}\in\Omega$, $\vdash \sigma A_1 <: \sigma A_2$.
  \item For all $\depcon{A_1}{S}{A_2}\in\Omega$:
    \begin{enumerate}
    \item If $\lowlty{\sigma A_1}=S$, then $\vdash \sigma A_1=\sigma A_2$.
    \item Otherwise, for all $\alpha \in \mathit{parts}(A_2)$, $\vdash \sigma \alpha=\dyn$. 
    \end{enumerate}
  \item For all $A_1=A_2 \in \Omega$, $\vdash \sigma A_1=\sigma A_2$.
  \item For all $\defcon{\alpha}{T}\in\Omega$, $\vdash \sigma\alpha=T$.
  \item For all $\tagcon{\alpha}{S}\in\Omega$, $\lowlty{\sigma \alpha}=S$.
  \end{enumerate}
\end{definition}  

We can now prove that the solution algorithm
in Figure \ref{cp:fig:constsolve} generates valid solutions.

\begin{restatable}{theorem}{solvesound}\label{cp:thm:solvesound}
  If $\Omega\Downarrow\sigma$, then $\sigma$ is a solution to $\Omega$.
\end{restatable}

This proof is shown in Appendix \ref{cp:apx:proof:solvesound} and is by
induction on $\Omega\Downarrow\sigma$. It relies on a lemma showing
that a solution to any $\Omega$ is also a solution for $\Omega'$ if
$\Omega'\steps\Omega$.

\subsubsection{Correctness of check removal}

\begin{figure}
\hrulefill\vspace{1ex}
  \begin{minipage}{1.0\linewidth}
    \begin{minipage}{.49\linewidth}      
    \[
\begin{array}{rcl}
  \rho & ::= & \cdot \mid \rho,x=v
\end{array}
\]
\boxed{\Sigma\vdash \rho:\Gamma}
\begin{gather*}
  \inferrule{ }{\Sigma\vdash \cdot:\emptyset}\qquad
  \inferrule{\emptyset;\Sigma\vdash v:T \\ \Sigma\vdash \rho:\Gamma}{\Sigma\vdash \rho,x=v:\Gamma,x:T}
\end{gather*}

    \end{minipage}
    \begin{minipage}{.49\linewidth}
\boxed{|d|=e}
\[
\begin{array}{rcl}
  |x| & = & x\\
  |n| & = & n\\
  |\fntx{x}{X}{Y}{d}| & = & \fnx{x}{|d|}\\
  |d_1~d_2| & = & |d_1|~|d_2| \\
  |\refxi{d}{X}| & = & \refx{|d|} \\ 
  |\derefx{d}| & = & \derefxp{|d|}{} \\
  |\mutx{d_1}{d_2}| & = & \mutxpINV{|d_1|}{|d_2|}{} \\  
  |d_1~+~d_2| & = & |d_1|~+~|d_2| \\
  |\checkx{d}{S}| & = & \checkx{|d|}{S} \\
\end{array}
\]  
    \end{minipage}

  \end{minipage}
  \caption{Rules for environments and for erasing to $\langd$ to $\lange$.}
  \label{cp:fig:erase}
\end{figure}

To show that our our approach preserves the semantics of the program,
we prove that any check that \emph{would} be removed by the
translation process (as given in Figure \ref{cp:fig:dtoe}) cannot
fail if it remained in the program. Given some $\langd$ program $d$,
suppose that the constraint generation process from Figure
\ref{cp:fig:constrgen} gives $d$ the type $A$ (which may be a
variable) and generates the constraints $\Omega$, and assume that
$\Omega$ is solved by some $\sigma$. We can translate $d$ directly
into $\lange$ by simply erasing its type annotations, \emph{without}
using $\sigma$ to perform the syntax-directed optimization process from
Figure \ref{cp:fig:dtoe}. This erased program $|d|$, generated using
the erasure rules defined in Figure \ref{cp:fig:erase}, contains all
the checks originally present in $d$. Suppose that $|d|$ evaluates to
a value using the small-step dynamic semantics for $\lange$ defined in
Figure \ref{cp:fig:dynsem}, and is then checked against the type tag
of its solution, $\lowlty{\sigma A}$. We prove that this check will
always succeed at runtime. This is the same criterion used to actually
remove checks in Figure \ref{cp:fig:dtoe}, so by showing that the
checks that would be removed are redundant, we verify that they can be
removed.

\begin{figure}
  \begin{minipage}{1.0\linewidth}
    
\[
\begin{array}{rcl}
  \varsigma & ::= & \cfg{e}{\mu} \mid \mathtt{fail} \\
  v & ::= & a \mid n \mid \fnx{x}{e} \\
  \mu & ::= & \cdot \mid \mu[a:=v]\\
  E & ::= & \Box \mid E + e \mid v + E \mid (E~e) \mid (v~E) \mid \letx{x}{E}{e} \mid \refx{E} \mid \derefxp{E}{} \mid  \mutxpINV{E}{e}{} \mid  \mutxpINV{v}{E}{} \mid \checkx{E}{S}
\end{array}
\]
\boxed{\cfg{e}{\mu}\steps\varsigma}
\[
\begin{array}{lrcll}
  \textsc{ECheck} & \cfg{\checkx{v}{S}}{\mu} & \steps & \cfg{v}{\mu} & \text{if }\hastypenb{v}{S}\\
  \textsc{ECheckFail} & \cfg{\checkx{v}{S}}{\mu} & \steps & \mathtt{fail} & \text{if }\neg\hastypenb{v}{S}\\
  \textsc{EFail} & \cfg{\mathtt{fail}}{\mu} & \steps & \mathtt{fail} \\[1ex]
  \textsc{ERef} & \cfg{\refx{v}}{\mu} & \steps & \cfg{a}{\mu[a:=v]} & \text{where }a\text{ fresh}\\
  \textsc{EDeref} & \cfg{\derefxp{a}{}}{\mu} & \steps & \cfg{v}{\mu} & \text{where }\mu(a)=v\\
  \textsc{EUpdt} & \cfg{\mutxpINV{a}{v}{}}{\mu} & \steps & \cfg{0}{\mu[a:=v]} & \text{where }\mu(a)=v'\\[1ex]
  \textsc{EApp} & \cfg{((\fnx{x}{e})~v)}{\mu} & \steps & \cfg{e[x/v]}{\mu}\\
  \textsc{ELet} & \cfg{\letx{x}{v}{e}}{\mu} & \steps & \cfg{e[x/v]}{\mu}\\[1ex]
  \textsc{EAdd} & \cfg{n_1~+~n_2}{\mu} & \steps & \cfg{n'}{\mu} & \text{where }n_1+n_2=n'
\end{array}
\]
\begin{gather*}
\inferrule{\cfg{e}{\mu}\steps\cfg{e'}{\mu'}}
  {\cfg{E[e]}{\mu}\longmapsto\cfg{E[e']}{\mu'}}\qquad
\inferrule{\cfg{e}{\mu}\steps\mathtt{fail}}
  {\cfg{E[e]}{\mu}\longmapsto\mathtt{fail}}
\end{gather*}

\boxed{\hastypenb{v}{S}}
\begin{gather*}
\inferrule{ }{\hastypenb{v}{\dyn}}\qquad
\inferrule{ }{\hastypenb{n}{\intt}}\qquad
\inferrule{ }{\hastypenb{\fnx{x}{e}}{\to}}\qquad
\inferrule{ }{\hastypenb{a}{\refs}}
\end{gather*}
  \end{minipage}
  \caption{Dynamic semantics for $\lange$.}
  \label{cp:fig:dynsem}
\end{figure}

In this theorem, $d$ may be an open term---$\rho(|d|)$ substitutes
values from an environment $\rho$ (defined and given a typing judgment in Figure
\ref{cp:fig:erase}) into $|d|$.

\begin{restatable}{theorem}{correct}\label{cp:thm:correct}
  Suppose $\Gamma\vdash d:A;\Omega$ and $\sigma$ is a solution to
  $\Omega$ and $\Sigma\vdash \rho:\sigma\Gamma$ and $\Sigma\vdash\mu$ and
  $\cfg{\rho(|d|)}{\mu}\steps^*\cfg{v}{\mu'}$. If
  $\lowlty{\sigma A} \preceq S$, then
  $\cfg{\checkx{v}{S}}{\mu'}\not\steps\mathtt{fail}$.
\end{restatable}
This proof is shown in Appendix \ref{cp:apx:proof:correct}. It relies on
a preservation lemma and a lemma showing that, from the canonical
forms lemma on $\lange$, any value $v$ with type $\sigma A$ will
necessarily correspond to that type.

\section{Performance of Optimized Transient Gradual Typing}\label{cp:sec:optperf}

In this section we apply the above approach to optimize Reticulated
Python, and summarize Reticulated's performance characteristics when
running on CPython. This required expanding the type system and
constraint generation of Section~\ref{cp:sec:infer} to handle Python
features such as objects and classes, data structures such as lists
and dictionaries, bound and unbound methods, and variadic
functions. In addition, the constraint generation includes polymorphic
functions and intersection types, although they only occur in the
pre-loaded type definitions for Python libraries and builtin
functions.\footnote{Link to repository removed for the purpose of
  double-blind review.}

Figure \ref{cp:fig:cpythonperf}, as previously discussed, shows the
execution time and overheads for optimized configurations from the
typing lattice of each benchmark when executed using
CPython. Optimized configurations are shown as blue
triangles. Performance is dramatically improved compared to the
unoptimized (red circle) configurations. In several benchmarks, the
overhead is entirely eliminated because the optimization is able to
delete nearly every check in every configuration from the typing
lattice. For these results, we make the closed world assumption
for the benchmarks.

The \texttt{suffixtree} benchmark, which \citet{takikawa:2016gtdead}
tested in Typed Racket and found overheads of up to 105$\xtimes$, also
performs worse than other benchmarks in Reticulated Python after
optimization: although it has negligible overhead in configurations
with high type weight, some configurations with intermediate type
weight still have an overhead of over 2$\xtimes$. This is because
\texttt{suffixtree}, unlike the other benchmarks, cannot be fully
statically typed using Reticulated Python's type system: the version
with the highest type weight, used to generate the other samples,
still includes a function, \li{node_follow}, whose return type cannot
be given a static type by the Reticulated Python type system or the
inference process, because it can return either functions or booleans.
%
%
%
Because \li{node_follow}'s return values flow into statically typed
code, checks will be needed.  In most configurations, checks occur
when the result of \li{node_follow} passes into a statically typed
function, and checks can be removed from the rest of the
program. However, some configurations allow the dynamic values to flow
further into the program before encountering a check; the dynamicity
has ``infected'' more of the program and degraded performance.

\begin{figure}
  \centering
  \begin{minipage}{1.0\linewidth}
    \centering
    \begin{tabular}{r|rrr|rrr}
         & \multicolumn{3}{l|}{Unopt. overheads} & \multicolumn{3}{l}{Opt. overheads}  \\
        Benchmark  & Mean & Max & Static & Mean & Max & Static \\\hhline{|-|------|} 
    \texttt{pystone} & 2.39$\xtimes$ & \color{tablered}3.82$\xtimes$ & \color{tablered}3.72$\xtimes$ & \cellcolor{tableblue}1.01$\xtimes$ & \cellcolor{tableblue}1.03$\xtimes$ & \cellcolor{tableblue}1.00$\xtimes$ \\
    \texttt{chaos} & 1.84$\xtimes$ & \color{tablered}3.22$\xtimes$ & \color{tablered}3.15$\xtimes$ & \cellcolor{tableblue}1.10$\xtimes$ & 1.46$\xtimes$ & \cellcolor{tableblue}0.98$\xtimes$ \\
    \texttt{snake} & 2.31$\xtimes$ & \color{tablered}3.79$\xtimes$ & \color{tablered}3.70$\xtimes$ & \cellcolor{tableblue}1.04$\xtimes$ & 1.49$\xtimes$ & \cellcolor{tableblue}0.98$\xtimes$ \\
    \texttt{go} & 2.32$\xtimes$ & \color{tablered}4.87$\xtimes$ & \color{tablered}4.56$\xtimes$ & \cellcolor{tableblue}1.02$\xtimes$ & \cellcolor{tableblue}1.14$\xtimes$ & \cellcolor{tableblue}1.02$\xtimes$ \\
    \texttt{meteor\_contest} & 1.82$\xtimes$ & \color{tablered}3.20$\xtimes$ & \color{tablered}3.05$\xtimes$ & \cellcolor{tableblue}1.00$\xtimes$ & \cellcolor{tableblue}1.10$\xtimes$ & \cellcolor{tableblue}1.00$\xtimes$ \\
    \texttt{suffixtree} & 2.49$\xtimes$ & \color{tablered}4.48$\xtimes$ & \color{tablered}4.34$\xtimes$ & 1.27$\xtimes$ & 2.38$\xtimes$ & \cellcolor{tableblue}0.98$\xtimes$ \\
    \texttt{float} & 2.04$\xtimes$ & \color{tablered}3.53$\xtimes$ & \color{tablered}3.53$\xtimes$ & \cellcolor{tableblue}1.01$\xtimes$ & \cellcolor{tableblue}1.12$\xtimes$ & \cellcolor{tableblue}1.00$\xtimes$ \\
    \texttt{nbody} & 2.70$\xtimes$ & \color{tablered}5.95$\xtimes$ & \color{tablered}5.19$\xtimes$ & \cellcolor{tableblue}0.98$\xtimes$ & 1.27$\xtimes$ & \cellcolor{tableblue}0.95$\xtimes$ \\
    \texttt{sieve} & 1.52$\xtimes$ & 2.17$\xtimes$ & 2.09$\xtimes$ & \cellcolor{tableblue}1.01$\xtimes$ & \cellcolor{tableblue}1.06$\xtimes$ & \cellcolor{tableblue}1.01$\xtimes$ \\
    \texttt{spectral\_norm} & 2.19$\xtimes$ & \color{tablered}3.33$\xtimes$ & 2.98$\xtimes$ & \cellcolor{tableblue}1.00$\xtimes$ & \cellcolor{tableblue}1.20$\xtimes$ & \cellcolor{tableblue}0.99$\xtimes$ \\
    \hline Average & 2.21$\xtimes$ & \color{tablered}5.95$\xtimes$ & \color{tablered}3.63$\xtimes$ & \cellcolor{tableblue}1.06$\xtimes$ & 2.38$\xtimes$ & \cellcolor{tableblue}0.99$\xtimes$ \\
\end{tabular}
  \end{minipage}
  \caption{Performance details for Reticulated Python benchmarks under CPython. {\color{tablered}Red text} indicates \emph{worse than} 3-deliverability and \colorbox{tableblue}{blue highlighting} indicates 1.25-deliverability.}
  \label{cp:fig:perfdetailscpy}
\end{figure}

Figure \ref{cp:fig:perfdetailscpy} summarizes the performance results
for both optimized and unoptimized Reticulated Python under
CPython. This table shows the mean overhead, maximum overhead, and
overhead for the fully typed (or nearest to fully typed)
configurations for each benchmark with the original unoptimized
approach and with our optimization. Without optimization, the average
of all configurations from each benchmark meets the cutoff of
3-deliverability as suggested by \citet{takikawa:2016gtdead}, meaning
that their overheads were 3$\xtimes$ or less, but most static cases in
the benchmarks are not 3-deliverable. Our optimization dramatically
improves the results: not only are all configurations 3-deliverable,
but all fully typed configurations pass the stricter cutoff of
\emph{1.25-deliverability}, with slowdowns of 25\% or less over the
original untyped program. In all, we found that with the optimization,
the average overhead across the typing lattices of all benchmarks was
only 6\%.

\section{Performance on PyPy, a Tracing JIT}\label{cp:sec:pypyperf}

The analysis described in Section \ref{cp:sec:infer} and evaluated in
Section \ref{cp:sec:optperf} removed checks when they can be statically
guaranteed to never fail, which suggests that a dynamic analysis could
accomplish the same task.  We examined this question using a tracing
JIT implementation of Python 3, PyPy~\cite{Bolz:2009kx}.

\begin{figure*}
  \centering
   \begin{tabular}[t]{ll}
     Benchmark: \texttt{pystone}  & Benchmark: \texttt{chaos}  \\
     \small (206 SLoC, 532 configurations) &  \small (184 SLoC, 982 configurations) \\
     \insgraphw{pystone_pypy.py_pypy3} & \insgraphw{bm_chaos.py_pypy3} \\
     Benchmark: \texttt{snake}& Benchmark: \texttt{go}    \\
      \small (112 SLoC, 662 configurations) &  \small (394 SLoC, 1001 configurations) \\
     \insgraphw{snake.py_pypy3} & \insgraphw{bm_go.py_pypy3} \\
     Benchmark: \texttt{meteor\_contest} & Benchmark: \texttt{suffixtree} \\
     \small (106 SLoC, 972 configurations) &  \small (338 SLoC, 1001 configurations) \\
     \insgraphw{bm_meteor_contest.noaliases.py_pypy3} & \insgraphw{suffixtree.py_pypy3} \\
     Benchmark: \texttt{float}  & Benchmark: \texttt{nbody} \\
      \small (48 SLoC, 162 configurations) & \small (74 SLoC, 892 configurations) \\
     \insgraphw{bm_float.py_pypy3} & \insgraphw{bm_nbody.noaliases.py_pypy3} \\
     Benchmark: \texttt{sieve} & Benchmark: \texttt{spectral\_norm}  \\
      \small (50 SLoC, 282 configurations) &  \small (44 SLoC, 312 configurations) \\
     \insgraphw{sieve.py_pypy3}  & \insgraphw{bm_spectral_norm.py_pypy3} 
   \end{tabular}
  \Put(-245,-1100){\inslegend}
  \caption{Typing lattices for Reticulated Python benchmarks under PyPy.}
  \label{cp:fig:pypyperf}
\end{figure*}

\begin{figure}
  \centering
  \begin{minipage}{1.0\linewidth}
    \centering
\begin{tabular}{r|rrr|rrr}
        & \multicolumn{3}{l|}{Unopt. overheads} & \multicolumn{3}{l}{Opt. overheads}  \\
        Benchmark  & Mean & Max & Static & Mean & Max & Static \\\hhline{|-|------} 
    \texttt{pystone} & \cellcolor{tableblue}1.24$\xtimes$ & 1.65$\xtimes$ & 1.62$\xtimes$ & \cellcolor{tableblue}1.01$\xtimes$ & \cellcolor{tableblue}1.04$\xtimes$ & \cellcolor{tableblue}1.00$\xtimes$ \\
    \texttt{chaos} & \cellcolor{tableblue}1.18$\xtimes$ & 2.61$\xtimes$ & 2.31$\xtimes$ & \cellcolor{tableblue}1.02$\xtimes$ & \cellcolor{tableblue}1.12$\xtimes$ & \cellcolor{tableblue}1.00$\xtimes$ \\
    \texttt{snake} & \cellcolor{tableblue}1.25$\xtimes$ & 2.28$\xtimes$ & 2.24$\xtimes$ & \cellcolor{tableblue}1.02$\xtimes$ & \cellcolor{tableblue}1.23$\xtimes$ & \cellcolor{tableblue}1.00$\xtimes$ \\
    \texttt{go} & \cellcolor{tableblue}0.61$\xtimes$ & 1.29$\xtimes$ & \cellcolor{tableblue}0.57$\xtimes$ & \cellcolor{tableblue}0.92$\xtimes$ & \cellcolor{tableblue}1.19$\xtimes$ & \cellcolor{tableblue}1.16$\xtimes$ \\
    \texttt{meteor\_contest} & \cellcolor{tableblue}1.02$\xtimes$ & \cellcolor{tableblue}1.12$\xtimes$ & \cellcolor{tableblue}1.01$\xtimes$ & \cellcolor{tableblue}1.01$\xtimes$ & \cellcolor{tableblue}1.05$\xtimes$ & \cellcolor{tableblue}1.05$\xtimes$ \\
    \texttt{suffixtree} & \cellcolor{tableblue}1.06$\xtimes$ & 1.31$\xtimes$ & \cellcolor{tableblue}1.06$\xtimes$ & \cellcolor{tableblue}1.00$\xtimes$ & \cellcolor{tableblue}1.16$\xtimes$ & \cellcolor{tableblue}0.96$\xtimes$ \\
    \texttt{float} & \cellcolor{tableblue}0.98$\xtimes$ & \cellcolor{tableblue}1.05$\xtimes$ & \cellcolor{tableblue}0.98$\xtimes$ & \cellcolor{tableblue}0.97$\xtimes$ & \cellcolor{tableblue}1.00$\xtimes$ & \cellcolor{tableblue}0.95$\xtimes$ \\
    \texttt{nbody} & \cellcolor{tableblue}1.00$\xtimes$ & \cellcolor{tableblue}1.05$\xtimes$ & \cellcolor{tableblue}1.01$\xtimes$ & \cellcolor{tableblue}1.00$\xtimes$ & \cellcolor{tableblue}1.02$\xtimes$ & \cellcolor{tableblue}0.99$\xtimes$ \\
    \texttt{sieve} & \cellcolor{tableblue}1.05$\xtimes$ & \cellcolor{tableblue}1.18$\xtimes$ & \cellcolor{tableblue}1.06$\xtimes$ & \cellcolor{tableblue}1.05$\xtimes$ & \cellcolor{tableblue}1.11$\xtimes$ & \cellcolor{tableblue}1.04$\xtimes$ \\
    \texttt{spectral\_norm} & \cellcolor{tableblue}1.08$\xtimes$ & \cellcolor{tableblue}1.15$\xtimes$ & \cellcolor{tableblue}1.12$\xtimes$ & \cellcolor{tableblue}1.01$\xtimes$ & \cellcolor{tableblue}1.03$\xtimes$ & \cellcolor{tableblue}1.01$\xtimes$ \\
    \hline Average & \cellcolor{tableblue}1.03$\xtimes$ & 2.61$\xtimes$ & 1.30$\xtimes$ & \cellcolor{tableblue}1.00$\xtimes$ & \cellcolor{tableblue}1.23$\xtimes$ & \cellcolor{tableblue}1.02$\xtimes$ \\
\end{tabular}

  \end{minipage}
  \caption{Performance details for Reticulated Python benchmarks under PyPy.}
  \label{cp:fig:perfdetailspypy}
\end{figure}

Figure \ref{cp:fig:pypyperf} shows the typing lattices for our benchmarks
when run on PyPy, both with the standard and optimized
transient approaches. Figure \ref{cp:fig:perfdetailspypy}
summarizes these results. Without static check removal, PyPy's
performance varies but is almost always better than CPython's relative
to the baseline performance of the untyped benchmarks. PyPy performs
on average 4.3$\xtimes$ better than CPython on its own when comparing
untyped benchmarks, so when examining the performance cost of
transient gradual typing, we always compare to the untyped execution
time for the implementation. In some benchmarks
(\texttt{float}, \texttt{nbody}, \texttt{sieve},
\texttt{spectral\_norm}) configurations across the typing lattice
perform almost as well as the untyped version of the program without
any static optimizations. In other cases (\texttt{pystone},
\texttt{chaos}, \texttt{snake}) performance still degrades as types
are added, although to a lesser degree than with CPython. In one case,
\texttt{go}, transient gradual typing significantly \emph{improves}
performance on average, suggesting that transient checks may cause the
JIT compiler to activate earlier or with better traces than
it would otherwise. All configurations were 3-deliverable and
most were 1.25-deliverable even without optimization, with an overall
mean slowdown of 3\%, though some configurations have significant
overhead (up to 2.61$\xtimes$).

The best results were obtained by combining our optimization and the
PyPy JIT. When run with optimized transient gradual typing, the average overhead was
0\% over the baseline, and every configuration fell within the
1.25x-deliverable range. The result is an approach that appears
practical for real-world applications. In future work, we
will further examine the interactions between JITs and the transient
approach, for example, to better understand the speedups seen in \texttt{go}.

Reticulated Python is not the only approach to gradual typing that can
use a tracing JIT---\citet{bauman:2017pycket} showed that Pycket, a
language based on Typed Racket but implemented in RPython (a language
for automatically generating tracing JITs that PyPy itself is
implemented in) \cite{Bolz:2009kx}, performed better than standard
Typed Racket but still displayed worst-case overheads of up to
10.5$\xtimes$. Pycket uses the guarded strategy, and while the tracing
JIT was successful at reducing the overhead of that approach, our
results suggest that the transient enforcement strategy is especially
suited to use with tracing JITs.

\section{Related Work}\label{cp:sec:relwork}

\paragraph{Gradual type systems.} In our work, checks are removed when
we infer that the types they verify can be trusted. This is suggestive
of the \emph{strict confined gradual typing} of
\citet{Allende:2014aa}, which allows programmers to restrict types
such that their inhabitants must never have passed through dynamic
code (indicated by ${\downarrow} T$) or will never pass through
dynamic code in the future (indicated by ${\uparrow} T$). If a term
has type ${\downarrow} T$, the type system verifies that value
originated from an introduction form for values of that type. We
suspect that this information could be used to remove transient checks
or perform other optimizations.



Our approach is also related to \emph{concrete types}
\cite{Wrigstad:2010fk,richards:2015strongscript}, which are inhabited
only by non-proxied values (using the guarded cast strategy). Concrete types have limited interoperability with dynamic
types and with \emph{like types}, which are the types of values which
may or may not be proxied. While the formulation of concrete vs.\ like
types given by \citet{Wrigstad:2010fk} is appropriate for the guarded
semantics, splitting types into those which can be statically relied
on and those which need runtime verification is similar to our
approach.

\paragraph{Performance analysis for gradual typing.}

\citet{takikawa:2016gtdead} performed the most detailed analysis of
the effect of gradual typing on efficiency to date, in the context of
Typed Racket, a mature gradually typed language. They introduced the
typing lattice to evaluate gradually typed languages and found that
the overhead of Typed Racket in some configurations was high enough to
threaten the viability of gradual typing; our paper finds a more
optimistic result in a different context and with a very different
approach to gradual typing. \citet{greenman:2017retic} analyzed
Reticulated Python programs across the typing lattice using a
sampling-based methodology similar to ours, and reported performance
results similar to those that we report in Section
\ref{cp:sec:nooptperf}.

\citet{muehlboeck:2017nom} designed a gradually typed language with
nominal object types and without structural types or functions and
examined its performance across the lattices of several benchmarks,
finding negligible overhead.

\citet{Rastogi:2012fk} analyzed the performance of ActionScript, a
gradually typed language, by taking fully-typed benchmarks and
removing all type annotations except for those on interfaces. They
found that this resulted in significant overhead compared to the
original fully-typed versions. They then used a inference-based
optimization (discussed below) to reconstruct type annotations; this
fully recovered performance in most benchmarks.

\paragraph{Type inference.} 

\citet{aiken:1993inclusion} generalize equational constraints used in
standard Hindley-Milner type inference
\cite{Hindley:1969ri,Milner:1978kh} to subset constraints
$T \subseteq S$. In this work types are interpreted as subsets of a
semantic domain of values and the subset relation is equivalent to
subtyping. Their type system includes union and intersection types and
generates systems of constraints that may be simplified by rewriting,
similar to the rules for constraint simplification shown in Figure
\ref{cp:fig:constsimpl}. \citet{Aiken:1995ly} specialize this approach
to determine where coercions are needed in dynamically typed programs
with tagging and untagging in order to optimize Scheme programs. This
approach is similar to ours, except that while our type system is much
simpler and does not include untagged types and values, transient
checks must generate check constraints. Check constraints are similar
to the \emph{conditional constraints} of \citet{Pottier:2000qi}, but
rather than allowing arbitrary implications, check constraints reason
exclusively about type tags.

Soft type systems \cite{Cartwright:1991ng, Aiken:1994fk} use a similar
approach to integrating static and dynamic typing with type inference,
but with a different goal: to allow programs written in dynamically
typed languages to leverage the benefits of static typing. Soft type
systems use type inference to determine where runtime type checks must
be inserted into dynamically typed programs so they are well-typed in
some static type system. \citet{Cartwright:1991ng} reconstruct types
for their language and determine where checks are needed using
circular unification on equality constraints \cite{weis:1987caml} over
an encoding of the supertypes of each type in the program which
factors out subtyping \cite{remy:1989recvar}. \citet{Aiken:1994fk}
adapt subset constraint generation to soft typing with similar
results. The runtime checks, or \emph{narrowers}, used in soft typing
are similar to transient checks: a narrower checks that a value
corresponds to a specific type constructor. The value is returned
unmodified if so, and an error is raised if not. Narrowers serve a
different purpose than checks, however: narrowers let programs written
in dynamically typed languages pass static typechecking for
optimization, while checks enforce the programmer's claims about the
types of terms.

The constraint system used by Flow's type inference system, described
for the \textsc{Flowcore} calculus by \citet{chaudhuri:2017flow},
reasons about tags and tagchecks in order to find sound typings for
JavaScript programs, though it relies on and trusts type annotations
at module boundaries. Likewise \citet{guha:2011flow} relate type tags
and types similarly to our approach and their tagchecks are equivalent
to transient checks, but their goal is to insert the tagchecks needed
to type function bodies with respect to their (trusted) annotations,
rather than to detect violations of those annotations by untrusted
callers.

\citet{Rastogi:2012fk} present an approach to optimizing gradually
typed programs by inferring more precise types for program
annotations, while preserving the program's semantics. 
Our approach to ensuring interoperability is based on theirs: visible
variables in the overall type of a program and their co- or
contravariant positions encodes escape analysis, and solutions that
can soundly interoperate with arbitrary code can be generated by
adding constraints on these variables. Our constraints, however, are
based on subtyping rather than on consistency, because our constraints
arise from checks and from elimination forms rather than from casts,
which are appropriate for guarded gradual typing rather than
transient.

\section{Conclusions}\label{cp:sec:conclusion}

Gradual typing allows programmers to combine static and dynamic typing
in the same language, but allowing interaction between static and
dynamic code while ensuring soundness incurs a runtime
cost. \citet{takikawa:2016gtdead} found that this cost can be serious
obstacle to the practical use of Typed Racket, a popular gradually
typed language which uses the guarded approach to gradual typing. In
this paper, we perform a detailed performance analysis of the
transient approach in Reticulated Python by analyzing configurations from the
typing lattices of \benchcount benchmarks. We show that, in
combination with the standard Python interpreter,
performance under the transient design degrades as programs evolve from dynamic to static types,
to a maximum of 6$\xtimes$ slowdown compared to equivalent untyped
Python programs. To reduce this overhead, we use an static type inference
algorithm based on subtyping and check constraints to optimize
programs after transient checks are inserted. This allows many
redundant checks to be removed. We evaluated the performance of this
approach with an implementation in Reticulated Python and found that
performance across the typing lattices of our benchmarks improved
to nearly the efficiency of the original programs---a very promising
result for the practicality of gradual typing. Finally, we re-analyzed
our Reticulated Python benchmarks using PyPy, a tracing JIT, as a
backend, and found that it produced good performance even without type
inference, and that it displayed a no overhead when used in
combination with our static  optimization.



{
\bibliographystyle{ACM-Reference-Format}
\bibliography{bib.bib}
}

 \appendix
 \newpage

\section{Appendix: Semantics}\label{cp:apx:calculus}
Figure \ref{cp:fig:apx:stysys} shows the static type system for
$\langs$. Figure \ref{cp:fig:apx:dtysys} shows the shallow static type
system for $\langd$. Figure \ref{cp:fig:apx:approx} relates surface
types $U$ with constraint types $A$.

Figure \ref{cp:fig:apx:lange} shows the syntax for $\lange$, the final
target language of translation. Figure \ref{cp:fig:apx:langety} defines the
Curry-style type system for $\lange$.

Figure \ref{cp:fig:apx:dynsem} shows the dynamic semantics of $\lange$,
while utility relations are shown in Figure
\ref{cp:fig:apx:moresem}. Figure \ref{cp:fig:apx:ht} relates weaker heap
types with stronger ones.

Figure \ref{cp:fig:apx:erase} shows rules for translating $\langd$ to
$\lange$ directly by removing type annotations (without removing
checks). It also shows definition and typing
rules for value environments.

\begin{figure}
  \begin{minipage}{1.0\linewidth}    
\boxed{\Gamma\vdash s:U}
\begin{gather*}
\inferrule[SAbs]{\Gamma,x{:}U_1\vdash s:U_2' \\ U_2' \sim U_2}
  {\Gamma\vdash \fntx{x}{U_1}{U_2}{s}:\funt{U_1}{U_2}} \qquad
\inferrule[SApp]{\Gamma\vdash s_1:U \\\\ \matchrel{U}{\funt{U_1}{U_2}} \\\\
           \Gamma\vdash s_2:U_1' \\ U_1' \sim U_1}
  {\Gamma\vdash s_1~s_2:U_2}\\[1ex]
\inferrule[SRef]{\Gamma\vdash s:U_2 \\ U_2\sim U_1}{\Gamma\vdash \refxi{s}{U_1}:\reft{U_1}}\qquad
\inferrule[SDeref]{\Gamma\vdash s:U \\ \matchrel{U}{\reft{U'}}}
  {\Gamma\vdash\derefx{s}:U'}\\[1ex]
\inferrule[SUpdt]{\Gamma\vdash s_1:U \\ \matchrel{U}{\reft{U'}} \\
           \Gamma\vdash s_2:U'' \\ U'' \sim U'}
  {\Gamma\vdash\mutx{s_1}{s_2}:\intt}\\[1ex]
\inferrule[SAdd]{\Gamma\vdash s_1:U_1 \\ U_1\sim \intt \\\\
           \Gamma\vdash s_2:U_2 \\ U_2\sim \intt}
  {\Gamma\vdash s_1~+~s_2:\intt}\qquad
\inferrule[SVar]{\Gamma(x)=U}{\Gamma\vdash x:U} \qquad 
\inferrule[SInt]{ }{\Gamma\vdash n:\intt}
\end{gather*}
  \end{minipage}
  \caption{Type system for $\langs$.}
  \label{cp:fig:apx:stysys}
\end{figure}

\begin{figure}
  \begin{minipage}{1.0\linewidth}    
\boxed{\Gamma\vdash d:S}
\begin{gather*}
\inferrule[PAbs]{\Gamma,x{:}\dyn\vdash d:\dyn}
  {\Gamma\vdash \fntx{x}{X}{Y}{d}:\funs} \qquad
\inferrule[PApp]{\Gamma\vdash d_1:\funs \\
           \Gamma\vdash d_2:\dyn}
  {\Gamma\vdash d_1~d_2:\dyn}\\[1ex]
\inferrule[PRef]{\Gamma\vdash d:\dyn}{\Gamma\vdash \refxi{d}{X}:\refs}\qquad
\inferrule[PDeref]{\Gamma\vdash d:\refs}
  {\Gamma\vdash\derefx{d}:\dyn}\\[1ex]
\inferrule[PUpdt]{\Gamma\vdash d_1:\refs \\ 
           \Gamma\vdash d_2:\dyn}
  {\Gamma\vdash\mutx{d_1}{d_2}:\intt}\\[1ex]
\inferrule[PAdd]{\Gamma\vdash d_1:\intt \\
           \Gamma\vdash d_2:\intt}
  {\Gamma\vdash d_1~+~d_2:\intt}\qquad
\inferrule[PVar]{\Gamma(x)=S}{\Gamma\vdash x:S} \qquad 
\inferrule[PInt]{ }{\Gamma\vdash n:\intt}
\end{gather*}
  \end{minipage}
  \caption{Simple type system for $\langd$.}
  \label{cp:fig:apx:dtysys}
\end{figure}

\begin{figure}
  \begin{minipage}{1\linewidth}
    \boxed{\R{U}{A}}
\begin{gather*}
  \inferrule{ }{\R{U}{V}}\qquad
  \inferrule{ }{\R{U_1\to U_2}{\funt{V_1}{V_2}}}\\[1ex]
  \inferrule{ }{\R{\reft{U}}{\reft{V}}}\qquad
  \inferrule{ }{\R{\intt}{\intt}}
\end{gather*}
  \end{minipage}
\caption{Relating $U$ and $A$.}
\label{cp:fig:apx:approx}
\end{figure}

\begin{figure}
\hrulefill\vspace{1ex}

\[
\begin{array}{rcl}
  a & \in & \text{addresses}\\
  e & ::= & a \mid x \mid n \mid e +^w e \mid \fnx{x}{e} \mid (e~e)^w \mid  \letx{x}{e}{e} \mid \refx{e} \mid \derefxp{e}{w} \mid \mutxpINV{e}{e}{w} \mid \checkx{e}{S} \mid  \mathtt{fail}\\
  w & ::= & \tyorig \mid\unorig\\
  \Sigma & ::= & \cdot \mid \Sigma,a{:}T 
\end{array}
\]
  \caption{Syntax for $\lange$.}
  \label{cp:fig:apx:lange}
\end{figure}

\begin{figure*}
  \begin{minipage}{1.0\linewidth}
\boxed{\GS e : T}
\begin{gather*}
\inferrule[TSubsump]{\GS e:T_1 \\ \vdash T_1 <: T_2}
 {\GS e:T_2}\qquad
\inferrule[TAbs]{\Gamma,x{:}T_1;\Sigma\vdash e : T_2}
 {\Gamma\vdash \fnx{x}{e} : \funt{T_1}{T_2}}\qquad
\inferrule[TLet]{\GS e_1 : T_1 \\ \Gamma,x{:}T_1;\Sigma\vdash e_2 : T_2}
 {\Gamma\vdash\letx{x}{e_1}{e_2} : T_2}\\[1ex]
\inferrule[TApp]{\GS e_1 : \funt{T_1}{T_2} \\
   \Gamma\vdash e_2 : T_1}
 {\Gamma\vdash (e_1~e_2)^\tyorig : T_2}\qquad
\inferrule[TAppOW]{\GS e_1 : \dyn \\
   \Gamma\vdash e_2 : \dyn}
 {\Gamma\vdash (e_1~e_2)^\unorig : \dyn}\\[1ex]
\inferrule[TRef]{\GS e : T}{\GS\refx{e} :\reft{T}}\qquad
\inferrule[TDeref]{\GS e : \reft{T}}
 {\GS\derefxp{e}{\tyorig} : T}\\[1ex]
\inferrule[TDerefOW]{\GS e : \dyn}
 {\GS\derefxp{e}{\unorig} : \dyn}\\[1ex]
\inferrule[TUpdt]{\GS e_1 : \reft{T} \\
   \GS e_2 : T}
 {\GS\mutxpINV{e_1}{e_2}{\tyorig} :\intt}\qquad
\inferrule[TUpdtOW]{\GS e_1 : \dyn \\
   \GS e_2 : \dyn}
 {\GS\mutxpINV{e_1}{e_2}{\unorig} :\intt}\qquad
\inferrule[TVar]{\Gamma(x)=T}{\GS x : T} \\[1ex] 
\inferrule[TAddr]{\Sigma(a)=T}{\GS a : \reft{T}} \qquad 
\inferrule[TInt]{ }{\GS n :\intt}\\[1ex]
\inferrule[TAdd]{\GS e_1 : \intt \\ 
   \GS e_2 : \intt }
 {\GS e_1~+^\tyorig~e_2 :\intt}\qquad
\inferrule[TAddOW]{\GS e_1 : \dyn \\ 
   \GS e_2 : \dyn }
 {\GS e_1~+^\unorig~e_2 :\intt}\\[1ex]
\inferrule[TCheck]{\GS e : \dyn}
 {\GS \checkx{e}{S} : \uplty{S}}\qquad
\inferrule[TFail]{ }
 {\GS \mathtt{fail} : T}\qquad
\inferrule[TCheckFail]{\GS e : T \\ T \neq \dyn \\ \lowlty{T} \not\preceq S}
 {\GS \checkx{e}{S} : T'}\\[1ex]
\inferrule[TCheckRedundant]{\GS e : T \\ \lowlty{T} \preceq S}
 {\GS \checkx{e}{S} : T}
\end{gather*}
  \end{minipage}
  \caption{Type system for $\lange$.}
  \label{cp:fig:apx:langety}
\end{figure*}

\begin{figure}
  \begin{minipage}{1.0\linewidth}
    
\[
\begin{array}{rcl}
  \varsigma & ::= & \cfg{e}{\mu} \mid \mathtt{fail} \\
  v & ::= & a \mid n \mid \fnx{x}{e} \\
  \mu & ::= & \cdot \mid \mu[a:=v]\\
  E & ::= & \Box \mid E +^w e \mid v +^w E \mid (E~e)^w \mid (v~E)^w \mid \letx{x}{E}{e} \mid \refx{E} \mid \derefxp{E}{w} \mid  \mutxpINV{E}{e}{w} \mid \\&& \mutxpINV{v}{E}{w} \mid \checkx{E}{S}
\end{array}
\]
\boxed{\cfg{e}{\mu}\steps\varsigma}
\[
\begin{array}{lrcll}
  \textsc{ECheck} & \cfg{\checkx{v}{S}}{\mu} & \steps & \cfg{v}{\mu} & \text{if }\hastypenb{v}{S}\\
  \textsc{ECheckFail} & \cfg{\checkx{v}{S}}{\mu} & \steps & \mathtt{fail} & \text{if }\neg\hastypenb{v}{S}\\
  \textsc{EFail} & \cfg{\mathtt{fail}}{\mu} & \steps & \mathtt{fail} \\[1ex]
  \textsc{ERef} & \cfg{\refx{v}}{\mu} & \steps & \cfg{a}{\mu[a:=v]} & \text{where }a\text{ fresh}\\
  \textsc{EDeref} & \cfg{\derefxp{a}{w}}{\mu} & \steps & \cfg{v}{\mu} & \text{where }\mu(a)=v\\
  \textsc{EUpdt} & \cfg{\mutxpINV{a}{v}{w}}{\mu} & \steps & \cfg{0}{\mu[a:=v]} & \text{where }\mu(a)=v'\\[1ex]
  \textsc{EApp} & \cfg{((\fnx{x}{e})~v)^w}{\mu} & \steps & \cfg{e[x/v]}{\mu}\\
  \textsc{ELet} & \cfg{\letx{x}{v}{e}}{\mu} & \steps & \cfg{e[x/v]}{\mu}\\[1ex]
  \textsc{EAdd} & \cfg{n_1~+^w~n_2}{\mu} & \steps & \cfg{n'}{\mu} & \text{where }n_1+n_2=n'
\end{array}
\]
\begin{gather*}
\inferrule{\cfg{e}{\mu}\steps\cfg{e'}{\mu'}}
  {\cfg{E[e]}{\mu}\longmapsto\cfg{E[e']}{\mu'}}\qquad
\inferrule{\cfg{e}{\mu}\steps\mathtt{fail}}
  {\cfg{E[e]}{\mu}\longmapsto\mathtt{fail}}
\end{gather*}

\boxed{\hastypenb{v}{S}}
\begin{gather*}
\inferrule{ }{\hastypenb{v}{\dyn}}\qquad
\inferrule{ }{\hastypenb{n}{\intt}}\qquad
\inferrule{ }{\hastypenb{\fnx{x}{e}}{\to}}\qquad
\inferrule{ }{\hastypenb{a}{\refs}}
\end{gather*}
  \end{minipage}
  \caption{Dynamic semantics for $\lange$.}
  \label{cp:fig:apx:dynsem}
\end{figure}

\begin{figure*}
  \begin{minipage}{1.0\linewidth}
    
\boxed{\stucknb{e}{\mu}{w}}
\begin{gather*}
\inferrule{ }{\stucknb{(n~v)^w}{\mu}{w}}\qquad
\inferrule{ }{\stucknb{(a~v)^w}{\mu}{w}}\\[1ex]
\inferrule{ }{\stucknb{a +^w v}{\mu}{w}}\qquad
\inferrule{ }{\stucknb{(\fnx{x}{e}) +^w v}{\mu}{w}}\\[1ex]
\inferrule{ }{\stucknb{n +^w a}{\mu}{w}}\qquad
\inferrule{ }{\stucknb{n +^w (\fnx{x}{e})}{\mu}{w}}\qquad
\inferrule{a \not\in\dom{\mu}}{\stucknb{\derefxp{a}{w}}{\mu}{w}}\qquad
\inferrule{ }{\stucknb{\derefxp{n}{w}}{\mu}{w}}\\[1ex]
\inferrule{ }{\stucknb{\derefxp{(\fnx{x}{e})}{w}}{\mu}{w}}\qquad
\inferrule{a\not\in\dom{\mu}}{\stucknb{\mutxpINV{a}{v}{w}}{\mu}{w}}\qquad
\inferrule{ }{\stucknb{\mutxpINV{n}{v}{w}}{\mu}{w}}\\[1ex]
\inferrule{ }{\stucknb{\mutxpINV{(\fnx{x}{e})}{v}{w}}{\mu}{w}}\qquad
\inferrule{\stucknb{e}{\mu}{w}}{\stucknb{E[e]}{\mu}{w}}
\end{gather*}

\boxed{\Sigma\vdash\mu}
\begin{gather*}
\inferrule{\mathit{dom}(\Sigma)=\mathit{dom}(\mu) \\ \forall a \in \dom{\Sigma},~\emptyset{;}\Sigma\vdash\mu(a):\Sigma(a)}{\Sigma\vdash\mu}
\end{gather*}

  \end{minipage}
  \caption{Additional semantics for $\lange$.}
  \label{cp:fig:apx:moresem}
\end{figure*}

\begin{figure}
  \begin{minipage}{1.0\linewidth}
\boxed{\Sigma\ledyn\Sigma}
\begin{gather*}
\inferrule{\forall a \in \dom{\Sigma_2},~\Sigma_1(a)=\Sigma_2(a)}{\Sigma_1\ledyn\Sigma_2}
\end{gather*}
  \end{minipage}
  \caption{Weaker and stronger heap types.}
  \label{cp:fig:apx:ht}
\end{figure}

\begin{figure}
  \begin{minipage}{1.0\linewidth}
    \begin{minipage}{.49\linewidth}      
    \[
\begin{array}{rcl}
  \rho & ::= & \cdot \mid \rho,x=v
\end{array}
\]
\boxed{\Sigma\vdash \rho:\Gamma}
\begin{gather*}
  \inferrule{ }{\Sigma\vdash \cdot:\emptyset}\qquad
  \inferrule{\emptyset;\Sigma\vdash v:T \\ \Sigma\vdash \rho:\Gamma}{\Sigma\vdash \rho,x=v:\Gamma,x:T}
\end{gather*}

    \end{minipage}
    \begin{minipage}{.49\linewidth}
\boxed{|d|=e}
\[
\begin{array}{rcl}
  |x| & = & x\\
  |n| & = & n\\
  |\fntx{x}{X}{Y}{d}| & = & \fnx{x}{|d|}\\
  |d_1~d_2| & = & (|d_1|~|d_2|)^\tyorig \\
  |\refxi{d}{X}| & = & \refx{|d|} \\ 
  |\derefx{d}| & = & \derefxp{|d|}{\tyorig} \\
  |\mutx{d_1}{d_2}| & = & \mutxpINV{|d_1|}{|d_2|}{\tyorig} \\  
  |d_1~+~d_2| & = & |d_1|~+^\tyorig~|d_2| \\
  |\checkx{d}{S}| & = & \checkx{|d|}{S} \\
\end{array}
\]  
    \end{minipage}

  \end{minipage}
  \caption{Rules for environments and for erasing to $\langd$ to $\lange$.}
  \label{cp:fig:apx:erase}
\end{figure}

\section{Appendix: Proofs}\label{cp:apx:proof}
\setcounter{figure}{0}    
\subsection{Soundness of $\langs$}

\begin{lemma}\label{cp:lem:specmatchrel}
  If $\specmatches{A_1}{A_2}{\lowlty{U}}$, then $\R{U}{A_2}$.
\end{lemma}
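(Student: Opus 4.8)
The plan is to prove this by inversion on the derivation of $\specmatches{A_1}{A_2}{\lowlty{U}}$, case-split according to the shape $\lowlty{U}$ (equivalently, according to the top-level constructor of the surface type $U$). The approximation relation $\R{U}{A}$ of Figure~\ref{cp:fig:apx:approx} is built so that each non-dynamic shape has exactly one matching constructor rule, while the rule $\R{U}{V}$ applies for every surface type $U$ with a bare constraint variable $V$ on the right. The heart of the argument is that these two rule families line up one-to-one with the possible outputs of the spec-match relation.

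First I would dispatch the uniform subcase: whenever inversion of $\specmatches{A_1}{A_2}{\lowlty{U}}$ forces $A_2$ to be a bare variable $V$, the goal $\R{U}{V}$ holds by the catch-all rule regardless of $U$. This covers, in particular, the case where the demanded shape is dynamic, for which a variable is the only constraint type that can approximate $U$. I would then turn to the subcases in which $A_2$ is a proper constructor.

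In those subcases I would case-split on $U$. When $U = \intt$, inversion yields $A_2 = \intt$ and the goal follows by $\R{\intt}{\intt}$. When $U = \funt{U_1}{U_2}$, inversion yields a function constraint $A_2 = \funt{V_1}{V_2}$ and the goal follows by the function rule. When $U = \reft{U'}$, inversion yields $A_2 = \reft{V}$ and the goal follows by the reference rule. In each instance the relevant $\R{}{}$ rule constrains only the head constructor and imposes no requirement on the argument types, so the match is immediate once the head of $A_2$ is known.

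The main obstacle is precisely this inversion and head-matching step: I need the spec-match relation to be genuinely \emph{shape-directed}, so that a successful match against shape $\lowlty{U}$ can only return either a bare variable or the canonical constructor for that shape, and never a constructor whose head disagrees with $\lowlty{U}$. Establishing this rests entirely on the definition of spec-match from the main development; once it is confirmed, the four rules of Figure~\ref{cp:fig:apx:approx} cover every surviving case exactly, and no reasoning about the component types is required.
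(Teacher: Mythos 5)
Your proposal matches the paper's proof, which is simply ``By cases on $\specmatches{A_1}{A_2}{\lowlty{U}}$'': you perform the same case analysis on the spec-match derivation, merely spelling out explicitly how each possible output (a bare variable $V$, or the canonical constructor $\funt{V_1}{V_2}$, $\reft{V}$, $\intt$ for the respective shape) is discharged by the corresponding rule of Figure~\ref{cp:fig:apx:approx}. The shape-directedness you flag as the crux is exactly what the paper's one-line case analysis implicitly relies on, so there is no substantive difference between the two arguments.
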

\begin{proof}
  By cases on $\specmatches{A_1}{A_2}{\lowlty{U}}$.
\end{proof}

\begin{lemma}\label{cp:lem:matchspecmatch}
  Suppose $\R{U}{A}$.
  \begin{enumerate}
  \item If $\matchrel{U}{U_1\to U_2}$, then $\specmatches{A}{V_1\to V_2}{\funs}$.
  \item If $\matchrel{U}{\reft{U'}}$, then $\specmatches{A}{\reft{V}}{\refs}$.
  \item If $U \sim \intt$, then $\specmatches{A}{\intt}{\intt}$.
  \end{enumerate}
\end{lemma}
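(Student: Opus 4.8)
The plan is to argue by case analysis on the derivation of $\R{U}{A}$, treating the three parts in parallel. Inspecting Figure~\ref{cp:fig:apx:approx}, the hypothesis $\R{U}{A}$ leaves exactly four possibilities for $A$: either $A = V$ (the first rule, available for every $U$), or $A$ records the top-level shape of $U$ and is accordingly one of $\funt{V_1}{V_2}$, $\reft{V}$, or $\intt$, with $U$ itself a function, reference, or integer type. The case $A = V$ is the common one and needs no information about $U$: for each part I close the goal immediately with the special-match rule that refines the unknown constraint $V$ to the demanded shape, giving $\specmatches{V}{V_1\to V_2}{\funs}$, $\specmatches{V}{\reft{V}}{\refs}$, and $\specmatches{V}{\intt}{\intt}$ respectively.

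For the three shape-recording possibilities I would use the standing shape hypothesis to keep the one compatible case and discard the rest. Take Part~1. If $A = \funt{V_1}{V_2}$ then $U$ is a function type, and $\specmatches{\funt{V_1}{V_2}}{V_1\to V_2}{\funs}$ holds by the special-match rule for function constraints; if instead $A = \reft{V}$ or $A = \intt$ then $U$ is a reference or integer type, but then $\matchrel{U}{U_1\to U_2}$ is underivable, since inverting $\matchrel{\cdot}{\cdot}$ shows that neither a reference nor an integer type matches a function type, so the case is vacuous. Parts~2 and~3 are symmetric: in Part~2 the hypothesis $\matchrel{U}{\reft{U'}}$ discards the function and integer shapes of $A$, and in Part~3 the hypothesis $U \sim \intt$ discards the function and reference shapes by inversion of consistency, in each case leaving only the matching shape together with the uniform $A = V$ case.

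The one point to check carefully is the dynamic surface type. When $U = \dyn$ the match and consistency hypotheses of all three parts are derivable, yet $\R{\dyn}{A}$ forces $A = V$ because the three shape-recording rules do not fire on $\dyn$; so these situations are already absorbed by the uniform $A = V$ case and require no separate treatment. The main obstacle is thus only to confirm, via inversion of $\matchrel{\cdot}{\cdot}$ and of $\sim$, that the mismatched shapes of $A$ really are incompatible with the shape hypothesis — everything else reduces to a one-step appeal to the appropriate special-match rule, exactly as in the proof of Lemma~\ref{cp:lem:specmatchrel}.
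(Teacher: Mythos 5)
Your proposal is correct and matches the paper's proof in substance: the paper likewise proceeds by a finite case analysis---on $U$ as constrained by the match/consistency hypothesis, with subcases on $A$ given by the approximation relation---and closes each case with the same two special-match rules (the rule refining a constraint variable $V$ to the demanded shape, and the rule for a constraint type that already has the demanded constructor). Your inversion of the case order, splitting on the derivation of $\R{U}{A}$ first and discarding the mismatched shapes of $A$ by inverting $\matchrel{\cdot}{\cdot}$ and $\sim$, is only a cosmetic reorganization of the same argument, since the paper's split on $U$ simply prevents those vacuous combinations from arising in the first place.
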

\begin{proof}
  We prove part 1 by cases on $U$. If $U=\dyn$, then $A=V$, and
  $\specmatches{V}{\inflow{V}\to \outflow{V}}{\funs}$. If $U=U_1\to
  U_2$, then either $A=V$ and the theorem holds as above, or
  $A=\reft{V_1}{V_2}$, and $\specmatches{A}{V_1\to V_2}{\funs}$.

  The proofs of parts 2 and 3 are similar.
\end{proof}

\begin{lemma}\label{cp:lem:weakening}
  If $\Gamma\vdash d:A;\Omega$ and for all $x\in\mathit{dom}(\Gamma)$, $\Gamma'(x)=\Gamma(x)$, then $\Gamma'\vdash d:A;\Omega$.
\end{lemma}
\begin{proof}
  Induction on $\Gamma\vdash d:A;\Omega$.
\end{proof}

\begin{lemma}\label{cp:lem:apx:stodpres}
  If $\Gamma\vdash s\leadsto d:U$ and for all $x\in\mathit{dom}(\Gamma)$, $\R{\Gamma(x)}{\Gamma'(x)}$, then $\Gamma'\vdash d:A;\Omega$ and $\R{U}{A}$.
\end{lemma}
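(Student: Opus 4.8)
The plan is to prove this lemma --- a type-preservation result for the elaboration $\Gamma \vdash s \leadsto d : U$ from the surface language $\langs$ into the dynamic language $\langd$ --- by induction on the derivation of $\Gamma \vdash s \leadsto d : U$. Each case will correspond to one of the surface typing/elaboration rules (\textsc{SAbs}, \textsc{SApp}, \textsc{SRef}, \textsc{SDeref}, \textsc{SUpdt}, \textsc{SAdd}, \textsc{SVar}, \textsc{SInt}), and the goal in each case is to exhibit a constraint type $A$ and obligation set $\Omega$ with $\Gamma' \vdash d : A;\Omega$ and $\R{U}{A}$. The context hypothesis $\R{\Gamma(x)}{\Gamma'(x)}$ for all $x$ is what lets the induction hypotheses on subderivations line up with the constraint-typing judgment under the translated context $\Gamma'$.

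First I would handle the leaf cases \textsc{SVar} and \textsc{SInt}, which are immediate: for \textsc{SVar} the conclusion $\R{U}{A}$ follows directly from the context hypothesis $\R{\Gamma(x)}{\Gamma'(x)}$, and for \textsc{SInt} we take $A = \intt$ and use the axiom $\R{\intt}{\intt}$ from Figure~\ref{cp:fig:apx:approx}. The structural cases then proceed uniformly: apply the induction hypothesis to each subderivation to obtain constraint types $A_i$ with $\R{U_i}{A_i}$, and then reconstruct the matching $\langd$ elaboration rule. The crucial technical ingredient here is Lemma~\ref{cp:lem:matchspecmatch}: wherever the surface rule uses the consistency/matching relation $\matchrel{U}{-}$ (as in \textsc{SApp}, \textsc{SDeref}, \textsc{SUpdt}) or consistency with $\intt$ (as in \textsc{SAdd}), that lemma converts the fact $\R{U}{A}$ together with the matching premise into the specialized matching $\specmatches{A}{-}{-}$ that the constraint-typing rule for $d$ requires. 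I expect each structural case to decompose into: (i)~invoke the IH on subterms, (ii)~use Lemma~\ref{cp:lem:matchspecmatch} to turn surface matching into constraint matching, (iii)~apply the corresponding $\langd$ rule, and (iv)~read off $\R{U}{A}$ for the result type from the approximation rules.

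For the \textsc{SAbs} case I would extend the context with $x{:}U_1$ on the surface side and correspondingly $x{:}A_1$ (where $\R{U_1}{A_1}$) on the constraint side; Lemma~\ref{cp:lem:weakening} may be needed to adjust the obligation context so that the extended-context hypothesis $\R{(\Gamma,x{:}U_1)(y)}{(\Gamma',x{:}A_1)(y)}$ holds for every variable, after which the IH on the body yields the arrow constraint type, and the rule $\R{U_1 \to U_2}{\funt{V_1}{V_2}}$ discharges the final approximation obligation. The consistency side-conditions $U_2' \sim U_2$, $U_1' \sim U_1$, etc.\ that appear in the surface rules will need to be threaded through so that the result type $U$ claimed in the conclusion matches what the approximation relation expects.

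The main obstacle I anticipate is the \textsc{SApp}/\textsc{SDeref}/\textsc{SUpdt} cases where the surface rule uses $\matchrel{U}{-}$ to extract an arrow or reference shape: the surface derivation only tells us $U$ is consistent with, say, $\funt{U_1}{U_2}$, but the constraint type $A$ with $\R{U}{A}$ may be either the top constraint type $V$ (the dynamic case) or a genuinely structured constraint type, and I must verify that Lemma~\ref{cp:lem:matchspecmatch} yields a $\specmatches{A}{V_1 \to V_2}{\funs}$ whose component types are compatible with the IH's output on the argument. Reconciling the possibly-$V$ and possibly-structured cases --- i.e.\ checking that the obligation set $\Omega$ can be assembled coherently and that the result type's approximation $\R{U_2}{V_2}$ genuinely holds regardless of which branch of $\R{U}{A}$ we are in --- is where the real bookkeeping lies, and it is exactly the place where the precise definition of the constraint-typing judgment $\Gamma \vdash d : A;\Omega$ (not shown in this excerpt) will have to be consulted carefully.
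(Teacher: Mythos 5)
Your proposal follows essentially the same route as the paper's proof: induction on the elaboration derivation, with Lemma~\ref{cp:lem:matchspecmatch} converting each matching premise together with $\R{U}{A}$ into the $\specmatches{A}{A'}{S}$ fact needed to type the inserted checks, and Lemma~\ref{cp:lem:weakening} used in the abstraction case to reconcile the outer binding of $x$ with its rebinding by the inserted check-and-let. The only step you left implicit is how $\R{U}{A}$ is recovered after a check --- the paper invokes Lemma~\ref{cp:lem:specmatchrel} for this rather than reading it off the approximation rules directly --- but that lemma says precisely what you assumed, so the plan is sound.
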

\begin{proof}
  By induction on $\Gamma\vdash s\leadsto d:U$.
  \begin{description}
  \item[Case] \textsc{UAbs}: \[
    \inferrule{\Gamma,x{:}U_1\vdash s\leadsto d:U_2' \\ U_2' \sim U_2 \\ X,Y~\text{fresh}}
    {\Gamma\vdash \fntx{x}{U_1}{U_2}{s}\leadsto \\ \fntx{x}{X}{Y}{\letx{x}{\checkx{x}{\lowlty{U_1}}}{d}}:\funt{U_1}{U_2}}
    \]
    By \textsc{IVar}, $\Gamma',x:X\vdash x:X;\emptyset$.\\
    Have that $\specmatches{X}{A_1}{\lowlty{U_1}}$.\\
    By \textsc{ICheck}, $\Gamma',x:X\vdash \checkx{x}{\lowlty{U_1}}:A_1;\emptyset$.\\
    By Lemma \ref{cp:lem:specmatchrel}, $\R{U_1}{A_1}$.\\
    By the IH, $\Gamma',x{:}A_1\vdash d:A_2;\Omega$ and $\R{U_2'}{A_2}$.\\
    By Lemma \ref{cp:lem:weakening}, $\Gamma',x{:}X,x{:}A_1\vdash d:A_2;\Omega$.\\
    By \textsc{ILet}, $\Gamma',x{:}X\vdash\letx{x}{\checkx{x}{\lowlty{U_1}}}{d}:A_2;\Omega$.\\
    By \textsc{IAbs}, $\Gamma'\vdash\fntx{x}{X}{Y}{\letx{x}{\checkx{x}{\lowlty{U_1}}}{d}}:\funt{X}{Y};\Omega,\flows{A_2}{Y}$.\\
    Have that $\R{\funt{U_1}{U_2}}{\funt{X}{Y}}$.
  \item[Case] \textsc{UApp}: \[
    \inferrule{\Gamma\vdash s_1\leadsto d_1:U \\ \matchrel{U}{\funt{U_1}{U_2}} \\
      \Gamma\vdash s_2\leadsto d_2:U_1' \\ U_1' \sim U_1}
    {\Gamma\vdash s_1~s_2\leadsto \checkx{(\checkx{d_1}{\funs})~d_2}{\lowlty{U_2}}:U_2}
    \]
    By the IH, $\Gamma'\vdash d_1:A_1;\Omega_1$ and $\R{U}{A_1}$.\\
    By Lemma \ref{cp:lem:matchspecmatch}, $\specmatches{A_1}{V_1\to V_2}{\funs}$.\\
    By \textsc{ICheck}, $\Gamma'\vdash\checkx{d_1}{\funs}:V_1\to V_2$.\\
    By the IH, $\Gamma'\vdash d_2:A_2;\Omega_2$ and $\R{U_1'}{A_2}$.\\
    By \textsc{IApp}, $\Gamma'\vdash(\checkx{d_1}{\funs})~d_2:V_2;\Omega_1,\Omega_2,\flows{A_2}{V_1}$.\\
    Have that $\specmatches{V_2}{A_3}{\lowlty{U_2}}$.\\
    By \textsc{ICheck}, $\Gamma'\vdash\checkx{((\checkx{d_1}{\funs})~d_2)}{\lowlty{U_2}}:A_3;\Omega_1,\Omega_2,\flows{A_2}{V_1}$.\\
    By Lemma \ref{cp:lem:specmatchrel}, $\R{U_2}{A_3}$.
  \item[Case] \textsc{URef}: \[
    \inferrule{\Gamma\vdash s\leadsto d:U_2 \\ U_2\sim U_1 \\ X~\text{fresh}}{\Gamma\vdash \refxi{s}{U_1}\leadsto \refxi{d}{X}:\reft{U_1}}
    \]
    By the IH, $\Gamma'\vdash d:A;\Omega$ and $\R{U_2}{A}$.\\
    By \textsc{IRef}, $\Gamma'\vdash \refxi{d}{X}:\reft{X};\Omega,\flows{A}{X}$.\\
    Have that $\R{\reft{U}}{\reft{X}}$.
  \item[Case] \textsc{UDeref}: \[
    \inferrule{\Gamma\vdash s\leadsto d:U \\ \matchrel{U}{\reft{U'}}}
    {\Gamma\vdash\derefx{s}\leadsto \checkx{\derefx{(\checkx{d}{\refs})}}{\lowlty{U'}}:U'}
    \]
    By the IH, $\Gamma'\vdash d:A_1;\Omega$ and $\R{U}{A_1}$.\\
    By Lemma \ref{cp:lem:matchspecmatch}, $\specmatches{A_1}{\reft{V}}{\refs}$.\\
    By \textsc{ICheck}, $\Gamma'\vdash\checkx{d}{\refs}:\reft{V};\Omega$.\\
    By \textsc{IDeref}, $\Gamma'\vdash\derefx{(\checkx{d}{\refs})}:V;\Omega$.\\
    Have that $\specmatches{V}{A_2}{\lowlty{U'}}$.\\
    By \textsc{ICheck}, $\Gamma'\vdash\checkx{\derefx{(\checkx{d}{\refs})}}{\lowlty{U'}}:A_2;\Omega$.\\
    By Lemma \ref{cp:lem:specmatchrel}, $\R{U'}{A_2}$.
  \item[Case] \textsc{UUpdt}: \[
    \inferrule{\Gamma\vdash s_1\leadsto d_1:U \\ \matchrel{U}{\reft{U'}} \\
      \Gamma\vdash s_2\leadsto d_2:U'' \\ U'' \sim U'}
    {\Gamma\vdash\mutx{s_1}{s_2}\leadsto \mutx{\checkx{d_1}{\refs}}{d_2}:\intt}
    \]
    By the IH, $\Gamma'\vdash d_1:A_1;\Omega_1$ and $\R{U}{A_1}$.\\
    By Lemma \ref{cp:lem:matchspecmatch}, $\specmatches{A_1}{\reft{V}}{\refs}$.\\
    By \textsc{ICheck}, $\Gamma'\vdash\checkx{d_1}{\refs}:\reft{V};\Omega_1$.\\
    By the IH, $\Gamma'\vdash d_2:A_2;\Omega_2$ and $\R{U''}{A_2}$.\\
    By \textsc{IUpdt}, $\Gamma'\vdash\mutx{\checkx{d_1}{\refs}}{d_2}:\intt;\Omega_1,\Omega_2,\flows{A_2}{V}$.
  \item[Case] \textsc{UAdd}: \[
    \inferrule{\Gamma\vdash s_1\leadsto d_1:U_1 \\ U_1\sim \intt \\
      \Gamma\vdash s_2\leadsto d_2:U_2 \\ U_2\sim \intt}
    {\Gamma\vdash s_1~+~s_2\leadsto \checkx{d_1}{\intt}~+~\checkx{d_2}{\intt}:\intt}
    \]
    By the IH, $\Gamma'\vdash d_1:A_1;\Omega_1$ and $\R{U_1}{A_1}$.\\
    By Lemma \ref{cp:lem:matchspecmatch}, $\specmatches{A_1}{\intt}{\intt}$.\\
    By \textsc{ICheck}, $\Gamma'\vdash\checkx{d_1}{\intt}:\intt;\Omega_1$.\\
    By the IH, $\Gamma'\vdash d_2:A_2;\Omega_2$ and $\R{U_2}{A_2}$.\\
    By Lemma \ref{cp:lem:matchspecmatch}, $\specmatches{A_2}{\intt}{\intt}$.\\
    By \textsc{ICheck}, $\Gamma'\vdash\checkx{d_2}{\intt}:\intt;\Omega_2$.\\
    By \textsc{IAdd}, $\Gamma'\vdash \checkx{d_1}{\intt}~+~\checkx{d_2}{\intt}:\intt;\Omega_1,\Omega_2$.
  \item[Case] \textsc{UVar}: \[
    \inferrule{\Gamma(x)=U}{\Gamma\vdash x\leadsto x:U}
    \]
    Have that $\Gamma'(x)=A$ and $\R{U}{A}$.\\
    By \textsc{IVar}, $\Gamma'\vdash x:A;\emptyset$.
  \item[Case] \textsc{UInt}: Immediate.
  \end{description}
\end{proof}

\subsection{Soundness of $\langd$}\label{cp:apx:proof:langd}

\begin{lemma}\label{cp:lem:apx:solsplit}
  If $\sigma$ is a solution to $\Omega_1\cup \Omega_2$, then $\sigma$ is a solution to $\Omega_1$ and $\sigma$ is a solution to $\Omega_2$. 
\end{lemma}
\begin{proof}
  Since $\sigma$ is a solution for every constraint in $\Omega$, and for all $C\in\Omega_1$, $C\in\Omega$, so $\sigma$ is a solution for every constraint in $\Omega_1$, so it is a solution to $\Omega_1$.
  Likewise for $\Omega_1$.
\end{proof}

\begin{lemma}\label{cp:lem:sttrans}
  If $\vdash T_1 <: T_2$ and $\vdash T_2 <: T_3$, then $\vdash T_1 <: T_3$.
\end{lemma}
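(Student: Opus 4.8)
The plan is to prove transitivity by induction. Since the subtyping judgment $\vdash T_1 <: T_2$ is syntax-directed on the structure of its types, I would induct on the combined structural size of the two given derivations (equivalently, on the structure of the shared middle type $T_2$), proceeding by case analysis on the final rule of each derivation. The statement to establish is that the two hypotheses $\vdash T_1 <: T_2$ and $\vdash T_2 <: T_3$ compose into $\vdash T_1 <: T_3$.

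First I would dispatch the atomic and reflexive cases. When $T_2$ is $\intt$ or $\dyn$, the only rules that can relate it on either side are the atomic ones, forcing $T_1 = T_2 = T_3$, so the conclusion is immediate; likewise, if either derivation is reflexivity then the goal coincides with the other hypothesis. For reference types, the update rule \textsc{TUpdt} indicates references are invariant, so both derivations pin $T_1 = \reft{T}$, $T_2 = \reft{T}$, $T_3 = \reft{T}$ on the carried component and transitivity is trivial.

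The substantive case is function types. Writing $T_1 = \funt{A_1}{B_1}$, $T_2 = \funt{A_2}{B_2}$, $T_3 = \funt{A_3}{B_3}$, inversion on the first derivation yields $\vdash A_2 <: A_1$ and $\vdash B_1 <: B_2$, and inversion on the second yields $\vdash A_3 <: A_2$ and $\vdash B_2 <: B_3$. Applying the induction hypothesis to the codomain gives $\vdash B_1 <: B_3$, and applying it to the domain \emph{in contravariant order} (first $\vdash A_3 <: A_2$, then $\vdash A_2 <: A_1$) gives $\vdash A_3 <: A_1$. The function subtyping rule then reassembles $\vdash \funt{A_1}{B_1} <: \funt{A_3}{B_3}$, which is the desired $\vdash T_1 <: T_3$.

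The main obstacle is precisely this contravariant domain step: the induction hypothesis must be invoked with the domain premises in reversed order, so the induction measure has to be chosen (combined size of the two sub-derivations, rather than the size of a single derivation) so that it strictly decreases on \emph{both} the flipped-domain and the covariant-codomain recursive calls. I would also confirm there is no "pass-through" subtyping via $\dyn$ — i.e. no rule chain $T_1 <: \dyn <: T_3$ with $T_1, T_3 \neq \dyn$ — since the existence of the separate one-world typing rules (\textsc{TAppOW}, \textsc{TDerefOW}, etc.) strongly suggests $\dyn$ relates only to itself under $<:$, making any such case vacuous; if any nontrivial $\dyn$ subtyping rules do exist, each must be checked individually to compose.
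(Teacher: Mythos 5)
Your overall strategy---induction on the (combined) structure of the derivations, with the contravariant function case handled by invoking the IH with the domain premises flipped---is exactly the ``straightforward induction'' the paper leaves implicit, and the pure function/ref/int cases of your argument are fine. But there is a genuine gap: your case analysis rests on the guess that $\dyn$ relates only to itself under $<:$, and that guess is wrong for this system. The paper's other proofs show this explicitly: in preservation (case \textsc{EApp}, $w=\unorig$) it inverts $\vdash \funt{T_1}{T_2} <: \dyn$ into $\vdash \dyn <: T_1$ and $\vdash T_2 <: \dyn$; in the constraint-simplification rules of Figure \ref{cp:fig:apx:osimp} the constraint $\flows{\intt}{\dyn}$ is discharged as ``immediate from subtyping definitions'' and $\flows{\reft{V}}{\dyn}$ simplifies to $V=\dyn$. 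So $\dyn$ is a top-like type with structural rules into it: $\intt <: \dyn$, $\funt{\dyn}{T} <: \dyn$ whenever $\vdash T <: \dyn$, and $\reft{\dyn} <: \dyn$; only the converse direction ($\vdash \dyn <: T$ forces $T=\dyn$) matches your guess.

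Concretely, your proof has no case covering $\vdash \funt{A_1}{B_1} <: \funt{A_2}{B_2}$ with $\vdash \funt{A_2}{B_2} <: \dyn$: when the middle type is a function type you immediately write $T_3 = \funt{A_3}{B_3}$, which is unjustified. That missing case is not vacuous and genuinely needs the induction hypothesis---from $\vdash B_1 <: B_2$ and $\vdash B_2 <: \dyn$ you must conclude $\vdash B_1 <: \dyn$, and from $A_2 = \dyn$ (forced by $\vdash \dyn <: A_2$) together with $\vdash A_2 <: A_1$ you get $\vdash \dyn <: A_1$, reassembling with the function-below-$\dyn$ rule; an analogous reference-below-$\dyn$ case also exists. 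Likewise your atomic-case claim that $T_2 = \intt$ ``forces $T_1 = T_2 = T_3$'' is false, since $T_3$ may be $\dyn$ (that case is still trivial, but for a different reason: the conclusion is the second hypothesis). The lemma itself remains true---all the $\dyn$ cases do compose---but your proof as written does not establish it; the hedge you flagged at the end is precisely where the real work hides.
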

\begin{proof}
  Straightforward induction.
\end{proof}

\begin{lemma}\label{cp:lem:stsameconst}
  If $\lowlty{T_1} \preceq S$ and $\vdash T_2 <: T_1$, then $\lowlty{T_2} \preceq S$. 
\end{lemma}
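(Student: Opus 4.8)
The plan is to case split on the shape $S$, exploiting the two ways a judgment of the form $s \preceq S$ can hold: either $S = \dyn$, in which case every shape $s$ satisfies $s \preceq S$, or $S$ is a concrete shape, in which case $s \preceq S$ forces $s = S$. In the first case, $S = \dyn$, we get $\lowlty{T_2} \preceq \dyn$ immediately, for any $T_2$, and we are done without even using the hypothesis $\vdash T_2 <: T_1$.

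The remaining case is $S \neq \dyn$. Here the hypothesis $\lowlty{T_1} \preceq S$ forces $\lowlty{T_1} = S$, and in particular $\lowlty{T_1} \neq \dyn$. It then suffices to show $\lowlty{T_2} = \lowlty{T_1}$, since that yields $\lowlty{T_2} = S$ and hence $\lowlty{T_2} \preceq S$ by reflexivity. For this I would proceed by case analysis on the last rule of the derivation $\vdash T_2 <: T_1$. The structural observation driving every case is that the only subtyping rule able to change the top-level shape is the one concluding $\vdash T <: \dyn$; that rule is inapplicable here because it requires $T_1 = \dyn$, contradicting $\lowlty{T_1} \neq \dyn$. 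Every other rule relates two types built with the same head constructor: reflexivity keeps the type outright, while the function and reference rules preserve the outer $\to$ and $\reft{\cdot}$ constructors and only constrain the components. In each of these cases $\lowlty{T_2} = \lowlty{T_1}$ can be read directly off the form of the conclusion, so no appeal to an induction hypothesis is actually needed.

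The main obstacle—indeed essentially the only thing to verify carefully—is that structural claim: that the subtyping relation only ever relaxes a shape to $\dyn$ as one passes from subtype to supertype, and never makes the supertype's shape strictly more concrete. Concretely, the proof hinges on confirming that there is no rule deriving $\vdash \dyn <: T$ for a concrete $T$, since such a rule would make the lemma false (it would allow $\lowlty{T_2} = \dyn$ while $\lowlty{T_1} = S \neq \dyn$). Once the subtyping rules are inspected and this is confirmed, the case analysis is routine. As an alternative packaging, one could first prove the auxiliary fact $\vdash T_2 <: T_1 \Rightarrow \lowlty{T_2} \preceq \lowlty{T_1}$ by the same case analysis and then conclude via transitivity of $\preceq$, which is immediate from its definition; I prefer the direct case split on $S$ above as it is self-contained.
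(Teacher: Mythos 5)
Your proof is correct and in substance the same as the paper's: the paper's entire proof is ``Cases on $T_1$,'' which rests on exactly the inversion facts you isolate (subtyping preserves the head constructor unless the supertype is $\dyn$, and there is no rule deriving $\vdash \dyn <: T$ for concrete $T$). Your packaging---splitting on $S$ first and then on the last rule of $\vdash T_2 <: T_1$---is just a slightly more explicit organization of that same routine structural case analysis.
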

\begin{proof}
  Cases on $T_1$.
\end{proof}

\begin{lemma}\label{cp:lem:apx:soundsubclosed}
  If $\Gamma\vdash  d:A;\Omega$ and $\sigma$ is a solution for $\Omega$, then $\Gamma;\sigma\vdash d\leadsto e:T$ and $\vdash T <: \sigma A$.
\end{lemma}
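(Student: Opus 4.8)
The plan is to proceed by induction on the derivation of $\Gamma\vdash d:A;\Omega$, reading each constraint-typing rule bottom-up and assembling the matching $\lange$ derivation. Every case follows one skeleton. First, I decompose $\Omega$ into the premises' constraint sets together with the flow constraints the rule freshly generates, and appeal to Lemma~\ref{cp:lem:apx:solsplit} so that $\sigma$ remains a solution for each piece; this licenses the induction hypothesis on every subderivation, yielding translations $d_i\leadsto e_i:T_i$ with $\vdash T_i<:\sigma A_i$. Second, for each generated constraint $\flows{A_i}{B}$ I invoke the defining property of a solution — that $\sigma$ validates $\flows{A_i}{B}$ as the subtyping $\vdash\sigma A_i<:\sigma B$ — and chain it with the IH bound using transitivity (Lemma~\ref{cp:lem:sttrans}) to obtain $\vdash T_i<:\sigma B$. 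Third, I raise each $e_i$ to the exact type the relevant $\lange$ rule demands via \textsc{TSubsump}, apply that rule, and read off the result type (some $\sigma B$ or $\sigma A$); the required $\vdash T<:\sigma A$ then follows reflexively or by one further use of transitivity.

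The introduction and structural cases are routine instances of this skeleton. For \textsc{IVar} and \textsc{IInt} the constraint set is empty and the conclusion is immediate by reflexivity of $<:$. For \textsc{IAbs}, the body translates with $T_2<:\sigma A_2$ and the rule contributes $\flows{A_2}{Y}$, so $\vdash T_2<:\sigma Y$; \textsc{TSubsump} raises the body to $\sigma Y$ and \textsc{TAbs} gives $\funt{\sigma X}{\sigma Y}=\sigma(\funt{X}{Y})$. \textsc{IRef}, \textsc{ILet}, and the elimination forms \textsc{IApp}, \textsc{IDeref}, \textsc{IUpdt}, \textsc{IAdd} behave the same way: the generated $\flows{\cdot}{\cdot}$ constraint plus transitivity pushes the argument below the content/domain variable, and \textsc{TSubsump} lands it \emph{exactly} — this exactness matters, since $\reft{\cdot}$ is invariant and the update rule consumes the content type precisely, so subsumption must reach $\sigma X$ rather than a proper subtype. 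Because the $\funs$/$\refs$ checks inserted by the earlier $s\leadsto d$ pass have already forced each eliminated operand through an \textsc{ICheck} to a precise function or reference type, these applications, dereferences, updates, and additions type-check against the $\tyorig$-tagged rules \textsc{TApp}, \textsc{TDeref}, \textsc{TUpdt}, \textsc{TAdd}.

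The delicate case is \textsc{ICheck}, where $\specmatches{A}{A'}{S}$ and the target offers three disjoint rules for $\checkx{e}{S}$, so I must case-split on the resolved type $T$ of the translated subterm while respecting $\vdash T<:\sigma A$ from the IH. If $T=\dyn$, \textsc{TCheck} applies with result $\uplty{S}$, and $\vdash\uplty{S}<:\sigma A'$ follows from the shape of $\specmatches{A}{A'}{S}$. If $\lowlty{T}\preceq S$ the check is statically redundant, \textsc{TCheckRedundant} keeps type $T$, and $\vdash T<:\sigma A'$ is discharged from the spec-match together with the IH bound. The remaining sub-case, $T\neq\dyn$ with $\lowlty{T}\not\preceq S$, is governed by \textsc{TCheckFail}, whose result type is unconstrained, so the subtyping obligation is trivially satisfiable by choosing $\sigma A'$. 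The technical lever here is Lemma~\ref{cp:lem:stsameconst}, which transports the ``$\lowlty{\cdot}\preceq S$'' side conditions of the $\lange$ check rules across the subtyping $\vdash T<:\sigma A$, guaranteeing that the rule selected for the translated term agrees with the one justified by the resolved type. I expect this reconciliation — matching which of \textsc{TCheck}, \textsc{TCheckRedundant}, \textsc{TCheckFail} is licensed by $T$ against what $\sigma A'$ permits — to be the main obstacle; every other case reduces mechanically to the transitivity-plus-subsumption skeleton.
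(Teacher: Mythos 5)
Your overall skeleton---induction on the constraint-typing derivation, Lemma~\ref{cp:lem:apx:solsplit} to distribute $\sigma$ over the premises' constraint sets, reading each flow constraint $\flows{A}{B}$ as $\vdash \sigma A <: \sigma B$, and chaining with the IH bound via Lemma~\ref{cp:lem:sttrans}---is exactly the paper's proof for the structural cases, so those go through (modulo one presentational slip: the goal judgment $\Gamma;\sigma\vdash d\leadsto e:T$ is the \emph{translation} judgment, which has no subsumption rule; the coercions you perform with \textsc{TSubsump} are realized there as subtyping premises built into \textsc{DApp}, \textsc{DRef}, \textsc{DUpdt}, etc., and the three check rules are \textsc{DCheckRemove}/\textsc{DCheckKeep}/\textsc{DCheckFail}, which also determine the output term $e$---removed check, kept check, or $\mathtt{fail}$---not merely its type).

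The genuine gap is in \textsc{ICheck}. That rule does not only carry the side condition $\specmatches{A_1}{A_2}{S}$; it also emits the check constraint $\depcon{A_1}{S}{A_2}$ into $\Omega$, and the proof must use the fact that $\sigma$ solves it. Your argument never invokes this: you discharge the obligations $\vdash \uplty{S} <: \sigma A_2$ (keep case) and $\vdash T <: \sigma A_2$ (redundant case) ``from the shape of the spec-match,'' but the shape alone only says $A_2$ is a fresh template whose constructor is $S$ with variable parts---it says nothing about what $\sigma$ assigns to those parts. Concretely, if $S=\refs$ then $A_2=\reft{\beta}$, and the spec-match does not prevent $\sigma\beta=\intt$, in which case $\uplty{S}=\reft{\dyn}\not<:\reft{\intt}$ (references are invariant), so your keep case fails. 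What closes this is precisely the solution property of $\depcon{A_1}{S}{A_2}$: when $A_1=\alpha$ and $S$ is not the constructor of $\sigma\alpha$, every part of $A_2$ must be mapped to $\dyn$, hence $\sigma A_2=\uplty{S}$; when $S$ is the constructor of $\sigma\alpha$, $\sigma A_1=\sigma A_2$, so the IH bound transfers. This is the pivot the paper's case analysis (on whether $A_1$ is a variable and whether $S$ matches the constructor of $\sigma A_1$) is organized around; your three-way split on $T$ is a legitimate alternative organization---arguably cleaner, since it matches which translation rule fires---but it still has to be backed by the check constraint's solution property, and Lemma~\ref{cp:lem:stsameconst} cannot supply it, since that lemma only transports $\lowlty{\cdot}\preceq S$ along subtyping and never mentions $A_2$.
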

\begin{proof}
  By induction on $\Gamma\vdash d:A;\Omega$
  \begin{description}
  \item[Case] \textsc{IVar}: \[
    \inferrule{\Gamma(x)=A}{\Gamma\vdash x:A;\emptyset}
    \]
    Have that $\Gamma(x)=A$.\\
    By \textsc{DVar}, $\Gamma;\sigma\vdash x\leadsto x:\sigma A$.
  \item[Case] \textsc{IAbs}: \[
    \inferrule{\Gamma,x{:}\alpha\vdash d:A\yO}
    {\Gamma\vdash \fntx{x}{\alpha}{\beta}{d}:\funt{\alpha}{\beta}\yO,\flows{A}{\beta}}  
    \]
    Since $\sigma$ is a solution for $\Omega \cup \{\flows{A}{\beta}\}$, by Lemma \ref{cp:lem:apx:solsplit}, $\sigma$ is a solution for $\Omega$.\\
    By the IH, $\Gamma,x{:}\alpha;\sigma\vdash d\leadsto e: T$ and $\vdash T' <: \sigma A$.\\
    Since $\sigma$ is a solution to $\Omega,\flows{A}{\beta}$, $\vdash \sigma A <: \sigma \beta$.\\
    By Lemma \ref{cp:lem:sttrans}, $\vdash T' <: \sigma \beta$.\\ 
    By \textsc{DAbs}, $\Gamma;\sigma\vdash \fntx{x}{\alpha}{\beta}{d} \leadsto \fnx{x}{e}: \funt{\sigma \alpha}{\sigma \beta}$.
  \item[Case] \textsc{IApp}: \[
    \inferrule{\Gamma\vdash d_1:\funt{V_1}{V_2}\yO_1 \\ 
      \Gamma\vdash d_2:A\yO_2}
    {\Gamma\vdash d_1~d_2:V_2\yO_1,\Omega_2,\flows{A}{V_1}}
    \]
    Since $\sigma$ is a solution for $\Omega_1,\Omega_2,\flows{A_1}{V_1},\flows{A}{\funt{V_1}{V_2}}$, by Lemma \ref{cp:lem:apx:solsplit}, $\sigma$ is a solution for $\Omega_1$ and $\sigma$ is a solution for $\Omega_2$.\\
    By the IH, $\Gamma;\sigma\vdash d_1\leadsto e_1: T_1$ and $\vdash T_1 <: \funt{\sigma V_1}{\sigma V_2}$.\\
    Therefore $T_1=\funt{T_{11}}{T_{12}}$ and $\vdash \sigma V_1 <: T_{11}$ and $\vdash T_{12} <: \sigma V_2$.\\
    By the IH, $\Gamma;\sigma\vdash d_2\leadsto e_2: T_2$ and $\vdash T_2 <: \sigma A$.\\
    Since $\sigma$ is a solution to $\Omega,\flows{A}{V_1}$, $\vdash \sigma A <: \sigma V_1$.\\
    By Lemma \ref{cp:lem:sttrans}, $\vdash T_2 <: T_{11}$.\\
    By \textsc{DApp},
    $\Gamma;\sigma\vdash d_1~d_2\leadsto (e_1~e_2)^\tyorig: T_{12}$.
  \item[Case] \textsc{ICheck}: \[
    \inferrule{\Gamma\vdash d:A_1\yO \\ \specmatches{A_1}{A_2}{S}}
    {\Gamma\vdash \checkx{d}{S}:A_2\yO;\depcon{A_1}{S}{A_2}}
    \]
    By the IH, $\Gamma;\sigma\vdash d\leadsto e: T$ and $\vdash T <: \sigma A_1$.\\
    If $S=\dyn$, then $A_2=A_1$, and by \textsc{DCheckRemove}, $\Gamma;\sigma\vdash \checkx{d}{S}\leadsto e:T$.\\
    We proceed by cases on $A_1$ and $\lowlty{\sigma A_1}$.\\
    \begin{description}
    \item[Case] $A_1\neq \alpha$ and $S=\lowlty{\sigma A_1}$:\\
      Then $S=\lowlty{A_1}$.\\
      Then $A_2=A_1$.\\
      By Lemma \ref{cp:lem:stsameconst}, $S=\lowlty{T}$.\\
      By \textsc{DCheckRemove}, $\Gamma;\sigma\vdash \checkx{d}{S}\leadsto e:T$.
    \item[Case] $A_1\neq \alpha$ and $S$ is not the constructor of $\sigma A_1$: Vacuous, since $S\neq\dyn$.
    \item[Case] $A_1=\alpha$ and $S$ is the constructor of $\sigma A_1$:\\
      Then $\sigma A_1=\sigma A_2$.\\
      Therefore $\vdash T <: \sigma A_2$.\\
      By Lemma \ref{cp:lem:stsameconst}, $\lowlty{T} \preceq S$.\\
      By \textsc{DCheckRemove}, $\Gamma;\sigma\vdash \checkx{d}{S}\leadsto e:T$.
    \item[Case] $A_1=\alpha$ and $S$ is not the constructor of $\sigma A_1$:\\
      Since $\sigma$ is a solution for $\Omega \cup \{\depcon{A_1}{S}{A_2}\}$, for all $\beta\in\mathit{parts}(A_2)$, $\sigma\beta=\dyn$.\\
      By the definition of $\rhd_S$, since $S\neq\dyn$, $\lowlty{A_2}=S$.\\
      Therefore $\sigma A_2=\uplty{S}$.\\
      If $T=\dyn$, then by \textsc{DCheckKeep}, $\Gamma;\sigma\vdash \checkx{d}{S}\leadsto \checkx{e}{S}:\uplty{S}$.\\
      Otherwise, by \textsc{DCheckFail}, $\Gamma;\sigma\vdash \checkx{d}{S}\leadsto\mathtt{fail}:\uplty{S}$.
    \end{description}
  \item[Case] \textsc{IRef}: \[
    \inferrule{\Gamma\vdash d:A\yO}{\Gamma\vdash \refxi{d}{\alpha}:\reft{\alpha}\yO,\flows{A}{\alpha}}\
    \]
    By the IH, $\Gamma;\sigma\vdash d\leadsto e: T'$ and $\vdash T' <: \sigma A$.\\
    Since $\sigma$ is a solution to $\Omega \cup \{\flows{A}{\alpha}\}$, $\vdash \sigma A <: \sigma \alpha$.\\
    By Lemma \ref{cp:lem:sttrans}, $\vdash T' <: \sigma \alpha$.\\
    By \textsc{DRef}, $\Gamma;\sigma\vdash \refxi{d}{\alpha}\leadsto \refx{e}: \reft{\sigma \alpha}$.
  \item[Case] \textsc{IDeref}: \[
    \inferrule{\Gamma\vdash d:\reft{V}\yO}
    {\Gamma\vdash\derefx{d}:V\yO}
    \]
    By the IH, $\Gamma;\sigma\vdash d\leadsto e: T$ and $\vdash T <: \reft{\sigma V}$.\\
    Therefore $T=\reft{T'}$ and $\vdash T' <: \sigma V$.\\
    By \textsc{DDeref}, $\Gamma;\sigma\vdash \derefx{d}\leadsto \derefxp{e}{\tyorig}:T'$.
  \item[Case] \textsc{IUpdt}: \[
    \inferrule{\Gamma\vdash d_1:\reft{V}\yO_1 \\
      \Gamma\vdash d_2:A\yO_2}
    {\Gamma\vdash\mutx{d_1}{d_2}:\intt\yO_1,\Omega_2,\flows{A}{V}}
    \]
    Since $\sigma$ is a solution to $\Omega_1\cup\Omega_2\cup\{\flows{A}{V}\}$, by Lemma \ref{cp:lem:apx:solsplit}, $\sigma$ is a solution to $\Omega_1$ and $\sigma$ is a solution to $\Omega_2$ and $\sigma$ is a solution to $\{\flows{A}{V}\}$.\\
    By the IH, $\Gamma;\sigma\vdash d_1\leadsto e_1: T_1$ and $\vdash T_1 <: \reft{\sigma V}$.\\
    Therefore $T_1=\reft{T_1'}$ and $\vdash \sigma V <: T_1'$\\
    By the IH, $\Gamma;\sigma\vdash d_2\leadsto e_2: T_2$ and $\vdash T_2 <: \sigma A$.\\
    Since $\sigma$ is a solution to $\{\flows{A}{V}\}$, $\vdash \sigma A <: \sigma V$.\\
    By Lemma \ref{cp:lem:sttrans}, $\vdash T_2 <: T_1'$.\\
    By \textsc{DDeref}, $\Gamma;\sigma\vdash \mutx{d_1}{d_2}\leadsto \mutxpINV{e_1}{e_2}{\tyorig}:\intt$.
  \item[Case] \textsc{IAdd}: \[
    \inferrule{\Gamma\vdash d_1:\intt\yO_1 \\
      \Gamma\vdash d_2:\intt\yO_2}
    {\Gamma\vdash d_1~+~d_2:\intt\yO_1,\Omega_2}
    \]
    Since $\sigma$ is a solution to $\Omega_1\cup\Omega_2$, by Lemma \ref{cp:lem:apx:solsplit}, $\sigma$ is a solution to $\Omega_1$ and $\sigma$ is a solution to $\Omega_2$.\\
    By the IH, $\Gamma;\sigma\vdash d_1\leadsto e_1: T_1$ and $\vdash T_1 <: \intt$.\\
    By the IH, $\Gamma;\sigma\vdash d_2\leadsto e_2: T_2$ and $\vdash T_2 <: \intt$.\\
    Therefore $T_1=T_2=\intt$.\\
    Therefore by \textsc{DAdd}, $\Gamma;\sigma\vdash d_1~+~ d_2\leadsto (e_1~+~e_2)^\tyorig: \intt$.
  \item[Case ] \textsc{IInt}: Immediate by \textsc{DInt}.
  \end{description}
\end{proof}

\begin{restatable}{lemma}{soundsub}\label{cp:lem:soundsub}
  If $\Gamma\vdash d:A;\Omega_1$ and $\vdash A:\Omega_2$ and $\sigma$
  is a solution for $\Omega_1\cup\Omega_2$, then $\Gamma;\sigma\vdash
  d\leadsto e:T$ and $\vdash T <: \sigma A$.  
\end{restatable}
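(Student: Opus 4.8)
The plan is to reduce this statement directly to Lemma~\ref{cp:lem:apx:soundsubclosed}, which already has the identical conclusion but assumes $\sigma$ is a solution for a single constraint set (there written $\Omega$). The only apparent gap between the two statements is that here $\sigma$ is assumed to solve the larger set $\Omega_1\cup\Omega_2$, and an extra hypothesis $\vdash A:\Omega_2$ is carried along. So the work reduces to shaving $\Omega_2$ off the solution and feeding the remainder into the closed lemma.

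First I would apply Lemma~\ref{cp:lem:apx:solsplit} to the solution $\sigma$ for $\Omega_1\cup\Omega_2$, concluding that $\sigma$ is in particular a solution for $\Omega_1$. Then, with $\Gamma\vdash d:A;\Omega_1$ and $\sigma$ a solution for $\Omega_1$ in hand, I would invoke Lemma~\ref{cp:lem:apx:soundsubclosed} directly (instantiating its $\Omega$ with $\Omega_1$) to obtain $e$ and $T$ with $\Gamma;\sigma\vdash d\leadsto e:T$ and $\vdash T <: \sigma A$, which is exactly the required conclusion.

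The hypothesis $\vdash A:\Omega_2$ plays no role in deriving the conclusion of this particular lemma; it is stated because the surrounding soundness development needs $\sigma$ to additionally respect the constraints generated by the declared type $A$ (so that $\sigma A$ is well-formed and the overall constraint set is jointly solvable), but those constraints are simply not consumed in this step. Consequently there is essentially no obstacle here: the statement is a direct corollary of Lemmas~\ref{cp:lem:apx:solsplit} and~\ref{cp:lem:apx:soundsubclosed}. The one thing I would verify is that Lemma~\ref{cp:lem:apx:soundsubclosed} genuinely applies to an \emph{arbitrary} constraint set accompanying the typing judgment---i.e.\ that it does not tacitly assume $\Omega$ is the full set including $A$'s own constraints. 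Since that lemma is proved by induction solely on the derivation of $\Gamma\vdash d:A;\Omega$ and never inspects where $\Omega$ came from, it applies verbatim with $\Omega := \Omega_1$, so the reduction is sound.
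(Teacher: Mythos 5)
Your proposal is correct and matches the paper's own proof exactly: the paper likewise applies Lemma~\ref{cp:lem:apx:solsplit} to extract a solution for $\Omega_1$ and then invokes Lemma~\ref{cp:lem:apx:soundsubclosed} with that restricted solution, leaving the hypothesis $\vdash A:\Omega_2$ unused in this step.
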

\begin{proof}
  By Lemma \ref{cp:lem:apx:solsplit}, $\sigma$ is a solution for
  $\Omega_1$. \\
  By Lemma \ref{cp:lem:apx:soundsubclosed},
  $\Gamma;\sigma \vdash d \leadsto e:T$ and $\vdash T <: \sigma A$.
\end{proof}

\begin{restatable}{lemma}{dtoe}\label{cp:lem:dtoe}
  If $\Gamma;\sigma\vdash d\leadsto e:T$, then $\sigma\Gamma;\emptyset \vdash e:T$. 
\end{restatable}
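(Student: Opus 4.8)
The plan is to proceed by induction on the derivation of the translation judgment $\Gamma;\sigma\vdash d\leadsto e:T$, casing on the last rule applied. In every case the store typing stays empty: a source term $d$ contains no addresses, so the translated term $e$ contains none either, the rule \textsc{TAddr} is never needed, and the entire target derivation can be built under $\emptyset$, matching the statement. The uniform recipe is to apply the induction hypothesis to each translated subterm and then fire the structurally matching $\lange$ typing rule, inserting \textsc{TSubsump} exactly where the translation rule's output type relaxes the type produced directly by that rule.

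For the leaves, \textsc{DVar} follows from \textsc{TVar} once we observe $(\sigma\Gamma)(x)=\sigma(\Gamma(x))=\sigma A$, and \textsc{DInt} is immediate from \textsc{TInt}. For the elimination and reference forms, the induction hypothesis supplies target typings for the subterms at the types recorded in the translation premises, and the matching $\lange$ rule finishes the job: \textsc{DApp} uses \textsc{TApp} after coercing the argument from $T_2$ to $T_{11}$ via \textsc{TSubsump} using the side condition $\vdash T_2 <: T_{11}$; \textsc{DDeref} uses \textsc{TDeref}; \textsc{DUpdt} uses \textsc{TUpdt} after coercing the written value with \textsc{TSubsump}; \textsc{DRef} uses \textsc{TRef} after coercing the contents to $\sigma\alpha$; and \textsc{DAdd} uses \textsc{TAdd} directly, since both operands are already typed at $\intt$. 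The abstraction case \textsc{DAbs} first applies \textsc{TAbs} to the body typing $\sigma\Gamma,x{:}\sigma\alpha;\emptyset\vdash e:T$ (obtained from the IH, noting $\sigma(\Gamma,x{:}\alpha)=\sigma\Gamma,x{:}\sigma\alpha$) to get $\funt{\sigma\alpha}{T}$, then uses the function subtyping rule together with $\vdash T <: \sigma\beta$ and \textsc{TSubsump} to retype at $\funt{\sigma\alpha}{\sigma\beta}$.

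The three check cases are where the branching of the translation surfaces, but each is handled by a single $\lange$ rule. \textsc{DCheckRemove} is immediate: the check is erased, the output type is unchanged, and the IH already yields $\sigma\Gamma;\emptyset\vdash e:T$. \textsc{DCheckKeep} applies \textsc{TCheck} to the IH typing $e:\dyn$, giving $\checkx{e}{S}:\uplty{S}$. \textsc{DCheckFail} is discharged by \textsc{TFail}, which assigns $\mathtt{fail}$ any type and in particular $\uplty{S}$.

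I expect no deep obstacle; the only real care is bookkeeping around subtyping. Specifically, I must confirm that each subtyping side condition carried by a translation rule is exactly what \textsc{TSubsump} consumes, and that the \textsc{DAbs} case goes through the function subtyping rule (contravariant domain, covariant codomain) with reflexivity on $\sigma\alpha$ and $\vdash T <: \sigma\beta$ on the codomain. Transitivity is available from Lemma \ref{cp:lem:sttrans} should a chain of coercions arise, though a single \textsc{TSubsump} suffices in each case above. The remaining routine obligation is that applying $\sigma$ to a context commutes with variable lookup and context extension, which I use silently in \textsc{DVar} and \textsc{DAbs}.
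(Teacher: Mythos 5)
Your proposal is correct and matches the paper's proof in essence: both proceed by induction on the translation derivation, applying the induction hypothesis to each subterm and then the structurally matching $\lange$ rule, inserting \textsc{TSubsump} exactly where the translation rule carries a subtyping side condition (application argument, reference contents, updated value), and handling the three check cases by the IH alone, \textsc{TCheck}, and \textsc{TFail} respectively. The only cosmetic difference is in the abstraction case, where the paper applies \textsc{TSubsump} to the body at type $\sigma\beta$ \emph{before} \textsc{TAbs}, whereas you apply \textsc{TAbs} first and then subsume at the function type via covariant codomain subtyping — both are valid, and your uniform recipe also silently covers the routine let case that the paper spells out.
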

\begin{proof}
  By induction on $\Gamma;\sigma\vdash d \leadsto e: T$.
  \begin{description}
  \item[Case ] \textsc{OAbs}:\[
    \inferrule{\Gamma,x{:}\alpha;\sigma\vdash\ei \leadsto e : T \\ \vdash T <: \sigma\beta}
    {\Gamma;\sigma\vdash \fntx{x}{\alpha}{\beta}{\ei}\leadsto \fnx{x}{e}: \funt{\sigma\alpha}{\sigma\beta}}
      \]
      By the IH, $\sigma\Gamma,x{:}\sigma\alpha{;}\emptyset\vdash e:T$.\\
      By \textsc{TSubsump}, $\sigma\Gamma,x{:}T_1{;}\emptyset\vdash e:\sigma\beta$.\\
      By \textsc{TAbs}, $\sigma\LGE \fnx{x}{e}:\funt{\sigma\alpha}{\sigma\beta}$.
    \item[Case ] \textsc{OLet}:\[
      \inferrule{\Gamma;\sigma\vdash \ein{1}\leadsto e_1: T_1 \\ \Gamma,x{:}T_1;\sigma\vdash \ein{2}\leadsto e_2: T_2}
      {\Gamma;\sigma\vdash\letx{x}{\ein{1}}{\ein{2}}\leadsto\letx{x}{e_1}{e_2}: T_2}
      \]
      By the IH, $\sigma\LGE e_1:T_1$.\\
      By the IH, $\sigma\Gamma,x{:}T_1{;}\emptyset\vdash e_2:T_2$.\\
      By \textsc{TLet}, $\sigma\LGE \letx{x}{e_1}{e_2}:T_2$.
    \item[Case ] \textsc{OApp}:\[
      \inferrule{\Gamma;\sigma\vdash \ein{1}\leadsto e_1 :\funt{T_1}{T_2} \\
        \Gamma;\sigma\vdash \ein{2}\leadsto e_2 :T_1' \\ \vdash T_1' <: T_1}
      {\Gamma;\sigma\vdash \ein{1}~\ein{2}\leadsto (e_1~e_2)^\tyorig: T_2}\\[1ex]
      \]
      By the IH, $\sigma\LGE e_1:\funt{T_1}{T_2}$.\\
      By the IH, $\sigma\LGE e_2:T_1'$.\\
      By \textsc{TSubsump}, $\sigma\LGE e_2:T_2$.\\
      By \textsc{TApp}, $\sigma\LGE (e_1~e_2)^\tyorig: T_2$.
    \item[Case ] \textsc{ORef}: \[
      \inferrule{\Gamma;\sigma\vdash \ei\leadsto e:T \\ \vdash T <: \sigma\alpha}{\Gamma;\sigma\vdash \refxi{\ei}{\alpha}\leadsto \refx{e}:\reft{T}}
      \]
      By the IH, $\sigma\LGE e:T$.\\
      By \textsc{TSubsump}, $\sigma\LGE e:\sigma\alpha$.\\
      By \textsc{TRef}, $\sigma\LGE \refx{e}:\reft{\sigma\alpha}$.
    \item[Case ] \textsc{ODeref}:\[
      \inferrule{\Gamma;\sigma\vdash\ei\leadsto e : \reft{T}}
      {\Gamma;\sigma\vdash\derefx{\ei}\leadsto \derefxp{e}{\tyorig}: T}
      \]
      By the IH, $\sigma\LGE e:\reft{T}$.\\
      By \textsc{TDeref}, $\sigma\LGE \derefxp{e}{\tyorig} : T$.
    \item[Case ] \textsc{OUpdt}:\[
      \inferrule{\Gamma;\sigma\vdash\ein{1}\leadsto e_1: \reft{T} \\
        \Gamma;\sigma\vdash\ein{2}\leadsto e_2:T' \\ \vdash T' <: T}
      {\Gamma;\sigma\vdash\mutx{\ein{1}}{\ein{2}}\leadsto\mutxpINV{e_1}{e_2}{\tyorig}:\intt}\]
      By the IH, $\sigma\LGE e_1:\reft{T}$.\\
      By the IH, $\sigma\LGE e_2:T'$.\\
      By \textsc{TSubsump}, $\sigma\LGE e_2:T$.\\ 
      By \textsc{TDeref}, $\sigma\LGE \mutxpINV{e_1}{e_2}{\tyorig} : \intt$.
    \item[Case ] \textsc{OVar}: \[
      \inferrule{\Gamma(x)=A}
                {\Gamma;\sigma\vdash x\leadsto x:\sigma A}
      \]
      Have that $\sigma\Gamma(x)=\sigma A$.\\
      By \textsc{TVar}, $\sigma\LGE x:\sigma A$.
    \item[Case ] \textsc{OInt}: Immediate from \textsc{TInt}.
    \item[Case ] \textsc{OAdd}: \[
      \inferrule{\Gamma;\sigma\vdash \ein{1}\leadsto e_1: \intt \\ 
        \Gamma;\sigma\vdash \ein{2}\leadsto e_2:\intt}
      {\Gamma;\sigma\vdash \ein{1}~+~\ein{2}\leadsto e_1~+^\tyorig~e_2:\intt}
      \]
      By the IH, $\sigma\LGE e_1:\intt$.\\
      By the IH, $\sigma\LGE e_2:\intt$.\\
      By \textsc{TAdd}, $\sigma\LGE e_1~+^\tyorig~e_2:\intt$.
    \item[Case ] \textsc{OCheckRemove}:\[
      \inferrule{\Gamma;\sigma\vdash \ei\leadsto e : T \\ \lowlty{T} \preceq S}
      {\Gamma;\sigma\vdash \checkx{\ei}{S} \leadsto e : T}
      \]
      Immediate from the IH.
    \item[Case ] \textsc{OCheckKeep}: \[
      \inferrule{\Gamma;\sigma\vdash \ei\leadsto e : \dyn}
      {\Gamma;\sigma\vdash \checkx{\ei}{S} \leadsto \checkx{e}{S} : \uplty{S}}
      \]
      By the IH, $\sigma\LGE e:\dyn$.\\
      By \textsc{TCheck}, $\sigma\LGE \checkx{e}{S}:\uplty{S}$.\\
    \item[Case ] \textsc{OCheckFail}: \[
      \inferrule{\Gamma;\sigma\vdash \ei\leadsto e : T \\ T \neq \dyn \\ \lowlty{T} \not\preceq S}
      {\Gamma;\sigma\vdash \checkx{\ei}{S} \leadsto \mathtt{fail} : T'}
      \]
      Immediate from \textsc{TFail}.
    \end{description}
\end{proof}

\subsection{Soundness of $\lange$}\label{cp:apx:proof:lange}

\begin{lemma}[Inversion]\label{cp:lem:inversion}
  Suppose $\GS e:T$. Then:
  \begin{itemize}
  \item If $e=\fnx{x}{e}$, then there exists $T_1$ such that $\Gamma,x{:}T_1{;}\Sigma\vdash e:T_2$ and $\vdash T_1\to T_2 <: T$.
  \item If $e=a$, then $\Sigma(a)=T'$ and $\vdash \reft{T'} <: T$.
  \item If $e=n$, then $\vdash \intt <: T$.
  \item If $e=\checkx{e}{S}$, then either:
    \begin{enumerate}
    \item $\GS e:\dyn$ and $\vdash \uplty{S} <: T$, or
    \item $\GS e:T'$ and $\lowlty{T'} \preceq S$ and $\vdash T' <: T$, or
    \item $\GS e:T''$ and $\lowlty{T'}\not\preceq S$ and $T''\neq\dyn$.
    \end{enumerate}
  \item If $e=x$, then $\Gamma(x)=T'$ and $\vdash T' <: T$.
  \item If $e=(e_1~e_2)^\tyorig$, then $\GS e_1:\funt{T_1}{T_2}$ and $\GS e_2:T_1$ and $\vdash T_2 <: T$.
  \item If $e=(e_1~e_2)^\unorig$, then $\GS e_1:\dyn$ and $\GS e_2:\dyn$ and $\vdash \dyn <: T$.
  \item If $e=\letx{x}{e_1}{e_2}$, then $\GS e_1:T_1$ and $\Gamma{,}x:T_1{;}\Sigma\vdash e_2:T_2$ and $\vdash T_2 <: T$.
  \item If $e=\refx{e}$, then $\GS e:T'$ and $\vdash \reft{T'} <: T$.
  \item If $e=\derefxp{e}{\tyorig}$, then $\GS e:\reft{T'}$ and $\vdash T' <: T$.
  \item If $e=\derefxp{e}{\unorig}$, then $\GS e:\dyn$ and $\vdash \dyn <: T$.
  \item If $e=\mutxpINV{e_1}{e_2}{\tyorig}$, then $\GS e_1:\reft{T'}$ and $\GS e_2:T'$ and $\vdash \intt <: T$.
  \item If $e=\mutxpINV{e_1}{e_2}{\unorig}$, then $\GS e_1:\dyn$ and $\GS e_2:\dyn$ and $\vdash \intt <: T$.
  \item If $e=e_1~+^\tyorig~e_2$, then $\GS e_1:\intt$ and $\GS e_2:\intt$ and $\vdash \intt <: T$.
  \item If $e=e_1~+^\unorig~e_2$, then $\GS e_1:\dyn$ and $\GS e_2:\dyn$ and $\vdash \intt <: T$.
  \end{itemize}
\end{lemma}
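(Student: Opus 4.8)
The plan is to prove the lemma by induction on the derivation of $\GS e:T$, proceeding by case analysis on the last rule applied. The essential difficulty is the subsumption rule \textsc{TSubsump}, which can be applied arbitrarily many times at the end of a derivation; consequently I cannot invert on the syntactic shape of $e$ alone, and the induction must be on the derivation so that the subderivation appearing in a \textsc{TSubsump} step — which types the \emph{same} term $e$ at a smaller type — is available to the induction hypothesis.

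For each syntactic form of $e$ there are two kinds of final rule to consider. The first is the syntax-directed rule matching that form (e.g.\ \textsc{TAbs} for $\fnx{x}{e}$, \textsc{TAddr} for $a$, \textsc{TApp} for $(e_1~e_2)^\tyorig$, \textsc{TAdd} for $e_1~+^\tyorig~e_2$, and so on). In each such case I read the premises of the rule directly off the derivation; these are exactly the hypotheses demanded by the corresponding clause of the lemma, and the required bound $\vdash \cdots <: T$ holds by reflexivity of $<:$ (a standard property of the subtyping relation), since here $T$ is precisely the type produced by the rule.

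The second and only genuinely interesting kind of final rule is \textsc{TSubsump}: here $\GS e:T_1$ with $\vdash T_1 <: T$, where the subderivation types the same $e$. I apply the induction hypothesis to $\GS e:T_1$, obtaining the appropriate clause for $e$ but with $T_1$ in place of $T$; that clause supplies a bound $\vdash W <: T_1$ for the relevant type $W$ (for instance $W=\funt{T_1'}{T_2'}$ for an abstraction, $W=\intt$ for an addition, $W=\reft{T'}$ for a reference). I then compose this with $\vdash T_1 <: T$ using transitivity (Lemma \ref{cp:lem:sttrans}) to obtain $\vdash W <: T$, recovering the clause for $T$. This single transitivity step discharges every subsumption case uniformly.

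The real subtlety, and where I expect to spend the most care, is the check form $\checkx{e}{S}$, whose inversion clause is a three-way disjunction because three distinct rules can introduce it. The rules \textsc{TCheck} and \textsc{TCheckRedundant} yield the disjuncts carrying subtyping bounds ($\vdash \uplty{S} <: T$ and $\vdash T' <: T$, respectively), which propagate through \textsc{TSubsump} by transitivity exactly as above. The rule \textsc{TCheckFail}, however, produces an \emph{arbitrary} result type, so its disjunct records only that the premise type $T''$ satisfies $T''\neq\dyn$ and $\lowlty{T''}\not\preceq S$ and carries no bound on $T$; a subsequent \textsc{TSubsump} therefore leaves this disjunct untouched, which is exactly why it is stated without a subtyping constraint. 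The point to get right is this interaction between the non-subtyping-bounded rules (\textsc{TCheckFail}, and analogously \textsc{TFail}) and subsumption: once one observes that such a disjunct is preserved verbatim under subsumption, every remaining case is routine.
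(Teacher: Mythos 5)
Your proposal is correct and follows essentially the same route as the paper, whose entire proof is the one-line ``Induction on $\GS e:T$'': you have simply spelled out that induction, with the syntax-directed cases discharged by reflexivity and the \textsc{TSubsump} case by the induction hypothesis plus transitivity (Lemma~\ref{cp:lem:sttrans}). Your observation that the \textsc{TCheckFail} disjunct carries no subtyping bound and so passes through subsumption unchanged is exactly the point that makes the three-way disjunction in the check clause work.
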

\begin{proof}
  Induction on $\GS e:R$.
\end{proof}

\begin{lemma}[Canonical forms]\label{cp:lem:apx:canonical}
  Suppose $\emptyset{;}\Sigma\vdash v:T$ and $\Sigma\vdash\mu$.
  \begin{enumerate}
  \item If $T=\intt$, then $v=n$.
  \item If $T=\funt{T_1}{T_2}$, then $v=\fnx{x}{e}$.
  \item If $T=\reft{T'}$, then $v=a$ and $\mu(a)=v'$.
  \item If $T=\dyn$, then $\exists T'$, $T'\neq\dyn$, such that $\emptyset{;}\Sigma\vdash v:T'$.
  \end{enumerate}
\end{lemma}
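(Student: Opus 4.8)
The plan is to case on the syntactic form of the value $v$, which by the grammar is one of $a$, $n$, or $\fnx{x}{e}$, and in each part to keep only the form compatible with $T$ by ruling out the others. The single tool I need beyond the Inversion Lemma (Lemma~\ref{cp:lem:inversion}) is a characterization of subtyping: if $\vdash T_1 <: T_2$ and $T_1$ is one of $\intt$, $\funt{\cdot}{\cdot}$, or $\reft{\cdot}$, then $T_2$ shares the top-level constructor of $T_1$ or $T_2 = \dyn$; in particular two distinct concrete constructors are never related. I would establish this by a short induction on the subtyping derivation (equivalently, by inspection of the subtyping rules), and it is the real content of the argument, since the presence of \textsc{TSubsump} means a value's derived type $T$ need not be its syntactically obvious type.

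For part 1 ($T = \intt$) I apply Inversion to $\emptyset{;}\Sigma\vdash v:\intt$ under each form of $v$. If $v = a$, Inversion yields $\vdash \reft{T'} <: \intt$, which the subtyping characterization forbids; if $v = \fnx{x}{e}$, Inversion yields $\vdash \funt{T_1}{T_2} <: \intt$, again forbidden. Hence $v = n$. Parts 2 and 3 are symmetric: in part 2 only $v = \fnx{x}{e}$ survives, and in part 3 only $v = a$ survives, the other forms being excluded by the same characterization.

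Part 3 additionally requires $\mu(a) = v'$, and this is where $\Sigma\vdash\mu$ enters. Inversion on $\emptyset{;}\Sigma\vdash a:\reft{T'}$ gives $\Sigma(a) = T''$, so $a \in \dom{\Sigma}$; since $\Sigma\vdash\mu$ forces $\dom{\Sigma} = \dom{\mu}$, we get $a \in \dom{\mu}$, i.e. $\mu(a) = v'$ for some $v'$. For part 4 ($T = \dyn$) I again case on $v$ but, rather than excluding cases, reconstruct a concrete typing: Inversion supplies the needed components ($\Sigma(a) = T'$ for an address, the body typing with codomain $T_2$ for a lambda, and nothing extra for $n$), and then \textsc{TAddr}, \textsc{TAbs}, and \textsc{TInt} respectively re-derive $\emptyset{;}\Sigma\vdash v:\reft{T'}$, $\emptyset{;}\Sigma\vdash v:\funt{T_1}{T_2}$, and $\emptyset{;}\Sigma\vdash v:\intt$, each a non-$\dyn$ type, as required.

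The main obstacle is the subtyping characterization step: everything downstream is routine case analysis, but the soundness of the whole argument rests on confirming that the subtyping rules never relate two distinct concrete constructors and treat $\dyn$ only as a maximal element. Provided that property holds, the case eliminations in parts 1--3 and the reconstructions in part 4 go through immediately.
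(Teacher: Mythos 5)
Your proof is correct, but it takes a genuinely different route from the paper's. The paper proves each part by a fresh induction on the typing derivation $\emptyset{;}\Sigma\vdash v:T$: all cases are vacuous except the relevant introduction rule (\textsc{TInt}, \textsc{TAbs}, \textsc{TAddr}) and \textsc{TSubsump}, and in the \textsc{TSubsump} case it uses the inline subtyping fact in the \emph{upward} direction (e.g.\ $\vdash T_1 <: \intt$ forces $T_1 = \intt$) before invoking the induction hypothesis; part 4 likewise needs only the \textsc{TSubsump} case, with no analysis of the value grammar at all. You instead case on the syntactic form of $v$, reuse Lemma~\ref{cp:lem:inversion}, and discharge the impossible combinations with a \emph{downward} characterization (a concrete type's supertypes share its constructor or are $\dyn$). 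Both directions are equally easy inspections of the subtyping rules, and there is no circularity in your dependence on Lemma~\ref{cp:lem:inversion}, since that lemma is proven by induction on typing independently of canonical forms. What your route buys is economy: the derivation-analysis work is done once inside the Inversion lemma rather than repeated four times, and your part 3 treatment of $\mu(a)=v'$ via $\dom{\Sigma}=\dom{\mu}$ matches the paper's exactly. What it costs is that the subtyping characterization, which the paper leaves as an inline per-case observation, becomes a load-bearing lemma you must actually state and prove --- you correctly identify this as the real content; note also that, as you stated it (hypothesis restricted to concrete $T_1$), it suffices only because Inversion always hands you a subtyping judgment whose left-hand side is the concrete type of a value form, so the case $T_1=\dyn$ never arises in your argument.
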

\begin{proof}
  We prove each part separately.
  \begin{enumerate}
  \item We prove by induction on $\emptyset{;}\Sigma\vdash v:\intt$. Most cases vacuous.
    \begin{description}
    \item[Case ]\textsc{TInt} \[
      \inferrule{ }{\emptyset{;}\Sigma \vdash n : \intt}
      \]
      Immediate.
    \item[Case ]\textsc{TSubsump}  \[
      \inferrule{\emptyset{;}\Sigma \vdash v : T_1 \\ \vdash T_1 <: \intt}{\emptyset{;}\Sigma \vdash v : \intt}
      \]
      Since $\vdash T_1 <: \intt$, $T_1 = \intt$.\\
      By the IH, $v=n$.
    \end{description}
  \item We prove by induction on $\emptyset{;}\Sigma\vdash v:\funt{T_1}{T_2}$. Most cases vacuous.
    \begin{description}
    \item[Case ]\textsc{TAbs} \[
      \inferrule{x{:}T_1{;}\Sigma\vdash e:T_2}{\emptyset{;}\Sigma \vdash \fnx{x}{e} : \funt{T_1}{T_2}}
      \]
      Immediate.
    \item[Case ]\textsc{TSubsump}  \[
      \inferrule{\emptyset{;}\Sigma \vdash v : T' \\ \vdash T' <: \funt{T_1}{T_2}}{\emptyset{;}\Sigma \vdash v : \funt{T_1}{T_2}}
      \]
      Since $\vdash T' <: \funt{T_1}{T_2}$, $T' = \funt{T_1'}{T_2'}$.\\
      By the IH, $v=\fnx{x}{e}$.
    \end{description}
  \item We prove by induction on $\emptyset{;}\Sigma\vdash v:\reft{T'}$. Most cases vacuous.
    \begin{description}
    \item[Case ]\textsc{TAddr} \[
      \inferrule{\Sigma(a)=T'}{\emptyset{;}\Sigma \vdash a : \reft{T'}}
      \]
      Immediately have $v=a$. Since $\Sigma(a)=T'$, exists $v'$ such that $\mu(a)=v'$.
    \item[Case ]\textsc{TSubsump}  \[
      \inferrule{\emptyset{;}\Sigma \vdash v : T'' \\ \vdash T'' <: \reft{T'}}{\emptyset{;}\Sigma \vdash v : \reft{T'}}
      \]
      Since $\vdash T'' <: \reft{T'}$, $T'' = \reft{T'''}$.\\
      By the IH, $v=a$ and $\mu(a)=v'$.
    \end{description}
  \item We prove by induction on $\emptyset{;}\Sigma\vdash v:\dyn$. Most cases vacuous.
    \begin{description}
    \item[Case ]\textsc{TSubsump}  \[
      \inferrule{\emptyset{;}\Sigma \vdash v : T' \\ \vdash T' <: \dyn}{\emptyset{;}\Sigma \vdash v : \dyn}
      \]
      If $T'=\dyn$, then we apply the IH to find that $\exists T''$, $T''\neq\dyn$, such that $\emptyset{;}\Sigma\vdash v:T''$.\\
      If $T'\neq\dyn$, then immediate.
    \end{description}
  \end{enumerate}
\end{proof}

\begin{lemma}[Heap weakening]\label{cp:lem:apx:heapweak}
  If $\GS e:T$ and $\Sigma'\ledyn\Sigma$, then $\Gamma{;}\Sigma'\vdash e:T$.
\end{lemma}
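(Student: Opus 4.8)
The plan is to proceed by induction on the derivation of $\GS e:T$, holding both $\Sigma$ and $\Sigma'$ fixed throughout. Recall that $\Sigma'\ledyn\Sigma$ asserts (Figure \ref{cp:fig:apx:ht}) that $\Sigma'(a)=\Sigma(a)$ for every $a\in\dom{\Sigma}$, so $\Sigma'$ retains all of the bindings of $\Sigma$ with identical types. Since the store typing is consulted only when typing an address, the single instructive case is \textsc{TAddr}.

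First I would dispatch \textsc{TAddr}: here $e=a$ and the premise gives $\Sigma(a)=T$, so in particular $a\in\dom{\Sigma}$. By the definition of $\Sigma'\ledyn\Sigma$ we obtain $\Sigma'(a)=\Sigma(a)=T$, and re-applying \textsc{TAddr} yields $\Gamma{;}\Sigma'\vdash a:\reft{T}$, as required.

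For every other rule the store typing is either unused (as in \textsc{TVar}, \textsc{TInt}, \textsc{TFail}) or simply threaded unchanged into the typing premises. In each such case I would apply the induction hypothesis to every typing premise---invoking the hypothesis $\Sigma'\ledyn\Sigma$, which remains available because neither $\Sigma$ nor $\Sigma'$ ever changes---and then reassemble the conclusion with the same rule. For the binder-introducing rules \textsc{TAbs} and \textsc{TLet} a premise is typed under the extended context $\Gamma,x{:}T_1$, but since $\Sigma$ is held fixed the induction hypothesis applies verbatim with that context. The side conditions appearing in \textsc{TSubsump} and in the check rules ($\vdash T_1<:T_2$, $\lowlty{T}\preceq S$, $T\neq\dyn$, and the use of $\uplty{S}$) refer only to types and never to $\Sigma$, so they transfer without change.

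I do not anticipate a genuine obstacle: this is a routine store-typing weakening argument driven entirely by the \textsc{TAddr} case. The one point requiring care is the orientation of $\ledyn$---one must verify that it is $\Sigma'$, the larger store typing, that agrees with $\Sigma$ on $\dom{\Sigma}$, so that the address lookup underlying \textsc{TAddr} is preserved when passing from $\Sigma$ to $\Sigma'$ rather than lost.
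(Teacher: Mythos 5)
Your proof is correct and takes essentially the same approach as the paper's: induction on the derivation of $\GS e:T$, with \textsc{TAddr} as the only interesting case (where $\Sigma'\ledyn\Sigma$ gives $\Sigma'(a)=\Sigma(a)=T$ and \textsc{TAddr} is re-applied), and all remaining cases dispatched by applying the induction hypothesis and reassembling with the same rule. You also get the orientation of $\ledyn$ right, which is the only subtle point.
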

\begin{proof}
  By induction on $\GS e:T$. Only interesting case:
  \begin{description}
  \item[Case ] \textsc{TAddr}: \[
    \inferrule{\Sigma(a)=T}{\GS a:\reft{T}}
    \]
    Because $\Sigma'\ledyn\Sigma$, $\Sigma'(a)=T$.\\
    By \textsc{TAddr}, $\GS a:\reft{T}$.
  \end{description}
\end{proof}

\begin{lemma}[Heap extension]\label{cp:lem:heapext}
  If $\Sigma\vdash\mu$ and $\emptyset{;}\Sigma{,}a{:}T\vdash v:T$ and
  $a\not\in\dom{\Sigma}$, then $\Sigma{,}a{:}T\vdash\mu[a:=v]$.
\end{lemma}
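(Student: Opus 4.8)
The plan is to unfold the definition of the heap-typing judgment $\Sigma{,}a{:}T\vdash\mu[a:=v]$ from Figure~\ref{cp:fig:apx:moresem}. That rule has exactly two obligations: the domains must coincide, $\dom{\Sigma{,}a{:}T}=\dom{\mu[a:=v]}$, and every stored value must be well-typed at its declared type under the \emph{full} extended store typing, i.e.\ $\emptyset{;}\Sigma{,}a{:}T\vdash(\mu[a:=v])(a'):(\Sigma{,}a{:}T)(a')$ for each $a'\in\dom{\Sigma{,}a{:}T}$. So the whole proof is discharging these two obligations; no induction is needed.

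For the domain obligation, I would invert $\Sigma\vdash\mu$ to obtain $\dom{\Sigma}=\dom{\mu}$, and then observe that since $a\not\in\dom{\Sigma}$ the extension adds the single fresh address $a$ to both sides, giving $\dom{\Sigma{,}a{:}T}=\dom{\Sigma}\cup\{a\}=\dom{\mu}\cup\{a\}=\dom{\mu[a:=v]}$. For the pointwise obligation I would fix an arbitrary $a'\in\dom{\Sigma{,}a{:}T}$ and case-split on whether $a'=a$. If $a'=a$, then the lookups reduce to $(\mu[a:=v])(a)=v$ and $(\Sigma{,}a{:}T)(a)=T$, so the required judgment $\emptyset{;}\Sigma{,}a{:}T\vdash v:T$ is precisely the hypothesis of the lemma. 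If $a'\neq a$, then $a'\in\dom{\Sigma}$, both lookups are untouched by the extension (so $(\mu[a:=v])(a')=\mu(a')$ and $(\Sigma{,}a{:}T)(a')=\Sigma(a')$), and inverting $\Sigma\vdash\mu$ already gives $\emptyset{;}\Sigma\vdash\mu(a'):\Sigma(a')$. It then remains to promote this typing to the larger store typing $\Sigma{,}a{:}T$, which is exactly Heap weakening (Lemma~\ref{cp:lem:apx:heapweak}).

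The only genuinely substantive step---the ``obstacle,'' such as it is---is verifying the side condition needed to apply Lemma~\ref{cp:lem:apx:heapweak}, namely $\Sigma{,}a{:}T\ledyn\Sigma$. By the definition of $\ledyn$ in Figure~\ref{cp:fig:apx:ht} this requires that $\Sigma{,}a{:}T$ and $\Sigma$ agree on all of $\dom{\Sigma}$, and this holds precisely because the freshness assumption $a\not\in\dom{\Sigma}$ guarantees that adjoining the binding $a{:}T$ disturbs no existing binding of $\Sigma$. With that dispatched, both obligations are met and the rule for $\Sigma{,}a{:}T\vdash\mu[a:=v]$ applies, completing the proof. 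The single point to watch is the direction of $\ledyn$: here the \emph{larger} typing $\Sigma{,}a{:}T$ is the one standing on the left (the weaker side), matching the orientation in which Lemma~\ref{cp:lem:apx:heapweak} moves a derivation from $\Sigma$ down to an extending $\Sigma'$.
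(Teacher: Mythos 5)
Your proposal is correct and follows essentially the same route as the paper's proof: establish $\Sigma{,}a{:}T\ledyn\Sigma$ from the freshness of $a$, split on $a'=a$ (where the hypothesis $\emptyset{;}\Sigma{,}a{:}T\vdash v:T$ applies directly) versus $a'\neq a$ (where the typing from $\Sigma\vdash\mu$ is promoted to the extended store typing via Lemma~\ref{cp:lem:apx:heapweak}), and conclude. The only difference is that you explicitly discharge the domain-equality obligation, which the paper leaves implicit; your reading of the orientation of $\ledyn$ matches the paper's exactly.
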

\begin{proof}
  Since $a\not\in\dom{\Sigma}$, for all $a' \in \dom{\Sigma}$, $\Sigma(a')=(\Sigma{,}a{:}T)(a')$.\\
  Therefore $\Sigma{,}a{:}T \ledyn \Sigma$.\\
  Suppose $a'\in\dom{\Sigma{,}a{:}T}$. If $a=a'$, then immediately $\emptyset{;}\Sigma{,}a{:}T\vdash \mu[a:=v](a):(\Sigma{,}a{:}T)(a)$.\\
  If $a\neq a'$, then $\ES \mu(a'):\Sigma(a')$.\\
  Immediately have $\mu(a')=\mu[a:=v](a')$ and $\Sigma(a')=(\Sigma{,}a{:}T)(a')$.\\
  By Lemma \ref{cp:lem:apx:heapweak}, $\emptyset{;}\Sigma{,}a{:}T \vdash \mu[a:=v](a'):(\Sigma{,}a{:}T)(a')$.\\
  Therefore, $\Sigma{,}a{:}T\vdash\mu[a:=v]$.
\end{proof}

\begin{lemma}[Substitution]\label{cp:lem:apx:subst}
  If $\Gamma{,}x{:}T_1{;}\Sigma\vdash e:T_2$ and $\GS v:T_1$, then $\GS e[x/v]:T_2$.
\end{lemma}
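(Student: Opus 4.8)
The plan is to proceed by induction on the derivation of $\Gamma{,}x{:}T_1{;}\Sigma\vdash e:T_2$, with the shape of $e$ determining the last rule used. Most rules are \emph{congruence} cases: capture-avoiding substitution commutes with the term former in question, each typing premise is a judgment about a subterm under the unchanged context $\Gamma{,}x{:}T_1{;}\Sigma$, so I apply the induction hypothesis to every such premise and then reassemble the conclusion with the same rule on the substituted subterms. Premises that are subtyping judgments (as in \textsc{TSubsump}) or purely syntactic side conditions (the premises $\lowlty{T}\preceq S$, $\lowlty{T}\not\preceq S$, and $T\neq\dyn$ in the \textsc{TCheck} variants) do not mention the term context and carry over verbatim. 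The axiom cases \textsc{TInt}, \textsc{TAddr}, and \textsc{TFail} are immediate, since their subjects $n$, $a$, and $\mathtt{fail}$ contain no free variable $x$, and \textsc{TAddr} depends only on $\Sigma$, which substitution leaves untouched.

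The heart of the argument is the variable case \textsc{TVar}, where $e$ is some variable $y$ with $(\Gamma{,}x{:}T_1)(y)=T_2$. If $y=x$, then $T_2=T_1$ and $x[x/v]=v$, so the goal $\GS v:T_1$ is exactly the second hypothesis of the lemma. If $y\neq x$, then $(\Gamma{,}x{:}T_1)(y)=\Gamma(y)=T_2$ and $y[x/v]=y$, so \textsc{TVar} delivers $\GS y:T_2$ directly.

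The binder cases \textsc{TAbs} and \textsc{TLet} require the only real care. Adopting the usual convention that the bound variable $y$ is chosen distinct from $x$ and not free in $v$, the premise of \textsc{TAbs} reads $\Gamma{,}x{:}T_1{,}y{:}T_1'{;}\Sigma\vdash e':T_2'$. Treating contexts as finite maps, this context coincides with $(\Gamma{,}y{:}T_1'){,}x{:}T_1$, so I may invoke the induction hypothesis at the enlarged context $\Gamma{,}y{:}T_1'$ — but only once I know $\Gamma{,}y{:}T_1'{;}\Sigma\vdash v:T_1$. Obtaining this from $\GS v:T_1$ is a weakening of the value-typing hypothesis by the fresh binding $y$, and it is the step I expect to be the main (if mild) obstacle, since no context-weakening lemma for $\lange$ is stated. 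I will discharge it with a weakening property analogous to Lemma~\ref{cp:lem:weakening} (any $\Gamma'$ agreeing with $\Gamma$ on $\mathit{dom}(\Gamma)$ preserves $\lange$ typing), itself a routine induction on the typing derivation whose only nontrivial clause is \textsc{TVar}. The induction hypothesis then yields $\Gamma{,}y{:}T_1'{;}\Sigma\vdash e'[x/v]:T_2'$, and since $(\fnx{y}{e'})[x/v]=\fnx{y}{e'[x/v]}$ and $T_2=\funt{T_1'}{T_2'}$, rule \textsc{TAbs} repackages this as the desired $\GS (\fnx{y}{e'})[x/v]:T_2$. The \textsc{TLet} case is identical, using \textsc{TLet} in place of \textsc{TAbs} and weakening the typing of $v$ past the let-bound variable.
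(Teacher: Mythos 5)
Your proof is correct and follows the same overall route as the paper's: induction on the typing derivation, with \textsc{TVar}, \textsc{TAbs}, and \textsc{TLet} as the only interesting cases, and with the variable case resolved exactly as you resolve it. The one structural difference is in the binder cases: the paper does not invoke the bound-variable convention but instead case-splits on whether the bound variable $y$ equals $x$ --- when $y=x$ the substitution stops at the binder (so no induction hypothesis is needed, since the inner binding shadows the outer one and the premise already types under $\Gamma{,}y{:}T_1'$), and when $y\neq x$ it permutes the context and applies the induction hypothesis, as you do. Notably, in that $y\neq x$ case the paper applies the IH at the context $\Gamma{,}y{:}T_1'$ without ever justifying that $v$ still types in the extended context; that is, it silently elides precisely the weakening step you single out as the main obstacle, so your treatment is the more careful of the two. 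One small correction to your remark that no context-weakening lemma for $\lange$ is stated: Lemma~\ref{cp:lem:apx:eweakening} in the analysis-correctness subsection is exactly that lemma --- it simply appears after the substitution lemma and is never invoked in its proof --- so you could cite it rather than reprove it, at the cost of a mild forward reference.
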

\begin{proof}
  By induction on $\Gamma{,}x{:}T_1{;}\Sigma\vdash e:T_2$. Only interesting cases:
  \begin{description}
  \item[Case] \textsc{TVar}: \[
    \inferrule{(\Gamma{,}x{:}T_1)(y)=T_2}{\Gamma{,}x{:}T_1{;}\Sigma\vdash y:T_2}
    \]
    If $y\neq x$, then $\Gamma(x)=T_2$. Then by \textsc{TVar}, $\GS y:T_2$.\\
    If $y=x$, then $y[x/v]=v$ and $(\Gamma{,}x{:}T_1)(x)=T_1$, so $T_1 = T_2$. Have immediately that $\GS v:T_2$.
  \item[Case] \textsc{TAbs}: \[
    \inferrule{\Gamma{,}x{:}T_1{,}y{:}T_1'{;}\Sigma\vdash e:T_2'} 
    {\Gamma{,}x{:}T_1{;}\Sigma\vdash \fnx{y}{e}:\funt{T_1'}{T_2'}}
    \]
    If $y=x$, then $(\fnx{y}{e})[x/v]=\fnx{y}{e}$ and $\Gamma{,}x{:}T_1{,}y{:}T_1'\equiv \Gamma{,}y{:}T_1'$.\\
    Therefore $\Gamma{,}y{:}T_1'{;}\Sigma\vdash e:T_2'$.\\
    By \textsc{TAbs}, $\GS \fnx{y}{e}:\funt{T_1'}{T_2'}$.\\
    If $y\neq x$, then $\Gamma{,}x{:}T_1{,}y{:}T_1'\equiv \Gamma{,}y{:}T_1'{,}x{:}T_1$.\\
    Therefore $\Gamma{,}y{:}T_1'{,}x{:}T_1{;}\Sigma\vdash e:T_2'$.\\
    By the IH, $\Gamma{,}y{:}T_1'\Sigma\vdash e[x/v]:T_2'$.\\
    By \textsc{TAbs}, $\GS \fnx{y}{e[x/v]}:\funt{T_1'}{T_2'}$.
  \item[Case] \textsc{TLet}:\[
    \inferrule{\Gamma{,}x{:}T_1{;}\Sigma\vdash e_1:T' \\ \Gamma{,}x{:}T_1{,}y{:}T'{;}\Sigma\vdash e_2:T_2}
    {\Gamma{,}x{:}T_1{;}\Sigma\vdash \letx{y}{e_1}{e_2}:T_2}
    \]
    By the IH, $\GS e_1[x/v]:T'$.\\
    If $y=x$, then $(\letx{y}{e_1}{e_2})[x/v]=\letx{y}{e_1[x/v]}{e_2}$ and $\Gamma{,}x{:}T_1{,}y{:}T'\equiv \Gamma{,}y{:}T'$.\\
    Therefore $\Gamma{,}y{:}T'{;}\Sigma\vdash e_2:T_2$.\\
    By \textsc{TLet}, $\GS \letx{y}{e_1[x/v]}{e_2}:T_2$.\\
    If $y\neq x$, then $\Gamma{,}x{:}T_1{,}y{:}T'\equiv \Gamma{,}y{:}T'{,}x{:}T_1$.\\
    Therefore $\Gamma{,}y{:}T'{,}x{:}T_1{;}\Sigma\vdash e_2:T_2$.\\
    By the IH, $\Gamma{,}y{:}T'\Sigma\vdash e_2[x/v]:T_2$.\\
    By \textsc{TLet}, $\GS \letx{y}{e_1[x/v]}{e_2[x/v]}:T_2$.
  \end{description}
\end{proof}

\begin{lemma}\label{cp:lem:apx:nothastype}
  If $\emptyset;\Sigma\vdash v:T$ and $\lowlty{T} \not\preceq S$ and $S \not\preceq \lowlty{T}$, then $\neq\hastypenb{v}{S}$.
\end{lemma}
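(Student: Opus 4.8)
The plan is to induct on the typing derivation $\emptyset;\Sigma\vdash v:T$. Before starting, I would unfold the two incomparability hypotheses. The relation $\preceq$ has $\dyn$ as its top element and is otherwise discrete, so $s\preceq s'$ holds exactly when $s=s'$ or $s'=\dyn$. Consequently $\lowlty{T}\not\preceq S$ and $S\not\preceq\lowlty{T}$ together yield three facts: $S\neq\dyn$, $\lowlty{T}\neq\dyn$, and $S\neq\lowlty{T}$. In particular $\lowlty{T}\neq\dyn$ forces $T$ to be a concrete type, so $\lowlty{T}$ is one of $\intt$, the function shape, or $\refs$.

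Because $v$ is a value, only four rules can derive $\emptyset;\Sigma\vdash v:T$: \textsc{TInt}, \textsc{TAbs}, \textsc{TAddr}, and \textsc{TSubsump}. The three base cases are immediate, since each simultaneously fixes the form of $v$ and the top constructor of $T$. For instance, \textsc{TInt} gives $v=n$ and $T=\intt$, so $\lowlty{T}=\intt$; the rules defining $\hastypenb{n}{S}$ conclude only for $S=\intt$ or $S=\dyn$, and both are excluded by $S\neq\lowlty{T}$ and $S\neq\dyn$. The \textsc{TAbs} and \textsc{TAddr} cases are handled identically, giving $v=\fnx{x}{e}$ with $\lowlty{T}$ the function shape, and $v=a$ with $\lowlty{T}=\refs$, respectively; in each case $\hastypenb{v}{S}$ can hold only when $S$ equals $\lowlty{T}$ or $\dyn$, both impossible. (These value/type correspondences are exactly those of Lemma \ref{cp:lem:apx:canonical}, but here they follow directly from the rule shapes, without needing a well-typed store.)

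The only case needing care is \textsc{TSubsump}, where $\emptyset;\Sigma\vdash v:T_1$ and $\vdash T_1<:T$. To invoke the induction hypothesis I must re-establish the two incomparability conditions with $T_1$ in place of $T$, which amounts to showing $\lowlty{T_1}=\lowlty{T}$. Since $T\neq\dyn$, write $\lowlty{T}=s_0$ with $s_0$ a concrete shape. Applying Lemma \ref{cp:lem:stsameconst} to the reflexive fact $\lowlty{T}\preceq s_0$ and the subtyping $\vdash T_1<:T$ gives $\lowlty{T_1}\preceq s_0$; because $s_0\neq\dyn$, this forces $\lowlty{T_1}=s_0=\lowlty{T}$. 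Hence $\lowlty{T_1}\not\preceq S$ and $S\not\preceq\lowlty{T_1}$ hold verbatim, and the induction hypothesis yields $\neg\hastypenb{v}{S}$.

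The main obstacle is precisely this \textsc{TSubsump} step: subtyping may alter the full type, so the conditions stated for $T$ cannot be reused directly, and the argument hinges on the fact that subtyping into a concrete type preserves its top constructor. I would discharge this with Lemma \ref{cp:lem:stsameconst} as above. Everything else is routine once the flat, $\dyn$-topped structure of $\preceq$ has been unfolded.
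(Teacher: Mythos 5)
Your proof is correct, and it reaches the paper's conclusion by a genuinely different decomposition. The paper argues by cases on $T$: for $T=\intt$, $T=\funt{T_1}{T_2}$, and $T=\reft{T'}$ it invokes the canonical-forms lemma (Lemma \ref{cp:lem:apx:canonical}) to pin down the shape of $v$, notes that the two incomparability hypotheses exclude $S\in\{\lowlty{T},\dyn\}$, and declares the case $T=\dyn$ vacuous; all reasoning about \textsc{TSubsump} is thereby delegated to the canonical-forms lemma. You instead induct on the derivation $\emptyset;\Sigma\vdash v:T$, dispatch the three value-forming rules \textsc{TInt}, \textsc{TAbs}, \textsc{TAddr} directly from the rules defining $\hastypenb{v}{S}$, and handle \textsc{TSubsump} yourself by proving $\lowlty{T_1}=\lowlty{T}$ via Lemma \ref{cp:lem:stsameconst} (applied to the reflexive fact $\lowlty{T}\preceq\lowlty{T}$ together with $\vdash T_1<:T$, using $\lowlty{T}\neq\dyn$), which re-establishes the hypotheses for the induction hypothesis. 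The content is ultimately the same---subtyping into a non-$\dyn$ type preserves its constructor, and the constructor of $T$ fixes which tags $v$ can carry---but your version buys a small amount of rigor: Lemma \ref{cp:lem:apx:canonical} as stated assumes $\Sigma\vdash\mu$, which is \emph{not} a hypothesis of this lemma, so the paper's citation is technically not licensed (harmlessly, since the shape conclusions it needs never use the heap); your derivation-level argument, as you note, avoids the well-typed-store assumption entirely. What the paper's route buys in exchange is brevity, reusing a lemma it needs anyway rather than re-inlining the subsumption analysis.
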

\begin{proof}
  By cases on $T$.
  \begin{description}
  \item[Case] $T=\intt$: By Lemma \ref{cp:lem:apx:canonical}, $v=n$. Have that $S \not\in\{\intt,\dyn\}$, so $\neg\hastypenb{n}{S}$.
  \item[Case] $T=T_1\to T_2$: By Lemma \ref{cp:lem:apx:canonical}, $v=\lambda x.~e$. Have that $S \not\in\{\funs,\dyn\}$, so $\neg\hastypenb{\lambda x.~e}{S}$.
  \item[Case] $T=\reft{T'}$: By Lemma \ref{cp:lem:apx:canonical}, $v=a$. Have that $S \not\in\{\refs,\dyn\}$, so $\neg\hastypenb{a}{S}$.
  \item[Case] $T=\dyn$: Vacuous.
  \end{description}
\end{proof}

\begin{lemma}[Preservation]\label{cp:lem:apx:preservation}
  Suppose $\emptyset{;}\Sigma\vdash e:T$ and $\Sigma\vdash\mu$. If
  $\cfg{e}{\mu}\steps\cfg{e'}{\mu'}$, then $\emptyset{;}\Sigma'\vdash
  e':T$ and $\Sigma'\vdash\mu'$ and $\Sigma' \sqsubseteq \Sigma$.
\end{lemma}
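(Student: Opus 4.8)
The plan is to proceed by a flat case analysis on which head-reduction rule justifies $\cfg{e}{\mu}\steps\cfg{e'}{\mu'}$. Because $\steps$ is the redex-level relation (the evaluation-context rules belong to $\longmapsto$), no induction and no context decomposition are needed: in each applicable case $e$ is itself the redex. The rules \textsc{ECheckFail} and \textsc{EFail} step to $\mathtt{fail}$ rather than to a configuration, so they cannot match the hypothesis $\cfg{e}{\mu}\steps\cfg{e'}{\mu'}$ and are vacuous. In every remaining case the recipe is the same: apply the Inversion lemma (Lemma~\ref{cp:lem:inversion}) to the typing of the redex, re-type the contractum, and reassemble using \textsc{TSubsump} with transitivity of subtyping (Lemma~\ref{cp:lem:sttrans}). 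I will treat the store-preserving rules, where I take $\Sigma'=\Sigma$ so that $\Sigma'\sqsubseteq\Sigma$ and $\Sigma'\vdash\mu'$ are immediate, separately from the two rules that touch the store.

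For the purely functional redexes this is routine. For \textsc{EApp} at witness $\tyorig$, inversion gives $\ES e_1:\funt{T_1}{T_2}$, $\ES v:T_1$, and $\vdash T_2<:T$; a second inversion on the abstraction $e_1=\fnx{x}{e}$ yields a body typing under $x$ bound to a supertype of $T_1$ with result type below $T_2$, so Substitution (Lemma~\ref{cp:lem:apx:subst}) on $v$ and subsumption give $\ES e[x/v]:T$. \textsc{ELet} is identical with the bound type in place of the argument type, and \textsc{EAdd} is immediate since the contractum $n'$ has type $\intt$ and inversion yields $\vdash\intt<:T$. The unchecked ($\unorig$) variants of \textsc{EApp}, \textsc{EDeref}, \textsc{EUpdt}, and \textsc{EAdd} use the corresponding dynamic inversion clauses; since their principal value is typed at $\dyn$, Canonical forms (Lemma~\ref{cp:lem:apx:canonical}) pins down its shape and lets me retype it at $\dyn$ (for a function, at $\dyn\to\dyn$), keeping the result at $\dyn$.

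For the store rules I must additionally produce $\Sigma'$ and re-establish $\Sigma'\vdash\mu'$. For \textsc{ERef}, inversion gives $\ES v:T'$ with $\vdash\reft{T'}<:T$; I choose $\Sigma'=\Sigma,a{:}T'$ (legitimate since $a$ is fresh, whence $\Sigma'\sqsubseteq\Sigma$), retype the contractum by \textsc{TAddr} and subsumption as $\emptyset;\Sigma'\vdash a:T$, and obtain $\Sigma'\vdash\mu[a := v]$ from Heap extension (Lemma~\ref{cp:lem:heapext}) after lifting $\ES v:T'$ to $\Sigma'$ via Heap weakening (Lemma~\ref{cp:lem:apx:heapweak}). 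For \textsc{EUpdt} the domain is unchanged, so $\Sigma'=\Sigma$; the contractum $0$ types at $\intt<:T$, and the only real obligation is that the overwritten cell stays well typed. This holds because in the $\tyorig$ case reference subtyping is invariant, forcing $\Sigma(a)$ to equal the content type $T'$ against which $v$ is checked, and in the $\unorig$ case $\ES a:\dyn$ forces $\Sigma(a)=\dyn$, so the written $\dyn$-value is admissible. \textsc{EDeref} is dual: the stored value has type $\Sigma(a)$, which coincides with the content type demanded by inversion, so subsumption finishes it.

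The main obstacle is \textsc{ECheck}, where $\checkx{v}{S}\steps v$ under $\hastypenb{v}{S}$. Inversion offers three possibilities for $\ES \checkx{v}{S}:T$. The redundant-check possibility ($\ES v:T''$, $\lowlty{T''}\preceq S$, $\vdash T''<:T$) closes immediately by subsumption. The failing-check possibility ($\ES v:T''$, $T''\neq\dyn$, $\lowlty{T''}\not\preceq S$) must be ruled out: the operational side supplies $\hastypenb{v}{S}$, so Lemma~\ref{cp:lem:apx:nothastype} together with the elementary behaviour of $\preceq$ on constructors yields a contradiction, making the case vacuous. The delicate possibility is the dynamic one, $\ES v:\dyn$ with $\vdash\uplty{S}<:T$: here I must establish $\ES v:\uplty{S}$ and then subsume through $\vdash\uplty{S}<:T$. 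The crux is that a value typeable at $\dyn$ can be retyped at the fully-dynamic type $\uplty{S}$ matching the shape certified by $\hastypenb{v}{S}$ — concretely, that a $\dyn$-typed address has store type $\dyn$ and a $\dyn$-typed abstraction has type $\dyn\to\dyn$ — which is exactly where the restriction of subtyping into $\dyn$ to ground-dynamic types, read off via Canonical forms (Lemma~\ref{cp:lem:apx:canonical}), does the work.
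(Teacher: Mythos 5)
Your proposal is correct and follows essentially the same route as the paper's proof: a case analysis on the redex-level step rules (the paper nominally phrases it as induction, but since the evaluation-context rule belongs to $\longmapsto$ the cases coincide with yours), using Inversion, the Substitution lemma, Heap weakening/extension with $\Sigma'=\Sigma,a{:}T'$ for \textsc{ERef}, invariance of reference subtyping for \textsc{EDeref}/\textsc{EUpdt}, and the three-way inversion of checks with Lemma~\ref{cp:lem:apx:nothastype} making the failing-check case vacuous. The only cosmetic difference is that where you appeal to Canonical forms in the $\unorig$ and \textsc{ECheck} cases, the paper reads the value's shape off the reduction rule or off $\hastypenb{v}{S}$ and obtains the $\funt{\dyn}{\dyn}$ and $\reft{\dyn}$ retypings purely from Inversion plus the subtyping rules into $\dyn$.
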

\begin{proof}
  By induction on $\cfg{e}{\mu}\steps\cfg{e'}{\mu'}$.
  \begin{description}
  \item[Case ] \textsc{ECheck}:\[
    \cfg{\checkx{v}{S}}{\mu}  \steps \cfg{v}{\mu}\qquad\text{if }\hastypenb{v}{S}
    \]
    By Lemma \ref{cp:lem:inversion}, we have three cases:
    \begin{description}
    \item[Case] $\emptyset;\Sigma\vdash v:T'$ and $\lowlty{T'} \preceq S$ and $\vdash T' <: T$:\\
      Immediate by subsumption.
    \item[Case] $\emptyset;\Sigma\vdash v:T''$ and $\lowlty{T'}\not\preceq S$ and $T''\neq\dyn$:\\
      Vacuous, by Lemma \ref{cp:lem:apx:nothastype}.
    \item[Case] $\emptyset{;}\Sigma\vdash \checkx{v}{S}:\uplty{S}$ and $\emptyset{;}\Sigma\vdash v:\dyn$.\\
    We proceed by cases on $S$:
    \begin{description}
    \item[Subcase ] $S=\dyn$: Then $\uplty{S}=\dyn$, and the theorem holds.
    \item[Subcase ] $S=\intt$: Then $\uplty{S}=\intt$. Because $\hastypenb{v}{\intt}$, $v=n$. Thus by \textsc{TInt}, $\emptyset{;}\Sigma\vdash v:\intt$.
    \item[Subcase ] $S=\to$: Then
      $\uplty{S}=\funt{\dyn}{\dyn}$.\\
      Because $\hastypenb{v}{\to}$, $v=\fnx{x}{e}$. \\
      By Lemma \ref{cp:lem:inversion}, for some $T_1,T_2$ have
      $\emptyset{,}x{:}T_1{;}\Sigma\vdash e:T_2$ and $\vdash \funt{T_1}{T_2} <: \dyn$.\\
      Then have that $\vdash T_2 <: \dyn$ and $\vdash \dyn <: T_1$, and hence $T_1=\dyn$.\\
      By \textsc{TSubsump}, $\emptyset{,}x{:}\dyn{;}\Sigma\vdash e:\dyn$.\\
      Therefore $\emptyset{;}\Sigma\vdash \fnx{x}{e}:\funt{\dyn}{\dyn}$.
    \item[Subcase ] $S=\refs$: Then
      $\uplty{S}=\reft{\dyn}$.\\
      Because $\hastypenb{v}{\to}$, $v=a$. \\
      By Lemma \ref{cp:lem:inversion}, for some $T'$ have
      $\Sigma(a)=T'$ and $\vdash \reft{T'} <: \dyn$.\\
      Then have that $\vdash T' <: \dyn$ and $\vdash \dyn <: T'$, and hence $T'=\dyn$.\\
      Therefore by \textsc{TAddr} $\emptyset{;}\Sigma\vdash a:\reft{\dyn}$.
    \end{description}
    \end{description}
  \item[Case ] \textsc{ECheckFail}: Vacuous.
  \item[Case ] \textsc{EFail}: Vacuous.
  \item[Case ] \textsc{ERef}:\[
    \cfg{\refx{v}}{\mu}  \steps \cfg{a}{\mu[a:=v]}\qquad\text{where}~a~\text{fresh}
    \]
    By Lemma \ref{cp:lem:inversion}, for some $T'$ have $\ES v:T'$ and $\vdash \reft{T'} <: T$.\\
    Let $\Sigma'=\Sigma{,}a{:}T'$.\\
    Then by \textsc{TAddr}, $\emptyset{;}\Sigma'\vdash a:\reft{T'}$.\\
    By \textsc{TSubsump}, $\emptyset{;}\Sigma'\vdash a:T$.\\
    Since $a$ is fresh,  $\Sigma'\ledyn\Sigma$.\\
    By Lemma \ref{cp:lem:apx:heapweak}, $\emptyset{;}\Sigma'\vdash v:T'$.\\
    By Lemma \ref{cp:lem:heapext}, $\Sigma'\vdash\mu[x:=v]$.
  \item[Case ] \textsc{EApp}:\[
    \cfg{((\fnx{x}{e})~v)^w}{\mu}  \steps  \cfg{e[x/v]}{\mu}
    \]
    We proceed by cases on $w$.
    \begin{description}
    \item[Subcase ] $w=\tyorig$:\hfill\\
      By Lemma \ref{cp:lem:inversion}, for some $T_1, T_2$ have $\ES \fnx{x}{e}:\funt{T_1}{T_2}$ and $\ES v:T_1$ and $\vdash T_2 <: T$.\\
      By Lemma \ref{cp:lem:inversion}, for some $T_1', T_2'$ have $\emptyset{,}x{:}T_1'{;}\Sigma\vdash e:T_2'$ and $\vdash \funt{T_1'}{T_2'} <: \funt{T_1}{T_2}$.\\
      Hence $\vdash T_1 <: T_1'$ and $\vdash T_2' <: T_2$.\\
      By \textsc{TSubsump}, $\ES v:T_1'$.\\
      By Lemma \ref{cp:lem:apx:subst}, $\ES e[x/v]:T_2'$.\\
      By \textsc{TSubsump}, $\ES e[x/v]:T_2$, and by \textsc{TSubsump} again $\ES e[x/v]:T$. 
    \item[Subcase ] $w=\unorig$:\hfill\\
      By Lemma \ref{cp:lem:inversion}, have $\ES \fnx{x}{e}:\dyn$ and $\ES v:\dyn$ and $\vdash \dyn <: T$.\\
      Therefore $\dyn=T$.\\
      By Lemma \ref{cp:lem:inversion}, for some $T_1, T_2$ have $\emptyset{,}x{:}T_1{;}\Sigma\vdash e:T_2$ and $\vdash \funt{T_1}{T_2} <: \dyn$.\\
      Hence $\vdash \dyn <: T_1$, and therefore $T_1=\dyn$, and $\vdash T_2 <: \dyn$.\\
      By Lemma \ref{cp:lem:apx:subst}, $\ES e[x/v]:T_2$.\\
      By \textsc{TSubsump}, $\ES e[x/v]:\dyn$. 
    \end{description}
  \item[Case ] \textsc{ELet}:\[
    \cfg{\letx{x}{v}{e}}{\mu}  \steps  \cfg{e[x/v]}{\mu}
    \]
    By Lemma \ref{cp:lem:inversion}, for some $T_1, T_2$ have $\ES v:T_1$ and $\emptyset{,}x{:}T_1{;}\Sigma\vdash e:T_2$ and $\vdash T_2 <: T$.\\
    By Lemma \ref{cp:lem:apx:subst}, $\ES e[x/v]:T_2$.\\
    By \textsc{TSubsump}, $\ES e[x/v]:T$. 
  \item[Case ] \textsc{EDeref}:\[
    \cfg{\derefxp{a}{w}}{\mu}  \steps  \cfg{\mu(a)}{\mu}
    \]
    We proceed by cases on $w$.
    \begin{description}
    \item[Subcase ] $w=\tyorig$:\hfill\\
      By Lemma \ref{cp:lem:inversion}, for some $T'$ have $\ES a:\reft{T'}$ and $\vdash T' <: T$.\\
      By Lemma \ref{cp:lem:inversion}, $\Sigma(a)=T''$ and $\vdash \reft{T''} <: \reft{T'}$.\\
      Hence $\vdash T' <: T''$ and $\vdash T'' <: T'$.\\
      Since $\Sigma\vdash\mu$, $\ES \mu(a):T''$.
      By \textsc{TSubsump}, $\ES \mu(a):T'$, and by \textsc{TSubsump} again $\ES \mu(a):T$. 
    \item[Subcase ] $w=\unorig$:\hfill\\
      By Lemma \ref{cp:lem:inversion}, have $\ES a:\dyn$ and $\vdash \dyn <: T$.\\
      Therefore $\dyn=T$.\\
      By Lemma \ref{cp:lem:inversion}, $\Sigma(a)=T'$ and $\vdash \reft{T'} <: \dyn$.\\
      Hence $\vdash T' <: \dyn$ and $\vdash \dyn <: T'$, so $T'=\dyn$.\\
      Since $\Sigma\vdash\mu$, $\ES \mu(a):\dyn$.
    \end{description}
  \item[Case ] \textsc{EUpdt}:\[
    \cfg{\mutxpINV{a}{v}{w}}{\mu}  \steps  \cfg{0}{\mu[a:=v]}
    \]
    By \textsc{TInt}, $\ES 0:\intt$.\\
    We continue by cases on $w$.
    \begin{description}
    \item[Subcase ] $w=\tyorig$:\hfill\\
      By Lemma \ref{cp:lem:inversion}, for some $T'$ have $\ES a:\reft{T'}$ and $\ES v:T'$ and $\vdash \intt <: T$.\\
      By \textsc{TSubsump}, $\ES 0:T$.
      By Lemma \ref{cp:lem:inversion}, $\Sigma(a)=T''$ and $\vdash \reft{T''} <: \reft{T'}$.\\
      Hence $\vdash T' <: T''$ and $\vdash T'' <: T'$.\\
      By \textsc{TSubsump}, $\ES v:T''$.\\
      Therefore $\Sigma\vdash\mu[a:=v]$.
    \item[Subcase ] $w=\unorig$:\hfill\\
      By Lemma \ref{cp:lem:inversion}, have $\ES a:\dyn$ and $\ES v:\dyn$ and $\vdash \intt <: T$.\\
      By \textsc{TSubsump}, $\ES 0:T$.
      By Lemma \ref{cp:lem:inversion}, $\Sigma(a)=T'$ and $\vdash \reft{T'} <: \dyn$.\\
      Hence $\vdash T' <: \dyn$ and $\vdash \dyn <: T'$, so $T'=\dyn$.\\
      Therefore $\Sigma\vdash\mu[a:=v]$.
    \end{description}
  \item[Case ] \textsc{EAdd}:\[
    \cfg{n_1~+^w~n_2}{\mu}  \steps  \cfg{n'}{\mu} \qquad \text{where}~n_1+n_2=n'
    \]
    By Lemma \ref{cp:lem:inversion} (applying once for each case on $w$), $\vdash \intt <: T$.\\
    Immediately have $\ES n':\intt$.\\
    By \textsc{TSubsump}, $\ES n':T$.
  \end{description}
\end{proof}

\begin{lemma}[Multi-step preservation]\label{cp:lem:apx:multipres}
  Suppose $\emptyset{;}\Sigma\vdash e:T$ and $\Sigma\vdash\mu$. If
  $\cfg{e}{\mu}\steps*\cfg{e'}{\mu'}$, then $\emptyset{;}\Sigma'\vdash
  e':T$ and $\Sigma'\vdash\mu'$ and $\Sigma' \sqsubseteq \Sigma$.
\end{lemma}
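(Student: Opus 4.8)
The plan is to proceed by induction on the length of the reduction sequence $\cfg{e}{\mu}\steps*\cfg{e'}{\mu'}$, that is, on the number of single $\steps$-steps, advancing one step at a time with the single-step preservation result (Lemma \ref{cp:lem:apx:preservation}). In the base case the sequence is empty, so $e'=e$, $\mu'=\mu$, and I take $\Sigma'=\Sigma$. The three conclusions then hold directly: $\emptyset{;}\Sigma\vdash e:T$ and $\Sigma\vdash\mu$ are the hypotheses, and $\Sigma\sqsubseteq\Sigma$ holds by reflexivity of the heap-extension relation of Figure \ref{cp:fig:apx:ht} (every $\Sigma$ trivially agrees with itself on its own domain).

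For the inductive case, suppose $\cfg{e}{\mu}\steps\cfg{e''}{\mu''}\steps*\cfg{e'}{\mu'}$. I would first apply Lemma \ref{cp:lem:apx:preservation} to the leading step, obtaining a heap type $\Sigma''$ with $\emptyset{;}\Sigma''\vdash e'':T$, $\Sigma''\vdash\mu''$, and $\Sigma''\sqsubseteq\Sigma$. The crucial observation is that these three facts are exactly the hypotheses needed to invoke the induction hypothesis on the remaining sequence $\cfg{e''}{\mu''}\steps*\cfg{e'}{\mu'}$; doing so yields $\Sigma'$ with $\emptyset{;}\Sigma'\vdash e':T$, $\Sigma'\vdash\mu'$, and $\Sigma'\sqsubseteq\Sigma''$. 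The first two are the desired conclusions, and it remains only to compose $\Sigma'\sqsubseteq\Sigma''$ with $\Sigma''\sqsubseteq\Sigma$.

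That final composition is the one auxiliary fact worth isolating, namely transitivity of $\sqsubseteq$. From its definition this is immediate: if $\Sigma'$ agrees with $\Sigma''$ on $\dom{\Sigma''}$ and $\Sigma''$ agrees with $\Sigma$ on $\dom{\Sigma}$, then since $\dom{\Sigma}\subseteq\dom{\Sigma''}$, every $a\in\dom{\Sigma}$ satisfies $\Sigma'(a)=\Sigma''(a)=\Sigma(a)$, giving $\Sigma'\sqsubseteq\Sigma$. I do not anticipate any genuine obstacle here. The type $T$ is held fixed throughout, since the single-step lemma preserves it exactly rather than up to subsumption, so no subsumption bookkeeping accumulates across steps; the entire argument is a routine chaining of single-step preservation together with the reflexivity and transitivity of the heap ordering $\sqsubseteq$.
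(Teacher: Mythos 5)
Your proposal is correct and matches the paper's proof essentially step for step: induction along the reduction sequence, single-step preservation (Lemma \ref{cp:lem:apx:preservation}) for the leading step, the induction hypothesis for the tail, and reflexivity/transitivity of $\sqsubseteq$ to close the base and inductive cases. The only difference is cosmetic—you induct on sequence length rather than on the derivation of $\steps^*$ and you spell out the transitivity argument that the paper dispatches in one line.
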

\begin{proof}
  By induction on $\cfg{e}{\mu}\steps*\cfg{e'}{\mu'}$.
  \begin{description}
  \item[Cases] where $\cfg{e}{\mu}\steps^*\mathtt{fail}$ are vacuous.
  \item[Case] $\cfg{e}{\mu}\steps^*\cfg{e}{\mu}$: Immediate.
  \item[Case] \[\inferrule{\cfg{e}{\mu}\steps\cfg{e'}{\mu'} \\ \cfg{e'}{\mu'}\steps^*\cfg{e''}{\mu''}}{\cfg{e}{\mu}\steps^*\cfg{e''}{\mu''}}\]
      By Lemma \ref{cp:lem:apx:preservation}, $\emptyset;\Sigma'\vdash e':T$ and $\Sigma'\vdash\mu'$ and $\Sigma' \sqsubseteq \Sigma$.\\
      By the IH, $\emptyset;\Sigma''\vdash e'':T$ and $\Sigma''\vdash\mu''$ and $\Sigma'' \sqsubseteq \Sigma'$.\\
      By transitivity of equality on types, $\Sigma''\ledyn \Sigma$.
  \end{description}
\end{proof}

\begin{lemma}[Progress]\label{cp:lem:apx:progress}
  Suppose $\emptyset{;}\Sigma\vdash e:T$ and $\Sigma\vdash\mu$. Then
  either 
  \begin{itemize}
  \item $\cfg{e}{\mu}\steps\cfg{e'}{\mu'}$,
  \item $\cfg{e}{\mu}\steps\mathtt{fail}$,
  \item $e$ is a value, or
  \item $\stucknb{e}{\mu}{\unorig}$.
  \end{itemize}
\end{lemma}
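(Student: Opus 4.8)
The plan is to proceed by induction on the derivation of $\emptyset{;}\Sigma\vdash e:T$, which lets the subsumption rule \textsc{TSubsump} be discharged by a direct appeal to the induction hypothesis, since the subject expression is unchanged and the four possible outcomes make no mention of $T$. The value-introduction rules \textsc{TAbs}, \textsc{TAddr}, and \textsc{TInt} place $e$ among the values immediately; \textsc{TVar} is vacuous because the context is empty; and \textsc{TFail} reduces to $\mathtt{fail}$ by \textsc{EFail}. Every remaining rule has one or two immediate subexpressions, and I first invoke the induction hypothesis on the leftmost not-yet-evaluated one. If that subexpression reduces, fails, or is stuck at $\unorig$, I lift the conclusion through the matching evaluation context $E$ using the congruence rule for reduction, the congruence rule for failure, or the context rule $\stucknb{E[e]}{\mu}{w}$ (which preserves the tag $\unorig$). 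This reduces the real work to the case in which all subexpressions are already values, where I must exhibit either a head reduction or a stuck configuration.

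For the value-redex cases I combine Canonical Forms (Lemma~\ref{cp:lem:apx:canonical}) with Inversion (Lemma~\ref{cp:lem:inversion}), and the central dichotomy is between the $\tyorig$-tagged and $\unorig$-tagged operations. For a typed operation the relevant operand carries a concrete type, so Canonical Forms pins down its value form exactly: in \textsc{TApp} the operator is a $\fnx{x}{e}$, in \textsc{TDeref} and \textsc{TUpdt} the reference is an address $a$ with $\mu(a)$ defined (Canonical Forms part~3 guarantees $a\in\dom{\mu}$ using $\Sigma\vdash\mu$), and in \textsc{TAdd} both operands are numerals; hence each reduces by \textsc{EApp}, \textsc{EDeref}, \textsc{EUpdt}, or \textsc{EAdd}, and a typed redex is \emph{never} stuck. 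For an untyped operation the operand has type $\dyn$, so Canonical Forms part~4 yields only \emph{some} non-$\dyn$ type, and I case-split on the resulting value form: when it is the expected shape (a lambda for \textsc{TAppOW}, an address for \textsc{TDerefOW} and \textsc{TUpdtOW}, numerals for \textsc{TAddOW}) the configuration reduces, and otherwise the value is off-shape and the configuration matches exactly one of the axioms defining $\stucknb{\cdot}{\mu}{\unorig}$, giving the fourth outcome. The fresh-address case \textsc{TRef} always reduces by \textsc{ERef}.

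The check cases rely on the surface-tag ordering $\preceq$. For \textsc{TCheck} the scrutinee value has type $\dyn$ and either satisfies $\hastypenb{v}{S}$, giving a step by \textsc{ECheck}, or it does not, giving a step to $\mathtt{fail}$ by \textsc{ECheckFail}; either way progress holds. For \textsc{TCheckRedundant} we have $\lowlty{T}\preceq S$, and Canonical Forms shows the value form is precisely the one dictated by $\lowlty{T}$, so $\hastypenb{v}{S}$ holds and we step by \textsc{ECheck} (when $T=\dyn$ the hypothesis $\lowlty{T}\preceq S$ forces $S=\dyn$, and $\hastypenb{v}{\dyn}$ holds trivially). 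For \textsc{TCheckFail} we have $T\neq\dyn$ and $\lowlty{T}\not\preceq S$; since $\lowlty{T}$ is then a concrete constructor this also gives $S\not\preceq\lowlty{T}$, so Lemma~\ref{cp:lem:apx:nothastype} yields $\neg\hastypenb{v}{S}$ and we step to $\mathtt{fail}$ by \textsc{ECheckFail}.

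The main obstacle is verifying, in the untyped-operation cases, that the off-shape configurations line up \emph{precisely} with the $\stucknb{\cdot}{\mu}{\unorig}$ axioms, rather than secretly admitting a reduction or falling outside the stuck predicate; this requires checking the correspondence between Canonical Forms part~4 and the enumerated stuck rules one value form at a time (numeral, lambda, address) for each of application, dereference, update, and addition, including the nested split for addition where the left operand is a numeral but the right is not. A secondary subtlety is the $T=\dyn$ subcase of \textsc{TCheckRedundant}, where one must observe that $\lowlty{\dyn}\preceq S$ forces $S=\dyn$ so that the universal axiom $\hastypenb{v}{\dyn}$ applies.
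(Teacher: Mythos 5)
Your proof is correct and follows essentially the same route as the paper's: induction on the typing derivation, a generic evaluation-context step lifting the IH through $E$, canonical forms to show $\tyorig$-tagged redexes always reduce, and a shape case split matching the $\unorig$-stuck axioms for the untyped operations. The only divergence is in the check cases, where you prove more than progress requires: since a check applied to any value steps by either \textsc{ECheck} or \textsc{ECheckFail} (both acceptable outcomes), the paper handles \textsc{TCheckRedundant} and \textsc{TCheckFail} uniformly without invoking canonical forms, Lemma~\ref{cp:lem:apx:nothastype}, or any reasoning about $\preceq$.
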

\begin{proof}
  By induction on $\emptyset{;}\Sigma\vdash e:T$.\\
  For each case, if there exists some $E, e'$ with $e'$ not a value
  such that $e=E[e']$, then by the IH, $e'$ is either a value, it
  steps to \texttt{fail} or another expression, or it is an error
  blaming $\unorig$. If it steps to an expression, then $e$ steps to
  an expression by \textsc{EStep}. If it steps to \texttt{fail}, then
  $e$ steps to \texttt{fail} by \textsc{EFail}. If
  $\stucknb{e'}{\mu}{\unorig}$, then
  $\stucknb{e}{\mu}{\unorig}$.\\
  We now proceed to each case, assuming that no such $E, e'$ exists.\\
  \begin{description}
  \item[Cases ] \textsc{TAbs}, \textsc{TAddr}, and \textsc{TInt}: Immediately have $e=v$.
  \item[Case ] \textsc{TVar}: Vacuous.
  \item[Case ] \textsc{TSubsump}: with $T=T_2$, \[
    \inferrule{\emptyset{;}\Sigma\vdash e:T_1 \\ \vdash T_1 <: T_2}{\emptyset{;}\Sigma\vdash e:T_2}
    \]
    Immediate from the IH.
  \item[Case ] \textsc{TCheck}: with $T=\uplty{S}$, \[
    \inferrule{\emptyset{;}\Sigma\vdash e':\dyn}
    {\emptyset{;}\Sigma\vdash\checkx{e'}{S}:\uplty{S}}
    \]
    Assume $e'=v$. Then either $\hastypenb{v}{S}$ or $\neg\hastypenb{v}{S}$.\\
    In the former case, by \textsc{ECheck},
    $\cfg{\checkx{v}{S}}{\mu}\steps\cfg{v}{\mu}$.\\
    Otherwise, by \textsc{ECheckFail}, 
    $\cfg{\checkx{v}{S}}{\mu}\steps\mathtt{fail}$.
  \item[Case ] \textsc{TRedundantCheck}: \[
    \inferrule{\Gamma\vdash e' : T \\ \lowlty{T} \preceq S}
    {\Gamma\vdash \checkx{e'}{S} : T}
    \]
    Assume $e'=v$. Then either $\hastypenb{v}{S}$ or $\neg\hastypenb{v}{S}$.\\
    In the former case, by \textsc{ECheck},
    $\cfg{\checkx{v}{S}}{\mu}\steps\cfg{v}{\mu}$.\\
    Otherwise, by \textsc{ECheckFail}, 
    $\cfg{\checkx{v}{S}}{\mu}\steps\mathtt{fail}$.
  \item[Case ] \textsc{TFailCheck}: \[
\inferrule{\Gamma\vdash e' : T \\ T \neq \dyn \\ \lowlty{T} \not\preceq S}
 {\Gamma\vdash \checkx{e'}{S} : T'}
    \]
    Assume $e'=v$. Then either $\hastypenb{v}{S}$ or $\neg\hastypenb{v}{S}$.\\
    In the former case, by \textsc{ECheck},
    $\cfg{\checkx{v}{S}}{\mu}\steps\cfg{v}{\mu}$.\\
    Otherwise, by \textsc{ECheckFail}, 
    $\cfg{\checkx{v}{S}}{\mu}\steps\mathtt{fail}$.
  \item[Case ] \textsc{TFail}: \[
    \inferrule{ }
    {\Gamma\vdash \mathtt{fail} : T}
    \]
    Immediately by \textsc{EFail}, 
    $\cfg{\mathtt{fail}}{\mu}\steps\mathtt{fail}$.
  \item[Case ] \textsc{TRef}: with $T=\reft{T'}$, \[
    \inferrule{\GS e':T'}{\GS \refx{e'}:\reft{T'}}
    \]
    Assume $e'=v$.\\
    By \textsc{ERef}, with $a$ fresh, $\cfg{\refx{v}}{\mu}\steps\cfg{a}{\mu[a:=v]}$.\\
  \item[Case ] \textsc{TApp}: with $T=T_2$, \[
    \inferrule{\emptyset{;}\Sigma\vdash e_1:\funt{T_1}{T_2} \\ \emptyset{;}\Sigma\vdash e_2:T_1}
    {\emptyset{;}\Sigma\vdash (e_1~e_2)^\tyorig:T_2}
    \]
    Assume $e_1=v_1$ and $e_2=v_2$.\\
    By Lemma \ref{cp:lem:apx:canonical}, $v_1=\fnx{x}{e_1'}$.\\
    By \textsc{EApp}, $\cfg{((\fnx{x}{e_1'})~v_2)^\tyorig}{\mu}\steps\cfg{e_1'[x/v_2]}{\mu}$.
  \item[Case ] \textsc{TAppOW}: with $T=\dyn$, \[
    \inferrule{\emptyset{;}\Sigma\vdash e_1:\dyn \\ \emptyset{;}\Sigma\vdash e_2:\dyn}
    {\emptyset{;}\Sigma\vdash (e_1~e_2)^\unorig:\dyn}
    \]
    Assume $e_1=v_1$ and $e_2=v_2$.\\
    Have that $v_1 \in \{a, n, \fnx{x}{e_1'}\}$.\\
    If $v_1=a$ or $v_1=n$, $\stucknb{(v_1~v_2)^\unorig}{\mu}{\unorig}$.\\
    Otherwise, $v_1=\fnx{x}{e_1'}$ and $\cfg{((\fnx{x}{e_1'})~v_2)^\unorig}{\mu}\steps\cfg{e_1'[x/v_2]}{\mu}$.
  \item[Case ] \textsc{TLet}: with $T=T_2$, \[
    \inferrule{\GS e_1:T_1 \\ \Gamma,x{:}T_1{;}\Sigma\vdash e_2:T_2}
    {\GS \letx{x}{e_1}{e_2}:T_2}
    \]
    Assume $e_1=v$.\\
    By \textsc{ELet}, $\cfg{\letx{x}{v}{e_2}}{\mu}\steps\cfg{e_2[x/v]}{\mu}$.
  \item[Case ] \textsc{TAdd}: with $T=\intt$, \[
    \inferrule{\GS e_1:\intt \\ \GS e_2:\intt}
    {\GS e_1~+^\tyorig~e_2:\intt}
    \]
    Assume $e_1=v_1$ and $e_2=v_2$.\\
    By Lemma \ref{cp:lem:apx:canonical}, $v_1=n_1$ and $v_2=n_2$.\\
    By \textsc{EAdd}, with $n'=n_1+n_2$, $\cfg{n_1~+^\tyorig~n_2}{\mu}\steps\cfg{n'}{\mu}$.
  \item[Case ] \textsc{TAddOW}: with $T=\intt$, \[
    \inferrule{\GS e_1:\dyn \\ \GS e_2:\dyn}
    {\GS e_1~+^\unorig~e_2:\intt}
    \]
    Assume $e_1=v_1$ and $e_2=v_2$.\\
    Have that $v_1 \in \{a, n_1, \fnx{x}{e_1'}\}$.\\
    If $v_1=a$ or $v_1=\fnx{x}{e}$, $\stucknb{e_1~+^\unorig~e_2}{\mu}{\unorig}$.\\
    Otherwise $v_1=n_1$.\\
    By the same reasoning, either $\stucknb{e_1~+^\unorig~e_2}{\mu}{\unorig}$ or $v_2=n_2$.\\
    By \textsc{EAdd}, with $n'=n_1+n_2$, $\cfg{n_1~+^\tyorig~n_2}{\mu}\steps\cfg{n'}{\mu}$.
  \item[Case ] \textsc{TDeref}: \[
    \inferrule{\GS e':\reft{T}}{\GS \derefxp{e'}{\tyorig}:T}
    \]
    Assume $e'=v$.\\
    By Lemma \ref{cp:lem:apx:canonical}, $v=a$ and $\mu(a)=v'$.\\
    By \textsc{EDeref}, $\cfg{\derefxp{a}{\tyorig}}{\mu}\steps\cfg{\mu(a)}{\mu}$.
  \item[Case ] \textsc{TDerefOW}: with $T=\dyn$, \[
    \inferrule{\GS e':\dyn}{\GS \derefxp{e'}{\unorig}:\dyn}
    \]
    Assume $e'=v$.\\
    Have that $v \in \{a, n, \fnx{x}{e_1'}\}$.\\
    If $v=n$ or $v=\fnx{x}{e}$, $\stucknb{\derefxp{e'}{\unorig}}{\mu}{\unorig}$.\\
    Otherwise $v=a$.\\
    If $a\not\in\dom{\mu}$, then $\stucknb{\derefxp{a}{\unorig}}{\mu}{\unorig}$.\\
    Otherwise, by \textsc{EDeref}, $\cfg{\derefxp{a}{\tyorig}}{\mu}\steps\cfg{\mu(a)}{\mu}$.
  \item[Case ] \textsc{TUpdt}: with $T=\intt$, \[
    \inferrule{\GS e_1:\reft{T} \\ \GS e_2:T}
    {\GS \mutxpINV{e_1}{e_2}{\tyorig}:\intt}
    \]
    Assume $e_1=v_1$ and $e_2=v_2$.\\
    By Lemma \ref{cp:lem:apx:canonical}, $v_1=a$.\\
    By \textsc{EUpdt}, $\cfg{\mutxpINV{a}{v_2}{\tyorig}}{\mu}\steps\cfg{0}{\mu[a:=v_2]}$.
  \item[Case ] \textsc{TDerefOW}: with $T=\intt$, \[
    \inferrule{\GS e_1:\dyn \\ \GS e_2:\dyn}
    {\GS \mutxpINV{e_1}{e_2}{\unorig}:\intt}
    \]
    Assume $e_1=v_1$ and $e_2=v_2$.\\
    Have that $v_1 \in \{a, n, \fnx{x}{e_1'}\}$.\\
    If $v_1=n$ or $v_1=\fnx{x}{e}$, $\stucknb{\mutxpINV{v_1}{v_2}{\unorig}}{\mu}{\unorig}$.\\
    Otherwise $v_1=a$.\\
    If $a\not\in\dom{\mu}$, then $\stucknb{\mutxpINV{a}{v_2}{\unorig}}{\mu}{\unorig}$.\\
    Otherwise, by \textsc{EDeref}, $\cfg{\mutxpINV{a}{v_2}{\unorig}}{\mu}\steps\cfg{0}{\mu[a:=v_2]}$.
  \end{description}
\end{proof}

\begin{corollary}
  If $\emptyset\vdash d:A;\Omega$ and $\sigma$ is a solution for
  $\Omega$, then:
  \begin{itemize}
  \item $\emptyset \vdash \sigma d\leadsto e:T$, and 
  \item $\emptyset;\emptyset \vdash e:T$, and 
  \item if $\cfg{e}{\emptyset} \steps^* \cfg{v}{\mu}$, then $\Sigma\vdash\mu$ and $\emptyset;\Sigma\vdash v:T$ and $\vdash T <: \sigma A$.
  \end{itemize}
\end{corollary}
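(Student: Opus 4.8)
The plan is to read the three bullets as a single pipeline and discharge each one with one previously-established lemma, threading the witnesses $e$, $T$, $\Sigma$, and $\mu$ through in order. The corollary is essentially the composition of soundness of elaboration, type preservation across elaboration, and runtime preservation, so the work is in chaining the lemmas and handling the empty initial heap, not in any new argument.

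First I would obtain the elaborated term and its type. Applying Lemma~\ref{cp:lem:apx:soundsubclosed} to the hypotheses $\emptyset\vdash d:A;\Omega$ and the assumption that $\sigma$ solves $\Omega$ yields both $\emptyset;\sigma\vdash d\leadsto e:T$ and $\vdash T <: \sigma A$. The former is exactly the first bullet, reading $\emptyset\vdash\sigma d\leadsto e:T$ as the judgment $\emptyset;\sigma\vdash d\leadsto e:T$. The latter already establishes the subtyping claim $\vdash T <: \sigma A$ that appears in the third bullet; since this is a static fact about $T$ and $\sigma A$ that does not depend on evaluation, I would set it aside and reuse it verbatim at the very end.

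Next, for the second bullet, I would feed the elaboration judgment $\emptyset;\sigma\vdash d\leadsto e:T$ into Lemma~\ref{cp:lem:dtoe}, which produces $\sigma\emptyset;\emptyset\vdash e:T$. Because applying $\sigma$ to the empty context is again the empty context, this is precisely $\emptyset;\emptyset\vdash e:T$, the closed-term typing required.

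Finally, for the third bullet, I would invoke multi-step preservation. The starting configuration has heap type $\emptyset$ and heap $\emptyset$, and the heap-typing rule holds vacuously here: the two domains are equal (both empty) and the per-address obligation is quantified over the empty set, so $\emptyset\vdash\emptyset$. With $\emptyset;\emptyset\vdash e:T$ from the previous step and $\emptyset\vdash\emptyset$ in hand, Lemma~\ref{cp:lem:apx:multipres} applied to $\cfg{e}{\emptyset}\steps^*\cfg{v}{\mu}$ produces a heap type $\Sigma$ with $\Sigma\vdash\mu$, $\emptyset;\Sigma\vdash v:T$, and $\Sigma\sqsubseteq\emptyset$; the first two conjuncts are what the bullet asks for, and combining them with the subtyping fact carried over from the first step yields $\vdash T <: \sigma A$. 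There is no genuine obstacle: the only points needing a moment's care are the identification $\sigma\emptyset=\emptyset$ when applying Lemma~\ref{cp:lem:dtoe}, and verifying that the empty heap is well-typed under the empty heap type so that Lemma~\ref{cp:lem:apx:multipres} can be applied at the start of evaluation.
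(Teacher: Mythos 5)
Your proof is correct, and the first two steps coincide with the paper's: both obtain the elaboration judgment and $\vdash T <: \sigma A$ from elaboration soundness, then feed it to Lemma~\ref{cp:lem:dtoe} to type $e$ in the empty context. Two points differ, both in your favor. First, you invoke Lemma~\ref{cp:lem:apx:soundsubclosed}, whose hypotheses ($\emptyset\vdash d:A;\Omega$ plus a solution to $\Omega$ alone) match the corollary exactly, whereas the paper cites Lemma~\ref{cp:lem:soundsub}, which formally also demands $\vdash A:\Omega_2$ and a solution to $\Omega_1\cup\Omega_2$ --- a hypothesis the corollary does not supply, so the paper's citation is slightly loose. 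Second, for the runtime bullet the paper does not use its own multi-step preservation lemma; instead it cases on whether evaluation terminates, applies single-step preservation ``repeatedly,'' and then invokes progress (Lemma~\ref{cp:lem:apx:progress}) to conclude the terminal configuration is a value or stuck blaming $\unorig$. Since the corollary's third bullet is conditional on already having reached a value $\cfg{v}{\mu}$, the progress step is unnecessary for the stated claim, and your direct application of Lemma~\ref{cp:lem:apx:multipres} (after checking $\emptyset\vdash\emptyset$, a detail the paper glosses over) is tighter and proves exactly what is asserted. What the paper's longer route buys is extra information not recorded in the statement: that a terminal non-value configuration can only be $\mathtt{fail}$ or a stuck term blamed on untyped code, i.e.\ the ingredients of a genuine soundness theorem rather than just conditional preservation.
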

\begin{proof}
  By Lemma \ref{cp:lem:soundsub}, $\emptyset\vdash \sigma d: T$ and $\vdash T <: \sigma A$.\\
  By Lemma \ref{cp:lem:dtoe}, $\emptyset\vdash \sigma d\leadsto e:T$.\\
  If for all $\varsigma$ such that
  $\cfg{e}{\emptyset}\steps\varsigma$, there exists some $\varsigma'$
  such that $\varsigma\steps\varsigma'$, then the theorem is
  satisfied.\\
  Otherwise, there exists some $\varsigma$ such that
  $\varsigma\not\steps\varsigma'$.\\
  If $\varsigma=\mathtt{fail}$, then the theorem is satisfied.\\
  Otherwise, $\varsigma=\cfg{e'}{\mu}$.\\
  By repeating Lemma
  \ref{cp:lem:apx:preservation}, $\emptyset;\Sigma\vdash e':T$ and
  $\Sigma\vdash\mu$.\\
  By Lemma \ref{cp:lem:apx:progress}, either $e'=v$ or
  $\stucknb{e}{\mu}{\unorig}$.
\end{proof}

\subsection{Analysis correctness}\label{cp:apx:proof:correct}

\begin{lemma}\label{cp:lem:apx:soundcurry}
  If $\Gamma\vdash d:A;\Omega$ and $\sigma$ is a solution for $\Omega$, then $\sigma\Gamma;\emptyset\vdash |d|: \sigma A$.
\end{lemma}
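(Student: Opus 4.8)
The plan is to induct on the derivation of $\Gamma\vdash d:A;\Omega$, recovering the conclusion of Lemma~\ref{cp:lem:apx:soundsubclosed} composed with Lemma~\ref{cp:lem:dtoe} but in a single pass and targeting the direct erasure $|d|$. The key observation that keeps this manageable is that, because $|d|$ retains \emph{every} check (Figure~\ref{cp:fig:apx:erase}), each node can be typed at \emph{exactly} $\sigma A$ rather than at a refinement below it, so no subtyping slack accumulates in the conclusion. In every case I first invoke Lemma~\ref{cp:lem:apx:solsplit} to restrict $\sigma$ to the constraints of each sub-derivation, so that the inductive hypothesis gives $\sigma\Gamma;\emptyset\vdash|d_i|:\sigma A_i$ for each premise.

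The structural cases are routine. \textsc{IVar} and \textsc{IInt} are immediate from \textsc{TVar} and \textsc{TInt}. For \textsc{IAbs}, \textsc{IApp}, \textsc{IRef}, and \textsc{IUpdt}, each flow constraint $\flows{A}{\beta}$ discharged by $\sigma$ yields $\vdash\sigma A<:\sigma\beta$, which I apply via \textsc{TSubsump} to coerce the relevant premise before using the matching introduction rule; the erasure clauses line the term shapes up with the $\tyorig$-tagged elimination forms (e.g.\ $|d_1~d_2|=(|d_1|~|d_2|)^\tyorig$ for \textsc{TApp}), and in each case the resulting type is syntactically $\sigma A$. The \textsc{IDeref} and \textsc{IAdd} cases follow directly from the IH with \textsc{TDeref} and \textsc{TAdd}.

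The crux is \textsc{ICheck}, where $|d|=\checkx{|d'|}{S}$, the premise supplies $\specmatches{A_1}{A_2}{S}$ with constraint $\depcon{A_1}{S}{A_2}$, and the IH gives $\sigma\Gamma;\emptyset\vdash|d'|:\sigma A_1$. Since the erasure keeps the check, I must actually type $\checkx{|d'|}{S}$ and select the correct one of the three $\lange$ check rules so that its output is exactly $\sigma A_2$. I mirror the case split of Lemma~\ref{cp:lem:apx:soundsubclosed} on $A_1$ and the shape of $\sigma A_1$: when $S=\dyn$, or $A_1$ is concrete with $\lowlty{\sigma A_1}=S$, or $A_1=\alpha$ with $\sigma\alpha$ of shape $S$, the semantics of $\specmatches$/$\depcon$ force $\sigma A_1=\sigma A_2$ and $\lowlty{\sigma A_1}\preceq S$, so \textsc{TCheckRedundant} applies (using Lemma~\ref{cp:lem:stsameconst} to move the shape relation across the equal types); when $A_1=\alpha$ with $\sigma\alpha=\dyn$, the dependent constraint collapses every variable of $\mathit{parts}(A_2)$ to $\dyn$, giving $\sigma A_2=\uplty{S}$, and \textsc{TCheck} on $|d'|:\dyn$ yields exactly $\uplty{S}$; and when $\sigma A_1$ is concrete with a shape other than $S\neq\dyn$ (possible only for $A_1=\alpha$, since a concrete $A_1$ forces $S=\lowlty{A_1}$), I apply \textsc{TCheckFail}, choosing its free result type to be $\sigma A_2=\uplty{S}$. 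The main obstacle is thus this \textsc{ICheck} bookkeeping: it hinges entirely on the semantics of $\depcon{A_1}{S}{A_2}$ --- that a solution either preserves $\sigma A_1=\sigma A_2$ on a matching shape or forces $\sigma A_2=\uplty{S}$ otherwise --- and on checking that each selected rule delivers precisely $\sigma A_2$, which the statement requires, rather than a mere supertype or subtype.
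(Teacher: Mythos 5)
Your proposal is correct and takes essentially the same route as the paper's proof: induction on $\Gamma\vdash d:A;\Omega$, using Lemma~\ref{cp:lem:apx:solsplit} to restrict $\sigma$, \textsc{TSubsump} to discharge flow constraints in the structural cases, and the same \textsc{ICheck} case analysis on $A_1$, $\sigma A_1$, and $S$ selecting \textsc{TCheckRedundant}, \textsc{TCheck}, or \textsc{TCheckFail} so that the result is exactly $\sigma A_2$. The only cosmetic difference is in the mismatched-shape subcase, where you split on $\sigma\alpha=\dyn$ versus $\sigma\alpha$ concrete, while the paper splits on $\vdash\sigma A_1<:\dyn$ versus not; both branches deliver $\uplty{S}=\sigma A_2$.
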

\begin{proof}
  By induction on $\Gamma\vdash d:A;\Omega$.
  \begin{description}
  \item[Case] \textsc{IVar}: \[
    \inferrule{\Gamma(x)=A}{\Gamma\vdash x:A;\emptyset}
    \]
    Have that $\sigma\Gamma(x)=\sigma A$.\\
    By \textsc{TVar}, $\sigma\Gamma;\emptyset\vdash x:\sigma A$.
  \item[Case] \textsc{IAbs}: \[
    \inferrule{\Gamma,x{:}\alpha\vdash d:A\yO}
    {\Gamma\vdash \fntx{x}{\alpha}{\beta}{d}:\funt{\alpha}{\beta}\yO,\flows{A}{\beta}}  
    \]
    Since $\sigma$ is a solution for $\Omega\cup\{\flows{A}{\beta}\}$, by Lemma \ref{cp:lem:apx:solsplit}, $\sigma$ is a solution for $\Omega$.\\
    By the IH, $\sigma\Gamma,x{:}\sigma \alpha;\emptyset\vdash |d|: \sigma A$.\\
    Since $\sigma$ is a solution to $\Omega\cup\{\flows{A}{Y}\}$, $\vdash \sigma A <: \sigma \beta$.\\
    By \textsc{TSubsump}, $\sigma\Gamma,x{:}\sigma \alpha;\emptyset\vdash |d|: \sigma \beta$.\\
    By \textsc{TAbs}, $\sigma\Gamma;\emptyset\vdash \fnx{x}{|d|}:\sigma \alpha\to \sigma \beta$.
  \item[Case] \textsc{IApp}: \[
    \inferrule{\Gamma\vdash d_1:\funt{V_1}{V_2}\yO_1 \\ 
      \Gamma\vdash d_2:A\yO_2}
    {\Gamma\vdash d_1~d_2:V_2\yO_1,\Omega_2,\flows{A}{V_1}}
    \]
    Since $\sigma$ is a solution for $\Omega_1\cup\Omega_2\cup\{\flows{A}{V_1}\}$, by Lemma \ref{cp:lem:apx:solsplit}, $\sigma$ is a solution for $\Omega_1$ and $\sigma$ is a solution for $\Omega_2$ and $\sigma$ is a solution for $\flows{A}{V_1}$.\\
    By the IH, $\sigma\Gamma;\emptyset\vdash|d_1|: \funt{\sigma V_1}{\sigma V_2}$.\\
    By the IH, $\sigma\Gamma;\emptyset\vdash |d_2|: \sigma A$.\\
    Since $\sigma$ is a solution to $\{\flows{A}{V_1}\}$, $\vdash \sigma A <: \sigma V_1$.\\
    By \textsc{TSubsump}, $\sigma\Gamma;\emptyset\vdash |d_2|: \sigma V_1$.\\
    By \textsc{TApp}, $\sigma\Gamma;\emptyset\vdash (|d_1|~|d_2|)^\tyorig: \sigma V_2$.
  \item[Case] \textsc{ICheck}: \[
    \inferrule{\Gamma\vdash d:A_1\yO \\ \specmatches{A_1}{A_2}{S}}
    {\Gamma\vdash \checkx{d}{S}:A_2\yO}
    \]
    By the IH, $\sigma\Gamma;\emptyset\vdash |d|: \sigma A_1$.\\
    If $S=\dyn$ then $A_2=A_1$, and by \textsc{TRedundantCheck}, $\sigma\Gamma;\emptyset\vdash \checkx{|d|}{S}:\sigma A_2$.\\
    Otherwise, we proceed by cases on $A_1$ and $\sigma A_1$.
    \begin{description}
    \item[Case] $A_1\neq V$ and $S$ is the constructor of $\sigma A_1$:\\
      Then $S$ is the constructor of $A_1$.\\
      Then $A_2=A_1$.\\
      By \textsc{TRedundantCheck}, $\sigma\Gamma;\emptyset\vdash \checkx{|d|}{S}:\sigma A_2$.
    \item[Case] $A_1\neq V$ and $S$ is not the constructor of $\sigma A_1$: Vacuous, since $S\neq\dyn$.
    \item[Case] $A_1=V$ and $S$ is the constructor of $\sigma A_1$:\\
      Then $\sigma A_1=\sigma A_2$.\\
      By \textsc{TRedundantCheck}, $\sigma\Gamma;\emptyset\vdash \checkx{|d|}{S}: \sigma A_2$.
    \item[Case] $A_1=V$ and $S$ is not the constructor of $\sigma A_1$:\\
      Since $\sigma$ is a solution to $\Omega$, for all $\alpha\in\mathit{parts}(A_2)$ of
      $A_2$, $\sigma \alpha=\dyn$.\\
      Therefore $\uplty{S}=\sigma A_2$.\\
      Cases on $\vdash \sigma A_1 <: \dyn$:
      \begin{description}
      \item[Subcase] $\vdash \sigma A_1 <: \dyn$:\\
        By \textsc{TSubsump}, $\sigma\Gamma;\emptyset\vdash |d|:\dyn$.\\
        By \textsc{TCheck}, $\sigma\Gamma;\emptyset\vdash \checkx{|d|}{S}:\uplty{S}$.
      \item[Subcase] $\vdash \sigma A_1 \not<: \dyn$:\\
        Since $S$ is not the constructor of $\sigma A_1$ and $S\neq\dyn$, $\lowlty{\sigma A_1} \not\preceq S$.\\
        By \textsc{TFailCheck}, $\sigma\Gamma;\emptyset\vdash \checkx{|d|}{S}:\uplty{S}$.
      \end{description}
    \end{description}
  \item[Case] \textsc{IRef}: \[
    \inferrule{\Gamma\vdash d:A\yO}{\Gamma\vdash \refxi{d}{X}:\reft{X}\yO,\flows{A}{X}}\
    \]
    By the IH, $\sigma\Gamma;\emptyset\vdash |d|: \sigma A$.\\
    Since $\sigma$ is a solution to $\Omega,\flows{A}{X}$, $\vdash \sigma A <: \sigma X$.\\
    By \textsc{TSubsump}, $\sigma\Gamma;\emptyset\vdash |d|: \sigma X$.\\
    By \textsc{TRef}, $\sigma\Gamma;\emptyset\vdash \refx{|d|}: \reft{\sigma X}$.
  \item[Case] \textsc{IDeref}: \[
    \inferrule{\Gamma\vdash d:\reft{V}\yO}
    {\Gamma\vdash\derefx{d}:V\yO}
    \]
    By the IH, $\sigma\Gamma;\emptyset\vdash |d|:  \reft{\sigma V}$.\\
    By \textsc{TDeref}, $\sigma\Gamma;\emptyset\vdash \derefxp{|d|}{\tyorig}:\sigma V$.
  \item[Case] \textsc{IUpdt}: \[
    \inferrule{\Gamma\vdash d_1:\reft{V}\yO_1 \\
      \Gamma\vdash d_2:A\yO_2}
    {\Gamma\vdash\mutx{d_1}{d_2}:\intt\yO_1,\Omega_2,\flows{A}{V}}
    \]
    Since $\sigma$ is a solution to $\Omega_1\cup\Omega_2\cup\{\flows{A}{V}\}$, by Lemma \ref{cp:lem:apx:solsplit}, $\sigma$ is a solution to $\Omega_1$ and $\sigma$ is a solution to $\Omega_2$ and $\sigma$ is a solution to $\{\flows{A}{V}\}$.\\
    By the IH, $\sigma\Gamma;\emptyset\vdash|d_1|: \reft{\sigma V}$.\\
    By the IH, $\sigma\Gamma;\emptyset\vdash|d_2|: \sigma A$.\\
    Since $\sigma$ is a solution to $\{\flows{A}{V}\}$, $\vdash \sigma A <: \sigma V$.\\
    By \textsc{TSubsump}, $\sigma\Gamma;\emptyset\vdash|d_2|: \sigma V$.\\
    By \textsc{TUpdt}, $\sigma\Gamma;\emptyset\vdash \mutxpINV{|d_1|}{|d_2|}{\tyorig}:\intt$.
  \item[Case] \textsc{IAdd}: \[
    \inferrule{\Gamma\vdash d_1:\intt\yO_1 \\
      \Gamma\vdash d_2:\intt\yO_2}
    {\Gamma\vdash d_1~+~d_2:\intt\yO_1,\Omega_2}
    \]
    Since $\sigma$ is a solution to $\Omega_1\cup\Omega_2$, by Lemma \ref{cp:lem:apx:solsplit}, $\sigma$ is a solution to $\Omega_1$ and $\sigma$ is a solution to $\Omega_2$.\\
    By the IH, $\sigma\Gamma;\emptyset\vdash|d_1|: \intt$.\\
    By the IH, $\sigma\Gamma;\emptyset\vdash|d_2|: \intt$.\\
    Therefore by \textsc{TAdd}, $\sigma\Gamma;\emptyset\vdash|d_1|~+~|d_2|: \intt$.
  \item[Case ] \textsc{IInt}: Immediate by \textsc{TInt}.
  \end{description}
\end{proof}

\begin{lemma}\label{cp:lem:apx:eweakening}
  If $\Gamma;\Sigma\vdash e:T$ and for all $x\in\mathit{dom}(\Gamma)$, $\Gamma'(x)=\Gamma(x)$, then $\Gamma';\Sigma\vdash e:T$.
\end{lemma}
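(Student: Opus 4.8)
The plan is to proceed by structural induction on the derivation of $\Gamma;\Sigma\vdash e:T$, mirroring the proof of the analogous constraint-system weakening result, Lemma \ref{cp:lem:weakening}. The property carried through the induction is the agreement hypothesis: $\Gamma'(x)=\Gamma(x)$ for every $x\in\mathit{dom}(\Gamma)$. The store typing $\Sigma$ is inert throughout — it is never inspected, constrained, or extended by the term-environment reasoning — so it is simply threaded unchanged from premises to conclusion in every case.

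For the large majority of the rules the conclusion is immediate. In \textsc{TApp}, \textsc{TAppOW}, \textsc{TRef}, \textsc{TDeref}, \textsc{TDerefOW}, \textsc{TUpdt}, \textsc{TUpdtOW}, \textsc{TAdd}, \textsc{TAddOW}, \textsc{TSubsump}, \textsc{TCheck}, \textsc{TCheckFail}, and \textsc{TCheckRedundant}, each subderivation is typed under the very same environment $\Gamma$, so the induction hypothesis applies directly to each premise (the agreement hypothesis is unchanged), and re-applying the same rule under $\Gamma'$ gives the result; the side conditions on types and subtyping are unaffected. The axiom cases \textsc{TInt}, \textsc{TAddr}, and \textsc{TFail} impose no constraint on the term environment and so hold under $\Gamma'$ trivially. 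The genuinely load-bearing base case is \textsc{TVar}, where $\Gamma(x)=T$: since $x\in\mathit{dom}(\Gamma)$, the agreement hypothesis yields $\Gamma'(x)=\Gamma(x)=T$, and \textsc{TVar} gives $\Gamma';\Sigma\vdash x:T$. This is the only point at which the hypothesis is actually consumed.

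The one subtlety lies in the binding rules \textsc{TAbs} and \textsc{TLet}, whose premises are typed under an extended environment $\Gamma,x{:}T_1$. Before invoking the induction hypothesis I would first verify that $\Gamma',x{:}T_1$ agrees with $\Gamma,x{:}T_1$ on $\mathit{dom}(\Gamma,x{:}T_1)$: for the variable $x$ itself both environments map to $T_1$, and for any $y\neq x$ with $y\in\mathit{dom}(\Gamma)$ the fresh rightmost binding is irrelevant and agreement reduces to $\Gamma'(y)=\Gamma(y)$, which holds by assumption. With this extended agreement established, the induction hypothesis applies to the body to give $\Gamma',x{:}T_1;\Sigma\vdash e:T_2$ (and, in \textsc{TLet}, the induction hypothesis applies to the first premise directly under $\Gamma'$), after which re-applying \textsc{TAbs} (respectively \textsc{TLet}) concludes the case. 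The main obstacle, such as it is, is precisely this bookkeeping about shadowing under environment extension — the same $y=x$ versus $y\neq x$ split exercised in the substitution lemma, Lemma \ref{cp:lem:apx:subst}.
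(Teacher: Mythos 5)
Your proposal is correct and follows essentially the same route as the paper: induction on the typing derivation, with the variable case consuming the agreement hypothesis and the binding cases (\textsc{TAbs}, \textsc{TLet}) handled by showing that agreement is preserved under environment extension. The paper states this very tersely but identifies exactly the same two interesting cases; your extra detail on shadowing is just an explicit spelling-out of what the paper leaves implicit.
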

\begin{proof}
  Induction on $\Gamma;\Sigma\vdash e:T$. The only interesting cases
  are variables, where the correspondence between $\Gamma$ and
  $\Gamma'$ ensure that the result is the same, and functions, where we can immediately show that for all $x\in\mathit{dom}(\Gamma,y:T)$, $(\Gamma',y:T)(x)=(\Gamma,y:T)(x)$.
\end{proof}

\begin{lemma}\label{cp:lem:apx:masssubst}
  If $\Gamma;\Sigma\vdash e:T$ and $\Sigma\vdash \rho:\Gamma$, then $\emptyset;\Sigma\vdash\rho(e):T$.
\end{lemma}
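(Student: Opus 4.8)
The plan is to induct on the derivation of $\Sigma\vdash\rho:\Gamma$ (equivalently, on the structure of the environment $\rho$), peeling off one binding at a time and discharging each with the single-variable substitution lemma. Throughout I read the iterated substitution so that $(\rho',x{=}v)(e)=\rho'(e[x/v])$; since every value bound by $\rho$ is closed (it is typed under $\emptyset$ in the judgement $\Sigma\vdash\rho:\Gamma$), no variable capture is possible and this choice of order is harmless.

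In the base case $\rho=\cdot$, inversion on $\Sigma\vdash\rho:\Gamma$ forces $\Gamma=\emptyset$, and $\rho(e)=e$, so the goal $\emptyset;\Sigma\vdash e:T$ is exactly the typing hypothesis $\Gamma;\Sigma\vdash e:T$.

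For the inductive step, $\rho=\rho',x{=}v$, and inversion on the derivation of $\Sigma\vdash\rho:\Gamma$ yields $\Gamma=\Gamma',x{:}T'$ together with $\emptyset;\Sigma\vdash v:T'$ and $\Sigma\vdash\rho':\Gamma'$. The typing hypothesis then reads $\Gamma',x{:}T';\Sigma\vdash e:T$. First I would use Lemma \ref{cp:lem:apx:eweakening} to lift the closed judgement $\emptyset;\Sigma\vdash v:T'$ to $\Gamma';\Sigma\vdash v:T'$, whose side condition is vacuous since $\mathit{dom}(\emptyset)=\emptyset$. Then Lemma \ref{cp:lem:apx:subst}, applied to $\Gamma',x{:}T';\Sigma\vdash e:T$ and $\Gamma';\Sigma\vdash v:T'$, gives $\Gamma';\Sigma\vdash e[x/v]:T$. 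Finally the induction hypothesis, instantiated at the environment $\rho'$ and the term $e[x/v]$ (using $\Sigma\vdash\rho':\Gamma'$), yields $\emptyset;\Sigma\vdash\rho'(e[x/v]):T$, and $\rho'(e[x/v])=\rho(e)$ by the reading of iterated substitution fixed above.

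The only real subtlety is the bookkeeping at the interface with Lemma \ref{cp:lem:apx:subst}: that lemma types the substituted value in the very context $\Gamma'$ in which it produces the result, whereas $\rho$ supplies $v$ as a closed value under $\emptyset$. The short weakening step reconciles the two, and once $v$ is viewed in $\Gamma'$ the substitution lemma applies directly. I expect no difficulty from the remaining steps, which are purely structural, provided the definition of $\rho(e)$ is taken to substitute the most recently bound variable first so that the induction on $\rho'$ lines up cleanly.
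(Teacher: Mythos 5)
Your proposal is correct and matches the paper's own proof essentially step for step: both argue by induction on the derivation of $\Sigma\vdash\rho:\Gamma$, and in the inductive case first weaken the closed judgement $\emptyset;\Sigma\vdash v:T'$ to $\Gamma';\Sigma\vdash v:T'$ via Lemma \ref{cp:lem:apx:eweakening}, then apply the single-variable substitution Lemma \ref{cp:lem:apx:subst}, invoke the induction hypothesis, and close with the identity $(\rho',x{=}v)(e)=\rho'(e[x/v])$. The paper uses exactly this reading of iterated substitution, so no further comment is needed.
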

\begin{proof}
  By induction on $\Sigma\vdash \rho:\Gamma$.
  \begin{description}
  \item[Case ] \[
    \inferrule{ }{\Sigma\vdash \cdot:\emptyset}\]
    Have that $\rho(e)=e$.\\
    Immediately $\emptyset;\Sigma\vdash e:T$.
  \item[Case ]\[
    \inferrule{\emptyset;\Sigma\vdash v:T' \\ \Sigma\vdash \rho:\Gamma}{\Sigma\vdash \rho,x=v:\Gamma,x:T'}
    \]
    Have $\Gamma,x:T;\Sigma\vdash e:T$ and $\emptyset,\Sigma\vdash v:T'$.\\
    By Lemma \ref{cp:lem:apx:eweakening}, $\Gamma;\Sigma\vdash v:T'$.\\
    By Lemma \ref{cp:lem:apx:subst}, $\Gamma;\Sigma\vdash e[x/v]:T$.\\
    By the IH, $\emptyset;\Sigma\vdash \rho(e[x/v]):T$.\\
    Have that $(\rho,x=v)(e)=\rho(e[x/v])$.
  \end{description}
\end{proof}

\begin{lemma}\label{cp:lem:apx:hastype}
  If $\Gamma;\Sigma\vdash v:T$ and $\Sigma\vdash\mu$, then $\hastypenb{v}{\lowlty{T}}$.
\end{lemma}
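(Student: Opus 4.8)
The plan is to do a case analysis on the type $T$ — equivalently, on its shape $\lowlty{T}$ — and in each case appeal to the canonical forms lemma (Lemma~\ref{cp:lem:apx:canonical}) to determine the syntactic form of $v$, after which the matching rule of the $\hastype$ relation applies immediately.

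In detail: when $T = \dyn$ we have $\lowlty{T} = \dyn$ and $\hastypenb{v}{\dyn}$ holds for every value by the first rule, with no appeal to canonical forms. When $T = \intt$ we have $\lowlty{T} = \intt$, canonical forms gives $v = n$, and $\hastypenb{n}{\intt}$ holds. When $T = \funt{T_1}{T_2}$ the shape $\lowlty{T}$ is the function constructor, canonical forms gives $v = \fnx{x}{e}$, and $\hastypenb{\fnx{x}{e}}{\funs}$ holds. When $T = \reft{T'}$ we have $\lowlty{T} = \refs$, canonical forms gives $v = a$, and $\hastypenb{a}{\refs}$ holds. The assumption $\Sigma\vdash\mu$ is needed only to discharge the hypothesis of canonical forms and otherwise plays no direct role.

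The one subtlety is the context mismatch: Lemma~\ref{cp:lem:apx:canonical} is stated for the closed context $\emptyset;\Sigma$, whereas the hypothesis here gives $\Gamma;\Sigma\vdash v:T$ for an arbitrary $\Gamma$. We cannot strengthen $\Gamma$ to $\emptyset$, since $v$ may be a closure $\fnx{x}{e}$ whose body mentions variables bound in $\Gamma$. Instead I would observe that the proof of canonical forms never examines $\Gamma$: its only non-vacuous base cases are \textsc{TInt}, \textsc{TAbs}, and \textsc{TAddr}, none of which constrains the context, and \textsc{TSubsump} simply recurses, while \textsc{TVar} is vacuous because a variable is not a value. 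Hence canonical forms holds unchanged for an arbitrary $\Gamma$, which is all that the case analysis requires. This generalization of canonical forms is the only nonroutine part; everything else is a direct match of each type shape against the corresponding $\hastype$ rule.
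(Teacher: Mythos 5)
Your proof is correct and takes essentially the same approach as the paper's: a case analysis on $T$, with the canonical forms lemma (Lemma~\ref{cp:lem:apx:canonical}) supplying the syntactic shape of $v$ in each non-$\dyn$ case, and the $\dyn$ case immediate. The only difference is that you explicitly justify invoking canonical forms under an arbitrary context $\Gamma$ rather than $\emptyset$ --- a mismatch the paper's proof silently glosses over --- and your justification (the proof of canonical forms never inspects $\Gamma$, since variables are not values) is sound.
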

\begin{proof}
  By cases on $T$.
  \begin{description}
  \item[Case] $T=\intt$: By Lemma \ref{cp:lem:apx:canonical}, $v=n$. Therefore $\hastypenb{n}{\intt}$.
  \item[Case] $T=T_1\to T_2$: By Lemma \ref{cp:lem:apx:canonical}, $v=\lambda x.~e$. Therefore $\hastypenb{\lambda x.~e}{\funs}$.
  \item[Case] $T=\reft{T'}$: By Lemma \ref{cp:lem:apx:canonical}, $v=a$. Therefore $\hastypenb{a}{\refs}$.
  \item[Case] $T=\dyn$: Immediate.
  \end{description}
\end{proof}

\begin{customthm}{\ref{cp:thm:correct}}
  Suppose $\Gamma\vdash d:A;\Omega$ and $\sigma$ is a solution to
  $\Omega$ and $\Sigma\vdash \rho:\sigma\Gamma$ and $\Sigma\vdash\mu$ and
  $\cfg{\rho(|d|)}{\mu}\steps^*\cfg{v}{\mu'}$. If
  $\lowlty{\sigma A} \preceq S$, then
  $\cfg{\checkx{v}{S}}{\mu'}\not\steps\mathtt{fail}$.
\end{customthm}
\begin{proof}
  By Lemma \ref{cp:lem:apx:soundcurry}, $\sigma\Gamma;\emptyset\vdash |d|:\sigma A$.\\
  By Lemma \ref{cp:lem:apx:heapweak}, $\sigma\Gamma;\Sigma\vdash |d|:\sigma $.\\
  By Lemma \ref{cp:lem:apx:masssubst}, $\emptyset;\Sigma\vdash\rho(|d|):\sigma A$.\\
  By Lemma \ref{cp:lem:apx:multipres}, $\emptyset;\Sigma'\vdash v:\sigma A$ and $\Sigma'\vdash\mu'$.\\
  By Lemma \ref{cp:lem:apx:hastype}, $\hastypenb{v}{\lowlty{\sigma A}}$.\\
  Since $\lowlty{\sigma A} \preceq S$, either $S=\dyn$ or $S=\lowlty{\sigma A}$. In either case, $\hastypenb{v}{\lowlty{\sigma A}}$.\\
  Therefore \textsc{ECheckFail} does not apply, and no other step can be taken from $\cfg{\checkx{v}{S}}{\mu'}$ to \texttt{fail}.
\end{proof}

\subsection{Soundness of constraint solving}\label{cp:apx:proof:solvesound}

Figure \ref{cp:fig:apx:osimp} restates the definition for constraint set simplification from Figure \ref{cp:fig:constsimpl}.

\begin{figure*}
  \begin{minipage}{1.0\linewidth}
    
\boxed{\Omega \steps \Omega}
\begin{align}
 \Omega \cup \{\flows{\funt{V_1}{V_2}}{\dyn}\} & \steps \Omega \cup \{\flows{\dyn}{V_1},~\flows{V_2}{\dyn}\} \label{cp:osimp:fundynsub}\\ 
 \Omega \cup \{\flows{\funt{V_1}{V_2}}{\funt{V_3}{V_4}}\} & \steps \Omega \cup \{\flows{V_3}{V_1},~\flows{V_2}{V_4}\} \label{cp:osimp:funsub}\\ 
 \Omega \cup \{\flows{\reft{V}}{\dyn}\} & \steps \Omega \cup \{V=\dyn\} \label{cp:osimp:refdynsub}\\ 
 \Omega \cup \{\flows{\reft{V_1}}{\reft{V_2}}\} & \steps \Omega \cup \{V_1=V_2\}\label{cp:osimp:refsub}\\
 \Omega \cup \{\flows{V}{V}\} & \steps \Omega \label{cp:osimp:idsub}\\
 \Omega \cup \{\flows{\intt}{\dyn}\} & \steps \Omega \label{cp:osimp:intdynsub}\\
 \Omega \cup \{\funt{V_1}{V_2}=\funt{V_3}{V_4}\} & \steps \Omega \cup \{V_1=V_3,V_2=V_4\} \label{cp:osimp:funeq}\\ 
 \Omega \cup \{\reft{V_1}=\reft{V_2}\} & \steps \Omega \cup \{V_1=V_2\}\label{cp:osimp:refeq}\\
 \Omega \cup \{A=A\} & \steps \Omega \label{cp:osimp:tyeq}\\
 \Omega \cup \{A=\alpha\} & \steps \Omega\cup \{\alpha=A\} \label{cp:osimp:flipeq}\\
  &
  \begin{array}{ll}
    \qquad\text{where} & A\neq\alpha'
  \end{array}\nonumber\\
 \Omega\cup \{\alpha=A\} & \steps \Omega[\alpha/A] \cup \{\defcon{\alpha}{A}\}\label{cp:osimp:substeq}\\
  &
  \begin{array}{ll}
    \qquad\text{where} & \alpha\not\in\mathit{vars}(A)\\
    & (\defcon{\alpha}{B})\not\in\Omega
  \end{array}\nonumber\\
  \Omega\cup \{\depcon{A}{S}{A} \} & \steps \Omega\label{cp:osimp:tydep}\\
  \Omega\cup \{\tagcon{\alpha}{S}, \depcon{\alpha}{S}{A} \} & \steps \Omega\cup \{\alpha = A\}\label{cp:osimp:tagdep}\\
  &
  \begin{array}{ll}
    \qquad\text{where} & (\depcon{\alpha}{S'}{A'}) \not\in \Omega\\
    & A \neq \alpha
  \end{array}\nonumber\\
  \Omega\cup \{\tagcon{\alpha}{S_1}, \depcon{\alpha}{S_2}{A} \} & \steps \Omega\cup \{\tagcon{\alpha}{S_1}\} \cup \{\alpha' = \dyn \mid \forall \alpha' \in \mathit{parts}(A)\} \label{cp:osimp:untagdep}\\
  &
  \begin{array}{ll}
    \qquad\text{where} & S_1 \neq S_2
  \end{array}\nonumber\\
  \Omega\cup \{\depcon{\alpha}{S}{A_1}, \depcon{\alpha}{S}{A_2} \} & \steps \Omega\cup \{\depcon{\alpha}{S}{A_1},A_2=A_1\} \label{cp:osimp:multidep}\\
  \Omega\cup \{\tagcon{\alpha}{S} \} & \steps \Omega\cup\{\alpha=A\}\label{cp:osimp:justtag}\\
  &
  \begin{array}{ll}
    \qquad\text{where} & (\depcon{\alpha}{S'}{A'}) \not\in \Omega\\
    & (\defcon{\alpha}{T}) \not\in\Omega\\
    & (\alpha=A') \not\in\Omega \\
    & \specmatches{\alpha}{A}{S}
  \end{array}\nonumber
\end{align}

  \end{minipage}
  \caption{Simplification of constraint sets (restated from Figure \ref{cp:fig:constsimpl}).}
  \label{cp:fig:apx:osimp}
\end{figure*}

\begin{lemma}\label{cp:lem:apx:subeq}
  If $T_1=T_2$ then $\vdash T_1 <: T_2$ and $\vdash T_2 <: T_1$.
\end{lemma}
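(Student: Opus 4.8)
The plan is to read the statement as the reflexivity of $<:$ packaged up: since $T_1=T_2$ asserts that $T_1$ and $T_2$ are the very same type $T$, both desired conclusions $\vdash T_1 <: T_2$ and $\vdash T_2 <: T_1$ collapse to the single judgment $\vdash T <: T$. I would therefore prove $\vdash T <: T$ for every type $T$ by structural induction on $T$, and then instantiate it at $T = T_1 = T_2$ to obtain both directions at once.

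For the base cases $T=\intt$ and $T=\dyn$, the reflexive judgments $\vdash \intt <: \intt$ and $\vdash \dyn <: \dyn$ follow directly from the corresponding base rules of the subtyping relation. For the function case $T=\funt{T_a}{T_b}$, the induction hypotheses supply $\vdash T_a <: T_a$ and $\vdash T_b <: T_b$; feeding these into the (contravariant-domain, covariant-codomain) function rule yields $\vdash \funt{T_a}{T_b} <: \funt{T_a}{T_b}$. Because the type is identical on both sides, the contravariant premise reduces to exactly $\vdash T_a <: T_a$, so the IH discharges it with no reorientation needed. For the reference case $T=\reft{T'}$, references are invariant, so the reference subtyping rule requires only $T'=T'$ (equivalently the two IH directions $\vdash T' <: T'$), giving $\vdash \reft{T'} <: \reft{T'}$.

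The main obstacle here is essentially nonexistent: this is a routine reflexivity argument, and I expect the author's proof to be a one-line ``induction on $T$.'' The only point worth checking is that the subtyping relation is presented via structural congruence rules rather than a blanket reflexivity axiom; if a general $\vdash T <: T$ rule were available the lemma would be immediate, whereas with purely structural rules the induction is genuinely needed, the function case being the only one with any content since its contravariance must be settled by the IH.
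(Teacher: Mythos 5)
Your proposal is correct and matches the paper's proof, which is simply ``straightforward induction on $T_1$'' --- exactly the structural induction on the (shared) type that you carry out, with the case analysis you spell out being the routine expansion of that one-liner. Your reading of $T_1 = T_2$ as syntactic identity, collapsing the claim to reflexivity of $<:$, is the intended one.
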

\begin{proof}
  Straightforward induction on $T_1$.
\end{proof}

\begin{lemma}\label{cp:lem:apx:soljoin}
  If $\sigma$ is a solution to $\Omega_1$ and $\sigma$ is a solution to $\Omega_2$, then $\sigma$ is a solution to $\Omega_1\cup \Omega_2$.
\end{lemma}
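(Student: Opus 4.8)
The plan is to unfold the definition of what it means for a substitution to be a solution to a constraint set, exactly as in the proof of Lemma \ref{cp:lem:apx:solsplit}, and to run that argument in the opposite direction. A substitution $\sigma$ is a solution to a constraint set $\Omega$ precisely when $\sigma$ is a solution for every individual constraint $C \in \Omega$; the notion is purely conjunctive over the members of the set. This is the same reading of ``solution'' that Lemma \ref{cp:lem:apx:solsplit} relied on, so the present statement is simply its converse.

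First I would fix an arbitrary constraint $C \in \Omega_1 \cup \Omega_2$ and show that $\sigma$ is a solution for $C$. By the definition of set union, either $C \in \Omega_1$ or $C \in \Omega_2$. In the first case, since $\sigma$ is assumed to be a solution to $\Omega_1$, it is a solution for every constraint in $\Omega_1$, and in particular for $C$. In the second case the symmetric argument applies using the assumption that $\sigma$ is a solution to $\Omega_2$. Since $C$ was arbitrary, $\sigma$ is a solution for every constraint in $\Omega_1 \cup \Omega_2$, which is exactly the condition for $\sigma$ to be a solution to $\Omega_1 \cup \Omega_2$.

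There is essentially no obstacle here: the only thing worth verifying is that the definition of ``solution to a constraint set'' imposes no global side condition beyond satisfying each constraint individually. That this is so is precisely what makes the decomposition in Lemma \ref{cp:lem:apx:solsplit} valid, and so the same per-constraint reasoning recombines the two sets. Hence satisfying every constraint of the union suffices, and no induction or case analysis on constraint shapes is required.
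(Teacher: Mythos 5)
Your proof is correct and matches the paper's own argument exactly: both unfold the definition of ``solution to a constraint set'' as a per-constraint (conjunctive) condition, take an arbitrary $C \in \Omega_1 \cup \Omega_2$, case split on which set it belongs to, and apply the corresponding hypothesis. No differences worth noting.
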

\begin{proof}
  Since $\sigma$ is a solution for every constraint in $\Omega_1$ and $\Omega_2$, and for all $C\in\Omega_1\cup \Omega_2$, $C\in\Omega_1$ or $C\in\Omega_2$, so $\sigma$ is a solution for every constraint in $\Omega_1\cup \Omega_2$ so it is a solution to $\Omega_1\cup \Omega_2$.\\
\end{proof}

\begin{lemma}\label{cp:lem:apx:simplsol}
  If $\Omega\steps\Omega'$ and $\sigma$ is a solution to $\Omega'$, then $\sigma$ is a solution to $\Omega$.
\end{lemma}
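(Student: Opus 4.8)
The plan is to proceed by cases on the simplification step $\Omega \steps \Omega'$, i.e.\ on which of the rules \eqref{cp:osimp:fundynsub}--\eqref{cp:osimp:justtag} of Figure~\ref{cp:fig:apx:osimp} is applied. Every rule except \eqref{cp:osimp:substeq} has the shape $\Omega_c \cup \Delta \steps \Omega_c \cup \Delta'$ for some shared context $\Omega_c$ and rule-specific constraints $\Delta, \Delta'$. Since $\sigma$ is assumed to solve $\Omega' = \Omega_c \cup \Delta'$, Lemma~\ref{cp:lem:apx:solsplit} gives that $\sigma$ solves $\Omega_c$ and that $\sigma$ solves $\Delta'$. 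It therefore suffices, in each such case, to show that $\sigma$ solves the discarded left-hand constraints $\Delta$; then Lemma~\ref{cp:lem:apx:soljoin} recombines $\sigma$ on $\Omega_c$ with $\sigma$ on $\Delta$ to conclude that $\sigma$ solves $\Omega$. So the whole argument reduces to a local check, one per rule, that the rewritten constraints are implied by their replacements under $\sigma$.

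For the structural subtyping and equality rules this local check is immediate from the relevant definitions. The reflexivity rules \eqref{cp:osimp:idsub}, \eqref{cp:osimp:intdynsub}, and \eqref{cp:osimp:tyeq} have empty $\Delta'$ and hold because $\vdash \sigma V <: \sigma V$, $\vdash \intt <: \dyn$, and $\sigma A = \sigma A$ are always valid. The flow-decomposition rules \eqref{cp:osimp:fundynsub}, \eqref{cp:osimp:funsub}, \eqref{cp:osimp:refdynsub}, \eqref{cp:osimp:refsub} and the equality-congruence rules \eqref{cp:osimp:funeq}, \eqref{cp:osimp:refeq} are exactly the inversion shape of the subtyping and syntactic-equality rules, so that $\sigma$ solving the component constraints reconstructs the compound one; for the invariant reference cases I would use Lemma~\ref{cp:lem:apx:subeq} to turn a solved equality into the two-way subtyping that $\vdash \reft{T_1} <: \reft{T_2}$ requires. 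The orientation rule \eqref{cp:osimp:flipeq} is just symmetry of $=$ under $\sigma$.

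The remaining cases involve the tag and dependency machinery and require unfolding the solution semantics of $\tagcon{\alpha}{S}$ and $\depcon{\alpha}{S}{A}$. For \eqref{cp:osimp:tydep} the trivial dependency $\depcon{A}{S}{A}$ is solved by any $\sigma$, since both branches of its conditional collapse to $\sigma A = \sigma A$. For \eqref{cp:osimp:tagdep}, \eqref{cp:osimp:untagdep}, and \eqref{cp:osimp:multidep} I would assume the right-hand equalities (respectively $\sigma\alpha = \sigma A$; that every $\alpha' \in \mathit{parts}(A)$ has $\sigma\alpha' = \dyn$; and $\sigma A_2 = \sigma A_1$ with the retained dependency) and check that the branch of the original dependency selected by whether $\sigma\alpha$ carries constructor $S$ is precisely the one made true by those equalities. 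These arguments lean on the formation invariant that the witness type $A$ of a dependency has constructor $S$, reflecting the $\specmatches{\cdot}{A}{S}$ used to create it; rule \eqref{cp:osimp:justtag} is similar, its produced equality $\alpha = A$ with $\specmatches{\alpha}{A}{S}$ forcing $\sigma\alpha = \sigma A$ to carry constructor $S$ and thereby discharging $\tagcon{\alpha}{S}$.

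The main obstacle is the substitution rule \eqref{cp:osimp:substeq}, $\Omega_c \cup \{\alpha = A\} \steps \Omega_c[\alpha/A] \cup \{\defcon{\alpha}{A}\}$, because here the context itself is rewritten rather than merely augmented, so the clean $\Omega_c \cup \Delta$ decomposition does not apply verbatim. The key observation is that the definitional constraint $\defcon{\alpha}{A}$ forces $\sigma\alpha = \sigma A$, and the side condition $\alpha \notin \mathit{vars}(A)$ guarantees that $\sigma A$ is unaffected by this identification. Consequently applying $[\alpha/A]$ before $\sigma$ does not change the result, i.e.\ $\sigma(C[\alpha/A]) = \sigma(C)$ for every constraint $C$, so $\sigma$ solves $\Omega_c$ exactly because it solves $\Omega_c[\alpha/A]$. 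Together with $\sigma\alpha = \sigma A$, which solves $\alpha = A$, this shows that $\sigma$ solves $\Omega_c \cup \{\alpha = A\} = \Omega$. I expect the only genuine work in the whole lemma to be stating and proving this auxiliary fact that $\sigma\alpha = \sigma A$ implies $\sigma(C[\alpha/A]) = \sigma C$, after which every case is a short unfolding.
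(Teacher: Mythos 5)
Your proposal is correct and follows essentially the same route as the paper's proof: a case analysis on the rewrite rule, with Lemmas~\ref{cp:lem:apx:solsplit} and \ref{cp:lem:apx:soljoin} handling the shared context, local checks against the subtyping, equality, and dependency semantics for each rule, and the same key fact for rule~\eqref{cp:osimp:substeq}, namely that $\sigma\alpha=\sigma A$ renders the substitution $[\alpha/A]$ invisible under $\sigma$, so solving $\Omega[\alpha/A]$ yields solving $\Omega$. The paper's proof is exactly this argument, only stated more tersely.
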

\begin{proof}
  By cases on $\Omega\steps\Omega'$. Many cases are immediate using Lemmas \ref{cp:lem:apx:solsplit} and \ref{cp:lem:apx:soljoin}.
  \begin{description}
  \item[Case] \ref{cp:osimp:fundynsub} $ \Omega \cup \{\flows{\funt{V_1}{V_2}}{\dyn}\}  \steps  \Omega \cup \{\flows{\dyn}{V_1},~\flows{V_2}{\dyn}\} $:\\
    Immediate from subtyping definitions.
  \item[Case]\ref{cp:osimp:funsub} 
 $\Omega \cup \{\flows{\funt{V_1}{V_2}}{\funt{V_3}{V_4}}\}  \steps  \Omega \cup \{\flows{V_3}{V_1},~\flows{V_2}{V_4}\} $:\\ 
    Immediate from subtyping definitions.
  \item[Case]\ref{cp:osimp:refdynsub} $\Omega \cup \{\flows{\reft{V}}{\dyn}\}  \steps  \Omega \cup \{V=\dyn\} $:\\
    Immediate from subtyping definitions and from Lemma \ref{cp:lem:apx:subeq}.
  \item[Case]\ref{cp:osimp:refsub} $\Omega \cup \{\flows{\reft{V_1}}{\reft{V_2}}\}  \steps  \Omega \cup \{V_1=V_2\}$:\\
    Immediate from subtyping definitions and from Lemma \ref{cp:lem:apx:subeq}.
  \item[Case]\ref{cp:osimp:idsub} 
 $\Omega \cup \{\flows{V}{V}\}  \steps  \Omega$:\\ 
 Immediate since subtyping is reflexive.
  \item[Case] \ref{cp:osimp:intdynsub}
 $\Omega \cup \{\flows{\intt}{\dyn}\}  \steps  \Omega$:\\
 Immediate from subtyping definitions.
  \item[Case] \ref{cp:osimp:funeq}
 $\Omega \cup \{V_1\to V_2=V_3\to V_4\}  \steps  \Omega\cup\{V_1=V_3,V_2=V_4\}$:\\ 
    Immediate.
  \item[Case] \ref{cp:osimp:refeq}
 $\Omega \cup \{\reft{V_1}=\reft{V_2}\}  \steps  \Omega\cup\{V_1=V_2\}$:\\ 
    Immediate.
  \item[Case] \ref{cp:osimp:tyeq}
 $\Omega \cup \{A=A\}  \steps  \Omega$:\\ 
    Immediate.
  \item[Case]  \ref{cp:osimp:flipeq} $ \Omega \cup \{A=\alpha\}  \steps  \Omega\cup \{\alpha=A\} $:\\
    Immediate.
  \item[Case]  \ref{cp:osimp:substeq} $\Omega\cup \{\alpha=A\}  \steps  \Omega[\alpha/A] \cup \{\defcon{\alpha}{A}\}$:\\
    Since $\sigma$ is a solution to $\{\defcon{\alpha}{A}\}$, have that $\sigma \alpha=\sigma A$.\\
    Therefore for any type $A'$ with $\alpha$ as a component, $\sigma A'[\alpha/A]=\sigma A'$.\\ 
    Therefore $\sigma$ is a solution to $\Omega$, and thus a solution to $\Omega\cup \{\alpha=A\}$.
  \item[Case] \ref{cp:osimp:tydep}
 $\Omega \cup \{\depcon{A}{S}{A}\}  \steps  \Omega$:\\ 
    Immediate.
  \item[Case]  \ref{cp:osimp:tagdep} $\Omega\cup \{\tagcon{\alpha}{S}, \depcon{\alpha}{S}{A} \} \steps  \Omega\cup \{\alpha = A\}$:\\
    Have that $\sigma\alpha=\sigma A$ and $\lowlty{\sigma \alpha}=S$.\\
    Therefore $\depcon{\alpha}{S}{A}$ is satisfied.\\
    Rest is immediate.
  \item[Case]  \ref{cp:osimp:untagdep} $\Omega\cup \{\tagcon{\alpha}{S_1}, \depcon{\alpha}{S_2}{A} \} \steps \Omega\cup \{\tagcon{\alpha}{S_1}\} \cup \{\alpha' = \dyn \mid \forall \alpha' \in \mathit{parts}(A)\}$:\\
    Have that $S_1 \neq S_2$.\\
    Have that $\lowlty{\sigma\alpha}=S_1$.\\
    Therefore $\lowlty{\sigma\alpha}\neq S_2$.\\
    Have that $\sigma\alpha' = \dyn$ for all $\alpha' \in \mathit{parts}(A)$.\\
    Therefore $\depcon{\alpha}{S_1}{A}$ is satisfied.\\
    Rest is immediate.
  \item[Case] \ref{cp:osimp:multidep} $\Omega\cup \{\depcon{\alpha}{S}{A_1}, \depcon{\alpha}{S}{A_2} \} \steps \Omega\cup \{\depcon{\alpha}{S}{A_1}, A_1=A_2\}$:\\
    First, suppose that $\sigma\alpha=\sigma A_1$ and $\lowlty{\sigma\alpha}=S$.\\
    Since $\sigma A_1=\sigma A_2$, $\depcon{\alpha}{S}{A_2}$ is satisfied.\\
    Now suppose that $\lowlty{\sigma\alpha}\neq S$.\\
    Then for all parts $\alpha_1$ of $A_1$, $\sigma\alpha_1=\dyn$.\\
    Since $\sigma\alpha_1=\sigma\alpha_2$, for all parts $\alpha_2$ of $A_2$, $\sigma\alpha_2=\dyn$.\\
    Therefore $\depcon{\alpha}{S}{A_2}$ is satisfied.\\
    Rest is immediate.
  \item[Case]  \ref{cp:osimp:justtag} $\Omega\cup \{\tagcon{\alpha}{S} \}  \steps \Omega\cup\{\alpha=A\}$:\\
    Since $\sigma\alpha=\sigma A$ and $\lowlty{A}=S$, $\lowlty{\sigma\alpha}=S$.
    Rest is immediate.
  \end{description}
\end{proof}

\begin{lemma}\label{cp:lem:apx:defsigma}
  If $\Omega=\{\defcon{\alpha_1}{T_1},\ldots,\defcon{\alpha_n}{T_n}\}$, and for all $i,j \le n$, $\alpha_i\not\in T_j$, then $\sigma=\alpha_1\mapsto T_1,\ldots,\alpha_n\mapsto T_n$ is a solution to $\Omega$.
\end{lemma}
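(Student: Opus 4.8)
The plan is to unfold the definition of ``solution'' and verify each constraint in $\Omega$ separately. Recall that $\sigma$ is a solution to $\Omega$ exactly when it satisfies every constraint the set contains (the reading used implicitly in Lemmas~\ref{cp:lem:apx:solsplit} and~\ref{cp:lem:apx:soljoin}), and that $\sigma$ satisfies a definition constraint $\defcon{\alpha}{T}$ precisely when $\sigma\alpha = \sigma T$ (the same interpretation invoked in case~\ref{cp:osimp:substeq} of Lemma~\ref{cp:lem:apx:simplsol}). So it suffices to fix an arbitrary $i \le n$ and show $\sigma\alpha_i = \sigma T_i$.

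For the left-hand side, since the $\alpha_i$ are the (distinct) domain variables of $\sigma$, the definition of $\sigma$ gives $\sigma\alpha_i = T_i$ directly. For the right-hand side I would establish $\sigma T_i = T_i$. This is where the hypothesis ``$\alpha_i\not\in T_j$ for all $i,j$'' does the work: it guarantees that none of the variables $\alpha_1,\ldots,\alpha_n$ comprising $\mathit{dom}(\sigma)$ occurs anywhere in $T_i$. I would record this as a small auxiliary fact --- if no variable of $\mathit{dom}(\sigma)$ appears in $T$, then $\sigma T = T$ --- proved by a routine induction on the structure of $T$, with the base cases $\dyn$, $\intt$, and variables immediate and the $\to$ and $\reft{\cdot}$ cases following from the induction hypotheses. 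Applying this fact to $T_i$ yields $\sigma T_i = T_i$.

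Combining the two computations gives $\sigma\alpha_i = T_i = \sigma T_i$, so $\sigma$ satisfies $\defcon{\alpha_i}{T_i}$. Since $i$ was arbitrary and $\Omega$ consists only of such definition constraints, $\sigma$ satisfies every constraint of $\Omega$ and is therefore a solution to $\Omega$.

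The main obstacle is mild: it is really just the idempotence observation $\sigma T_i = T_i$, which is the entire purpose of the acyclicity side-condition ``$\alpha_i\not\in T_j$''. Without that condition the substitution would fail to be a fixed point, and a single application of $\sigma$ could leave residual occurrences of some $\alpha_j$ on the right, breaking the equality $\sigma\alpha_i = \sigma T_i$. Everything else is bookkeeping over the definition of satisfaction.
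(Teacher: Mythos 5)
Your proposal is correct and follows essentially the same route as the paper: the paper's one-line proof observes that, because no $\alpha_i$ occurs in any $T_j$, applying $\sigma$ to $\Omega$ yields $\{\defcon{T_1}{T_1},\ldots,\defcon{T_n}{T_n}\}$, which is exactly your two computations $\sigma\alpha_i = T_i$ and $\sigma T_i = T_i$ rolled into one step. Your version merely makes explicit (via the structural-induction auxiliary fact) what the paper leaves implicit, which is a fine, if slightly more verbose, rendering of the same argument.
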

\begin{proof}
  Since $\alpha_i\not\in\mathit{vars}(T_j)$ for any $i,j$, $\sigma\Omega=\{\defcon{T_1}{T_1},\ldots,\defcon{T_n}{T_n}\}$. Therefore $\sigma$ is a solution to $\Omega$.
\end{proof}

\begin{customthm}{\ref{cp:thm:solvesound}}
  If $\Omega\Downarrow\sigma$, then $\sigma$ is a solution to $\Omega$.
\end{customthm}
\begin{proof}
  By induction on $\Omega\Downarrow\sigma$.
  \begin{description}
  \item[Base case.] By Lemma \ref{cp:lem:apx:defsigma}, $\sigma$ is a solution to $\Omega$.
  \item[Simplification case.] By the IH, $\sigma$ is a solution to $\Omega''$, $\sigma$ is a solution to $\Omega'$.\\
    By repeating Lemma \ref{cp:lem:apx:simplsol}, $\sigma$ is a solution to $\Omega$.
  \item[Solving case.] By the IH, $\sigma$ is a solution to $\Omega\cup\{\tagcon{\alpha}{S}\}$.\\
    By Lemma \ref{cp:lem:apx:solsplit}, $\sigma$ is a solution to $\Omega$.
  \end{description}
\end{proof}


\end{document}